\newtheorem{theorem}{Theorem}
\newtheorem{proposition}{Proposition}
\newtheorem{corollary}{Corollary}
\newtheorem{lemma}{Lemma}
\newtheorem{example}{Example}
\newtheorem{definition}{Definition}
\definecolor{Gray}{gray}{0.85}
\newcolumntype{a}{>{\columncolor{Gray}}X}
\def \X {\mathbb{X}}
\def \e {\text{exp}}
\def \R {\mathbb{R}}
\def \i {\textbf{i}}
\def \j {\textbf{j}}
\def \X {\mathbb{X}}
\def \P {\mathbb{P}}
\def \p {\textbf{p}}
\def \q {\textbf{q}}
\def \Part {\mathcal{P}}
\def \E {\text{E}}
\def \e {\text{exp}}
\def \Corr {\mbox{\rm corr}}
\def \R {\mathbb{R}}
\def \i {\textbf{i}}
\def \j {\textbf{j}}
\def \d {\mathrm{d}}
\def \simiid {\overset{\text{i.i.d.}}{\sim}}
\renewcommand{\algocf@captiontext}[2]{#1\algocf@typo. \AlCapFnt{}#2} 
\def\@algocf@capt@plain{top}
\renewcommand{\algocf@makecaption}[2]{%
	\addtolength{\hsize}{\algomargin}%
	\sbox\@tempboxa{\algocf@captiontext{#1}{#2}}%
	\ifdim\wd\@tempboxa >\hsize
	\hskip .5\algomargin%
	\parbox[t]{\hsize}{\algocf@captiontext{#1}{#2}}
	\else%
	\global\@minipagefalse%
	\hbox to\hsize{\box\@tempboxa}
	\fi%
	\addtolength{\hsize}{-\algomargin}%
}
\title{{\huge \textbf{Nonparametric priors with\\ full-range borrowing of information}}}
\author{Filippo Ascolani}
\author{Beatrice Franzolini} 
\author{Antonio Lijoi}
\author{Igor Pr\"unster}
\affil{{ Bocconi Institute for Data Science and Analytics (BIDSA), \\Bocconi University Milano, Italy}}
\date{ }
\begin{document}

\maketitle

\vspace{-60pt}
	\begin{abstract} 
\noindent Modeling of the dependence structure across heterogeneous data is crucial for Bayesian inference since it directly impacts the borrowing of information. Despite the extensive advances over the last two decades, most available proposals allow only for non--negative correlations. We derive a new class of dependent nonparametric priors that can induce correlations of any sign, thus introducing a new and more flexible idea of borrowing of information. This is achieved thanks to a novel concept, which we term \textit{hyper-tie}, and represents a direct and simple measure of dependence. We investigate prior and posterior distributional properties of the model and develop algorithms to perform posterior inference. Illustrative examples on simulated and real data show that our proposal outperforms alternatives in terms of prediction and clustering.\\

\noindent \textit{Keywords:} Bayesian nonparametrics;  Borrowing of information; Completely random measure; Dependent nonparametric prior; Negative correlation; Partial exchangeability.
\end{abstract}

	\section{Introduction}\label{s:introduction}
	Bayesian nonparametric methods are increasingly popular, mainly thanks to their flexibility and strong foundations. The most common assumption underlying Bayesian models is 
	exchangeability, which corresponds to invariance of the joint distribution of the observations with respect to finite permutations. However, real phenomena often present a level of heterogeneity that makes exchangeability  unrealistic: collected data may refer to different 
	features, populations, or, in general, may be collected under different experimental conditions. Such situations entail a significant level of heterogeneity 
	and opportunities for borrowing information, that can be exploited through the notion of partial exchangeability, which implies exchangeability within each experimental condition, but not across.
 Two sequences of observations ${X}=(X_i)_{i\geq1}$ and $Y=(Y_j)_{j\geq1}$, taking values in a space $\mathbb{X}$, are partially exchangeable if and only if, for all sample sizes $(n, m)$ and all permutations $(\pi_1, \pi_2)$,
	\[
	\bigl((X_{i})_{i=1}^{n},(Y_{j})_{j=1}^{m}\bigr)\overset{d}{=} \bigl((X_{\pi_{1}(i)})_{i=1}^{n},(Y_{\pi_{2}(j)})_{j=1}^{m}\bigr).
	\]
	 with $\overset{d}{=}$ denoting equality in distribution. 
	 From an inferential point of view, partial exchangeability entails that the order of the observations within each sample is non-informative, while the belonging to a specific sample is relevant and has to be taken into account. Moreover, by de Finetti's representation theorem \citep{DeFinetti1938} $X$ and $Y$ are partially exchangeable if and only if there exist random probabilities $(\tilde p_1, \tilde p_2)$ 
	 such that for any $i,j=1,\ldots,n$ 
	\begin{equation}\label{partial_model}
		\begin{aligned}
			(X_i,Y_j)\mid(\tilde p_1,\tilde p_2)  \overset{iid}{\sim} \tilde p_1\times\tilde p_2\qquad
			(\tilde p_1,\,\tilde p_2) \sim Q
		\end{aligned}
	\end{equation}
	with $Q$ 
	playing the role of the prior. 
	The dependence induced by $Q$ at the level of the observables defines the Bayesian learning mechanism and it connects to the notion of borrowing of information. This term was first coined by John Tukey \citep{brillinger2002john} and popularized with reference to Stein's paradox and empirical Bayes techniques in \cite{efron1977stein}. More generally, statisticians refer to borrowing of information when many samples contribute to inference related to just one sample. Imagine collecting the samples $(X_{i})_{i=1}^{n}$ and $(Y_{j})_{j=1}^{m}$, while being interested only in the parameter $\tilde p_1$ associated to $X$. The simplest approach could be to disregard the second sample $(Y_{j})_{j=1}^{m}$, with the drawback of losing potentially useful information. The typical borrowing instead consists in shrinking the estimates for different samples towards each other:  {shrinkage is justified by the fact that distributions of different, but related, populations are expected to be similar in terms of shape and/or location. However, many contexts may still require borrowing of information between $(X_{i})_{i=1}^{n}$ and $(Y_{j})_{j=1}^{m}$, but without necessarily resulting in shrinkage.
	Indeed, one's 
	available prior 
	information may imply that the responses in different groups have a negative association 
	and, thus, tend to be dissimilar in location, which makes shrinkage undesirable. Similarly, when 
	there is no pre-experimental knowledge on the 
	dependence between $X_i$ and $Y_j$, a flexible prior specification allowing also for negative association 
	would be more appropriate.
	A toy parametric example to further clarify that borrowing does not necessarily imply classic shrinkage is provided in Section S2.~of the Appendix. Some applied scenarios of borrowing of information not resulting in shrinkage are, for instance, the study of survival times and abundances of competitive species \citep{lee2020spatial}, the incorporation of retrospective data to study associations between biomarkers \citep{gong2021unpaired}, the association between dental caries and dental fluorosis \citep{lorenz2018inferring}, the analysis of stocks and bonds returns \citep[see][and Section~\ref{s:illustration1}]{bhardwaj2013business}, and the clustering of multivariate responses with missing entries (see Section~\ref{application}). In this paper we introduce a class of nonparametric priors that allows for a more general version of borrowing, which includes shrinkage as a special case. These can be used as core building blocks for models tailored to specific applications.}

	\noindent Starting from the pioneering works of \cite{cifarelli1978problemi} and 
	\cite{maceachern1999dependent,maceachern2000dependent}, Bayesian nonparametric contributions for non--exchangeable data have grown substantially, see 
	\cite{foti2013survey}, \cite{muller2015bayesian} and \cite{QuintanaDDP} for insightful reviews. The vast majority of nonparametric models for partially exchangeable data entails that the random probabilities in \eqref{partial_model} are such that
		\begin{equation}\label{species_sampling}
		\begin{cases}
			\tilde p_1\overset{a.s.}{=}\sum_{k\geq1}\bar{J}_k\delta_{\theta_k}\\[6pt]
			\tilde p_2\overset{a.s.}{=}\sum_{k\geq1}\bar{W}_k\delta_{\phi_k}
		\end{cases}
		\quad\quad  
		\theta_k \simiid P_0, \quad \phi_k \simiid P_0
	\end{equation}
	where the random weights $\left( (\bar{J}_k), (\bar{W}_k )\right)$ and the atoms $\left( (\theta_k), (\phi_k)\right)$ are independent and $\theta_k\perp \phi_h$ for $k\neq h$.
	In this paper we focus on this class of models and, for ease of exposition, take $\tilde p_1$ and $\tilde p_2$ with the same marginal distribution. 
	
	A first prominent strategy for defining $Q$ is to explicitly assign the distribution of the weights and the atoms in \eqref{species_sampling} so to create dependence between $\tilde p_1$ and $\tilde p_2$: 
	this approach has led to dependent Dirichlet processes \citep{maceachern1999dependent,maceachern2000dependent,QuintanaDDP}, dependent stick-breaking processes, kernel stick-breaking processes \citep{DunsonStickBreaking}, probit stick-breaking processes \citep{DunsonProbitStickBreaking} and others. Despite their flexibility and the availability of posterior sampling schemes, the derivation of analytical results is very difficult for these models; it is often not clear how the dependence of the series reflects at the level of the observables and therefore such methods may lack transparency.
	
	A second popular strategy, analytically more tractable, relies on completely random measures (CRMs) either working directly on the law of multi-dimensional vectors of CRMs \citep{epifani2010nonparametric,griffin2017compound, RivaPalacio2021} or combining conditionally independent CRMs, using additive structures \citep{muller2004method,griffin2013comparing,lijoi2014class,lijoi2014bayesian,lijoi2014dependent}, nested structures \citep{rodriguez2008nested,camerlenghi2019latent}, or hierarchical structures \citep{teh2006hierarchical,camerlenghi2019distribution}. CRMs are then suitably transformed to obtain the  random probabilities in \eqref{species_sampling}.
	
	Dependent random probabilities clearly induce dependence across groups of observations. The simplest and most intuitive way to quantify the dependence structure is through correlations. Therefore, when considering correlations among observables, we will implicitly assume real-valued $X_i$'s and $Y_j$'s, namely $\mathbb{X}=\mathbb{R}$. All other results and concepts are valid for general spaces $\mathbb{X}$. 
	A first result in this direction shows that, 
	regardless of the specific dependent model, observations in different groups cannot be more correlated (in absolute sense) than the ones in the same group.
	 
	\begin{proposition}\label{corr_between_across} 
		Suppose $X$ and $Y$ are	partially exchangeable sequences, 
		such that $\tilde p_1$ and $\tilde p_2$ in \eqref{partial_model} have the same marginal distribution. Then
		\[
		-\Corr (X_{i},X_{i'}) \leq \Corr (X_{i},Y_{j}) \leq \Corr (X_{i},X_{i'}),
		\]
		for any $i, i'$ and $j$.
	\end{proposition}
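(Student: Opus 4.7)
The plan is to reduce everything to a Cauchy--Schwarz inequality at the level of the random means of $\tilde p_1$ and $\tilde p_2$. By partial exchangeability and the representation \eqref{partial_model}, the observations are conditionally i.i.d.\ given $(\tilde p_1, \tilde p_2)$, with the two groups conditionally independent of each other. So the natural first step is to introduce the random functionals
\[
\mu_k \;=\; \int_{\mathbb{X}} x \, \tilde p_k(\d x), \qquad k=1,2,
\]
and note that $E[X_i \mid \tilde p_1] = \mu_1$ and $E[Y_j \mid \tilde p_2] = \mu_2$. Since $\tilde p_1$ and $\tilde p_2$ are assumed to share the same marginal law, the same is true for $\mu_1$ and $\mu_2$, so that the common (unconditional) mean and variance of any $X_i$ and any $Y_j$ coincide; call them $m$ and $\sigma^2$.

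Next, I would compute the relevant covariances by conditioning. For $i \neq i'$, conditional independence and the tower property give
\[
\mathrm{Cov}(X_i, X_{i'}) \;=\; E[E[X_i X_{i'}\mid \tilde p_1]] - m^2 \;=\; E[\mu_1^2] - m^2 \;=\; \mathrm{Var}(\mu_1),
\]
and similarly, using conditional independence across groups,
\[
\mathrm{Cov}(X_i, Y_j) \;=\; E[\mu_1 \mu_2] - m^2 \;=\; \mathrm{Cov}(\mu_1, \mu_2).
\]
Dividing both by $\sigma^2$ reduces the claim to
\[
|\mathrm{Cov}(\mu_1,\mu_2)| \;\leq\; \mathrm{Var}(\mu_1).
\]

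Finally, I would invoke the Cauchy--Schwarz inequality, which yields $|\mathrm{Cov}(\mu_1,\mu_2)| \leq \sqrt{\mathrm{Var}(\mu_1)\mathrm{Var}(\mu_2)}$, and then use the equal-marginal hypothesis to replace $\mathrm{Var}(\mu_2)$ with $\mathrm{Var}(\mu_1)$. The case $i=i'$ is immediate since $\mathrm{Corr}(X_i,X_i)=1$ and $|\mathrm{Corr}(X_i,Y_j)|\leq 1$ always. I do not anticipate any real obstacle: the main point is simply to recognize that the only dependence felt by the covariance is through the scalar random means, at which level Cauchy--Schwarz is exactly the right tool; one only needs a mild integrability check ($\sigma^2<\infty$) to make the computation meaningful, which is implicit in the statement since it refers to correlations.
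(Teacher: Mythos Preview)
Your argument is correct and essentially identical to the paper's own proof: both reduce the covariances to $\mathrm{Cov}(\mu_1,\mu_2)$ and $\mathrm{Var}(\mu_1)$ via conditioning on $(\tilde p_1,\tilde p_2)$, and then apply Cauchy--Schwarz together with the equal-marginal assumption. Your version is slightly more explicit (handling $i=i'$ and noting the integrability caveat), but there is no substantive difference in approach.
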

	 
	Due to exchangeability within each group, the upper bound in Proposition~\ref{corr_between_across} 
	is always non--negative and it can be shown that, for
	all the models 
	as in \eqref{species_sampling}, the correlation between observations in the same sample, $\Corr (X_{i},X_{i'})$, is 
	determined by the probability of a tie. 
	As for the correlation across samples $\Corr (X_{i},Y_{j})$, we show that 
	a similar result holds true, 
	with \emph{hyper-ties}, the new notion we introduce, replacing ties.   
	
	Moreover, note that for most models based on CRMs, which allow for the computation of the correlation, $\Corr (X_{i},Y_{j})$ turns out to be positive: this happens in particular when the interaction between two or more groups is of interest. Therefore, the literature available to date within the partially exchangeable setting is focused on models that attain a limited range of possible values of the correlation, when it can be evaluated.
	Here we aim to overcome this limitation and introduce a novel class of priors which yield a wider range of correlation values among the observables, including those with negative sign. 
	The next result shows that the sign of the correlation is only determined by the dependence structure between the atoms. 
	 
	\begin{proposition}\label{positive_corr}
	Suppose $X$ and $Y$ are 
	partially exchangeable sequences, such that the underlying $\tilde p_1$ and $\tilde p_2$ are as in \eqref{species_sampling}. Moreover, for any $k$ and $k'$, let $\Corr(\theta_k, \phi_{k'}) \geq 0$. Then $\Corr \,(X_{i},Y_{j}) \geq 0$, for any $i$ and $j$.
	\end{proposition}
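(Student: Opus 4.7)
The plan is to compute $\mathrm{Cov}(X_i,Y_j)$ directly by conditioning on $(\tilde p_1,\tilde p_2)$ and exploiting the atomic independence $\theta_k\perp\phi_h$ for $k\ne h$. Let $\mu=\int x\,P_0(dx)$. By the tower property, the conditional i.i.d.\ structure in \eqref{partial_model} and the series representation \eqref{species_sampling},
\[
E[X_iY_j] \;=\; E\!\left[\sum_{k\ge 1}\bar J_k\theta_k\cdot\sum_{h\ge 1}\bar W_h\phi_h\right] \;=\; \sum_{k,h\ge 1}E[\bar J_k\bar W_h]\,E[\theta_k\phi_h],
\]
where the factorization uses that the weights are independent of the atoms. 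The same argument, together with $\sum_k\bar J_k=\sum_h\bar W_h=1$ a.s., gives $E[X_i]=E[Y_j]=\mu$.

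The key observation is that $E[\theta_k\phi_h]=\mu^2$ whenever $k\ne h$ by the assumed independence, while for $k=h$ one has $E[\theta_k\phi_k]=\mu^2+\mathrm{Cov}(\theta_k,\phi_k)$ with $\mathrm{Cov}(\theta_k,\phi_k)\ge 0$ by hypothesis (the hypothesis for $k\ne k'$ is automatically satisfied, the correlation being $0$). Splitting the double sum into diagonal and off-diagonal parts and using $\sum_{k,h}E[\bar J_k\bar W_h]=E[(\sum_k\bar J_k)(\sum_h\bar W_h)]=1$, the $\mu^2$-contribution collapses to $\mu^2$ and exactly cancels $E[X_i]E[Y_j]$, leaving
\[
\mathrm{Cov}(X_i,Y_j) \;=\; \sum_{k\ge 1}E[\bar J_k\bar W_k]\,\mathrm{Cov}(\theta_k,\phi_k) \;\ge\; 0,
\]
since each summand is a product of two non-negative terms. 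Dividing by the common positive standard deviation of $X_i$ and $Y_j$ (which exists under the implicit assumption that $P_0$ has finite second moment, needed for $\Corr$ to be defined) yields $\Corr(X_i,Y_j)\ge 0$.

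No serious obstacle is anticipated: the proof is essentially a bookkeeping argument. The only technical points are the interchange of expectation and the two infinite sums, which is routine by non-negativity of $\bar J_k\bar W_h$ in the first-moment step and by the finite-second-moment assumption on $P_0$ in the second, and the remark that it is precisely the sum-to-one constraint of the random weights, together with the independence of the atoms across distinct indices, that makes the $\mu^2$ term disappear and isolates the sign-determining diagonal.
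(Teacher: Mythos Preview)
Your argument is correct and follows essentially the same route as the paper: expand the covariance via the series representation, factor using the independence of weights and atoms, and exploit the sum-to-one constraint on the weights to make the $\mu^2$ contribution vanish. The only cosmetic difference is that the paper bounds each cross term $\mathrm{cov}(\bar J_k\theta_k,\bar W_h\phi_h)\ge \mu^2\,\mathrm{cov}(\bar J_k,\bar W_h)$ using the hypothesis for \emph{all} pairs $(k,h)$ and then sums, whereas you use the off-diagonal independence $\theta_k\perp\phi_h$ explicitly to obtain the exact identity $\mathrm{Cov}(X_i,Y_j)=\sum_{k}E[\bar J_k\bar W_k]\,\mathrm{Cov}(\theta_k,\phi_k)$---which is precisely the formula the paper derives later in the proof of Proposition~\ref{corr}.
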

	 
	For instance, hierarchical processes \citep{teh2006hierarchical,camerlenghi2019distribution}, which represent one of the most popular dependent models, induce dependence by the sharing of atoms across groups. However, by Proposition \ref{positive_corr}, this means that achieving negative correlation is impossible.
	Hence, a flexible joint distribution for the sequence of atoms must be specified. This task is accomplished by our proposal, termed normalized CRMs with Full-Range Borrowing of Information (n-FuRBI), that allows to attain any possible value for the correlation specified in Proposition \ref{corr_between_across}. Moreover, it encompasses many previous 
	constructions as special cases. We will show that it nicely combines the flexibility of the random series construction with the analytical tractability featured by CRMs.
	Our proposal allows to consider any interesting choice of borrowing of information: independence, classical shrinkage, but also repulsion of estimates for different samples, generating what we term \textit{full--range borrowing of information}. Note that the repulsive behaviour of n-FuRBI is different from the one featured by the
	priors introduced in \cite{petralia2012repulsive} and \cite{QuintanaRepulsive}, that induce repulsion among the atoms of a single random probability measure.
	
    The appendix includes all the analytical derivations and proofs, the simulation algorithms for the implementation of the proposed class of models, additional examples and numerical studies. In the following we use the prefix S to indicate sections of the Appendix. The code to allow full replication of the numerical results is available at \url{https://github.com/beatricefranzolini/FuRBI}.  
	
	\section{General results on dependent processes}\label{general_dep}
	
The vast majority of dependent processes introduced in the literature are almost surely discrete and therefore admit a series representation as in \eqref{species_sampling}. A key preliminary step leading to the definition of hyper-tie and n-FuRBI priors is the observation that the random probabilities in \eqref{species_sampling} can be embedded into
		\begin{equation}\label{series_normCRV}
		\begin{cases}
			p_1\overset{a.s.}{=}\sum_{k\geq1}\bar{J}_k\delta_{(\theta_k,\phi_k)}\\
			p_2\overset{a.s.}{=}\sum_{k\geq1}\bar{W}_k\delta_{(\theta_k,\phi_k)}
		\end{cases} \quad (\theta_k, \phi_k) \overset{i.i.d.}{\sim} G_0,
	\end{equation}
with $G_0$ a probability distribution on $\X \times \X$, whose marginals equal $P_0$. While $p_1$ and $p_2$ share the same atoms, the weights and the atoms are independent and 
the pair of random probability measures $\tilde p_1$ and $\tilde p_2$ in \eqref{species_sampling}  are obtained as the projections over different coordinates of $p_1$ and $p_2$, namely $\tilde p_1 (\cdot) = p_1(\cdot \times \X)$ and $\tilde p_2 (\cdot) = p_2(\X \times \cdot)$. The structure of popular models is recovered by letting either $G_0=P_0^2$, which corresponds to independence, or $G_0(\d \theta,\d \phi)=P_0(\d \theta)\delta_{\{\theta\}}(\d \phi)$, that is  $\theta_k=\phi_k$ for any $k$ as happens for, e.g., hierarchical processes \citep[see][]{camerlenghi2019distribution}.
Almost sure discreteness implies that a sample from the random probability measure $\tilde p_1$ (or $\tilde p_2$) 
will display ties with positive probability. The probability of a tie, i.e. a coincidence of any two observations $i$ and $j$ in the same sample, is 
	\begin{equation}
	 \label{eq:beta}
	\beta :=\P(X_i=X_j)=\sum_{k \geq 1} \E(\bar{J}^2_k) = \sum_{k \geq 1} \E(\bar{W}^2_k)=\P(Y_i=Y_j)
	\end{equation}
with $(\bar{J}_k)_{k\ge 1}$ and $(\bar{W}_k)_{k\ge 1}$ equal in distribution since we are assuming, for simplicity, that $\tilde p_1$ and $\tilde p_2$ are equal in distribution. 
When considering jointly the two samples, the concept of tie can be replaced by the one of \emph{hyper-tie}, that is two observations in different samples
 coinciding with components having the same label. According to \eqref{partial_model}, its probability is
	\begin{equation}
		\label{eq:gamma}
	\gamma:= \sum_{k\ge 1}\P(X_i=\theta_k,\,Y_j=\phi_k)=\sum_{k \geq 1} \E(\bar{J}_k\bar{W}_k).
	\end{equation}
Sampling 
from components with the same label is equivalent to sampling the same atom at the level of the underlying $(p_1, p_2)$ in \eqref{series_normCRV}. 
Clearly, when the atoms are shared between $\tilde p_1$ and $\tilde p_2$, i.e. $G_0(\d \theta,\d \phi)=P_0(\d \theta)\delta_{\{\theta\}}(\d \phi)$, a hyper-tie corresponds to an actual tie between observations in different samples.

The next result shows the relationship between $\beta$ and $\gamma$, the probabilities of a tie and hyper-tie, respectively: in particular, the probability of a tie is always larger and equality is attained if and only if the probability masses of $p_1$ and $p_2$ are almost surely equal.  
  
\begin{proposition}\label{beta_gamma}
		Let $(\tilde p_1,\tilde p_2)$ be as in \eqref{species_sampling} and $\beta, \gamma$ as in \eqref{eq:beta} and \eqref{eq:gamma}, respectively. Then $0 \leq \gamma \leq \beta$ and $\beta = \gamma$ if and only if $\bar{W}_k \overset{a.s.}{=} \bar{J}_k$ for any $k$.
\end{proposition}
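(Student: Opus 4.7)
The plan is to reduce the inequality $\gamma\le \beta$ to a termwise elementary inequality on the random weights and then to read off the equality case as the vanishing of a sum of squares. Non-negativity of $\gamma$ is immediate because the weights $\bar J_k$ and $\bar W_k$ are non-negative (they are the normalized jumps of the underlying random probabilities), hence $\E(\bar J_k\bar W_k)\ge 0$ for every $k$.

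For the upper bound, I would apply the pointwise AM--GM inequality
\[
\bar J_k\bar W_k \;\le\; \tfrac{1}{2}\bigl(\bar J_k^{\,2}+\bar W_k^{\,2}\bigr)\qquad \text{a.s.},
\]
take expectations, sum over $k$ (Tonelli, since everything is non-negative), and use the assumption that $\tilde p_1$ and $\tilde p_2$ share the same marginal distribution, so that $\sum_{k\ge 1}\E(\bar J_k^{\,2})=\sum_{k\ge 1}\E(\bar W_k^{\,2})=\beta$. This yields $\gamma\le \tfrac{1}{2}(\beta+\beta)=\beta$.

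For the characterization of equality, the ``if'' direction is immediate: when $\bar W_k\overset{a.s.}{=}\bar J_k$ for every $k$, the definition of $\gamma$ collapses to that of $\beta$. For the ``only if'' direction, I would rewrite the gap as
\[
\beta-\gamma \;=\; \sum_{k\ge 1}\E\!\left[\tfrac{1}{2}\bigl(\bar J_k-\bar W_k\bigr)^{2}\right],
\]
which follows by combining $\beta=\tfrac{1}{2}\sum_k \E(\bar J_k^{\,2}+\bar W_k^{\,2})$ with the definition of $\gamma$. Since each summand is non-negative, $\beta=\gamma$ forces $\E[(\bar J_k-\bar W_k)^2]=0$ for every $k$, hence $\bar J_k\overset{a.s.}{=}\bar W_k$.

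The argument is essentially a one-line AM--GM plus a sum-of-squares identity, so I do not anticipate a serious obstacle. The only point requiring care is making explicit that the indexing of the $\bar J_k$ and $\bar W_k$ must be consistent (i.e.\ the pairing $(\bar J_k,\bar W_k)$ is fixed by the joint law in \eqref{series_normCRV}), so that the termwise comparison is legitimate; otherwise one would be comparing weights attached to unrelated atoms and the equality case would be stated only up to a permutation. Once this bookkeeping is clear, interchanging expectation and sum is justified by non-negativity via Tonelli's theorem.
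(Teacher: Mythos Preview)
Your proof is correct and follows essentially the same structure as the paper's: a termwise elementary inequality on $\E(\bar J_k\bar W_k)$ versus second moments, exploiting the equal marginal law assumption, with the equality case read off from when the termwise inequality is tight. The only cosmetic difference is that the paper invokes Cauchy--Schwarz $\E(\bar J_k\bar W_k)\le \sqrt{\E(\bar J_k^2)\E(\bar W_k^2)}=\E(\bar J_k^2)$ termwise, whereas you use the pointwise AM--GM and the sum-of-squares identity $\beta-\gamma=\tfrac12\sum_k\E[(\bar J_k-\bar W_k)^2]$; your treatment of the equality case is arguably cleaner, since it bypasses the linear-dependence characterization of Cauchy--Schwarz equality.
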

 
	Hyper-ties play a crucial role in determining the dependence between observables across groups, as the ties do for the dependence between observables within groups, as shown by the next proposition.

	\begin{proposition}\label{corr}
		Consider model \eqref{partial_model} with $(\tilde p_1, \tilde p_2)$ as in \eqref{species_sampling}. 
		Then, for any $i\neq i'$ and any $j \neq j'$
		\[
		\Corr(X_i, X_{i'}) = \Corr(Y_j, Y_{j'}) = \beta \qquad \qquad \Corr(X_i, Y_j) = \gamma \, \rho_0 
		\]
		with $\rho_0$ the correlation between two random variables jointly sampled from $G_0$.
	\end{proposition}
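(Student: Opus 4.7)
The plan is to compute each correlation by conditioning on the underlying random probability measures and expanding the resulting series representation. Throughout I write $\mu := \E(\theta_k)$ and $\sigma^2 := \text{Var}(\theta_k)$; since the marginals of $G_0$ both equal $P_0$, the $\phi_k$'s share the same mean and variance, and all moments below are well defined under the standing assumption of a finite second moment for $\theta_k$.

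For the within-group correlation, fix $i \neq i'$ and condition on $\tilde p_1$, so that $X_i$ and $X_{i'}$ are conditionally i.i.d.\ with distribution $\tilde p_1$; this gives $\E(X_i X_{i'} \mid \tilde p_1) = \bigl(\sum_k \bar{J}_k \theta_k\bigr)^2$. Expanding the square and splitting into diagonal ($k=h$) and off-diagonal ($k \neq h$) contributions, the independence between weights and atoms, together with the independence of the atoms at distinct indices, yields
\[
\E(X_i X_{i'}) \;=\; (\sigma^2 + \mu^2)\sum_k \E(\bar{J}_k^{\,2}) \;+\; \mu^2 \sum_{k \neq h} \E(\bar{J}_k \bar{J}_h).
\]
The recombination $\sum_k \E(\bar{J}_k^{\,2}) + \sum_{k \neq h} \E(\bar{J}_k \bar{J}_h) = \E\bigl[(\sum_k \bar{J}_k)^2\bigr] = 1$, which uses the a.s.\ normalization of the weights, collapses this to $\mu^2 + \sigma^2 \beta$. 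A parallel conditional argument gives $\E(X_i) = \mu$ and $\text{Var}(X_i) = \sigma^2$, so $\Corr(X_i, X_{i'}) = \beta$. The identical computation for the $Y_j$'s returns $\beta$, since $(\bar{W}_k) \overset{d}{=} (\bar{J}_k)$ by our standing assumption.

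For the across-group correlation I condition on $(\tilde p_1, \tilde p_2)$, so that $X_i$ and $Y_j$ are conditionally independent, giving $\E(X_i Y_j) = \E\bigl[(\sum_k \bar{J}_k \theta_k)(\sum_h \bar{W}_h \phi_h)\bigr]$. Again splitting into the diagonal $k=h$ and the off-diagonal $k \neq h$, on the diagonal the pair $(\theta_k, \phi_k)$ is distributed as $G_0$, so $\E(\theta_k \phi_k) = \mu^2 + \sigma^2 \rho_0$; off the diagonal, $\theta_k$ and $\phi_h$ are independent and contribute only $\mu^2 \E(\bar{J}_k \bar{W}_h)$. The same recombination trick, now via $\E\bigl[(\sum_k \bar{J}_k)(\sum_h \bar{W}_h)\bigr] = 1$, produces $\E(X_i Y_j) = \mu^2 + \sigma^2 \rho_0 \gamma$, so that $\text{Cov}(X_i, Y_j) = \sigma^2 \rho_0 \gamma$ and hence $\Corr(X_i, Y_j) = \gamma \rho_0$.

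The only genuinely delicate point is the interchange of expectation and infinite summation; this is justified by dominated convergence after passing to absolute values, using the bound $\sum_k \bar{J}_k = \sum_k \bar{W}_k = 1$ a.s.\ together with the finite second moment of $\theta_k$ and the Cauchy--Schwarz inequality for the cross term $\bar{J}_k \bar{W}_h \theta_k \phi_h$. Once this interchange is available, the rest of the argument is bookkeeping, with the almost-sure normalization of the weights providing the clean cancellation between diagonal and off-diagonal contributions that isolates $\beta$ and $\gamma$ in the final identities.
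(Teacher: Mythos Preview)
Your proof is correct and follows essentially the same route as the paper's: both expand the bilinear form in the series representation, split into diagonal and off-diagonal index pairs using independence of weights and atoms, and collapse the off-diagonal piece via the almost-sure normalization $\sum_k \bar J_k = \sum_k \bar W_k = 1$. Your version is slightly more explicit (working with $\E(X_i X_{i'})$ rather than jumping to covariances, and adding the dominated-convergence justification for interchanging sum and expectation), but the substance is identical.
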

	 
Thus, while the correlation between observations in the same sample equals the probability of a tie, the correlation between observations from different samples is determined by the probability of a hyper-tie, corrected by the correlation between atoms. Clearly a suitable choice of the joint distribution of the atoms makes the latter negative. 
Thus, by choosing $G_0$ appropriately, for instance as a bivariate normal, it is easy to tune the correlation according to the available prior knowledge. The following Corollary shows the values that can be attained, once the marginal law is specified.
 
	\begin{corollary}\label{values_corr}
		Consider model \eqref{partial_model} with  $(\tilde p_1, \tilde p_2)$ as in \eqref{species_sampling}. 
		If the marginal distribution of $\tilde p_1$ and $\tilde p_2$ is fixed, then
		$\Corr(X_i, Y_j) \in [-\beta, \beta]$
		and the extreme values are attained if and only if the jumps are equal and $\rho_0 = \pm 1$.
	\end{corollary}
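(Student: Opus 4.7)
The plan is to combine the two previous propositions directly. By Proposition~\ref{corr} we have $\Corr(X_i,Y_j)=\gamma\,\rho_0$, so the task reduces to bounding the product $\gamma\,\rho_0$ in terms of $\beta$. Since $\rho_0$ is itself a correlation coefficient (between the two coordinates of a draw from $G_0$), it automatically satisfies $\rho_0\in[-1,1]$, and by Proposition~\ref{beta_gamma} we have $0\le\gamma\le\beta$. Multiplying the two bounds yields $|\Corr(X_i,Y_j)|=\gamma|\rho_0|\le\beta$, which establishes the inclusion $\Corr(X_i,Y_j)\in[-\beta,\beta]$.

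For the characterization of the extreme values, I would argue that $\gamma\rho_0=\pm\beta$ forces $\gamma|\rho_0|=\beta$. Since both $\gamma\le\beta$ and $|\rho_0|\le 1$, the only way the product can reach $\beta$ is if \emph{both} inequalities are saturated, i.e.\ $\gamma=\beta$ and $|\rho_0|=1$. The first equality, by Proposition~\ref{beta_gamma}, is equivalent to $\bar W_k\overset{a.s.}{=}\bar J_k$ for every $k$; the second is equivalent to $\rho_0=\pm 1$, with the sign matching that of the desired extreme. Conversely, if jumps are equal and $\rho_0=\pm 1$, the identity $\Corr(X_i,Y_j)=\gamma\rho_0=\beta\rho_0$ immediately gives $\pm\beta$.

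One subtlety worth being explicit about is that the marginal law of $\tilde p_1$ (equivalently of $\tilde p_2$) is being held fixed, so $\beta=\sum_k\E(\bar J_k^2)=\sum_k\E(\bar W_k^2)$ is a constant of the problem and the quantity being varied is only the \emph{joint} law of the weights and of the atoms. This makes both $\gamma=\sum_k\E(\bar J_k\bar W_k)$ and $\rho_0$ genuinely free parameters (subject to their natural constraints), so the extreme values $\pm\beta$ are in fact attainable: pick any coupling with $\bar W_k=\bar J_k$ almost surely, together with a $G_0$ concentrated on the graph of an increasing or decreasing affine bijection of the marginal $P_0$ (when $P_0$ has finite variance), giving $\rho_0=\pm 1$. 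I do not anticipate a main obstacle here: the result is essentially an algebraic consequence of Propositions~\ref{corr} and~\ref{beta_gamma}, and the only care required is in the ``only if'' direction, to rule out the degenerate cases $\gamma=0$ or $\rho_0=0$ when $\beta>0$, which is immediate because then the product vanishes and cannot equal $\pm\beta$.
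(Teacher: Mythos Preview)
Your argument is correct and matches the paper's own approach, which simply states that the result follows immediately from Propositions~\ref{beta_gamma} and~\ref{corr}. You have merely spelled out the details of combining $\Corr(X_i,Y_j)=\gamma\rho_0$ with $0\le\gamma\le\beta$ and $|\rho_0|\le 1$, together with the equality case of Proposition~\ref{beta_gamma}.
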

	 
	Unsurprisingly, with equal weights and jumps, which corresponds to full exchangeability, one achieves the extreme case of $\Corr(X_i, Y_j) =\beta$.
	Null correlation, instead, is attained when atoms are uncorrelated or when the probability of hyper-ties is zero. Lastly, maximum negative correlation $\Corr(X_i, Y_j) = -\beta$ 
	is attained with equal weights and 	negatively correlated atoms 
	and can be thought of as the opposite case with respect to exchangeability, at least in terms of correlation. Ties and hyper-ties play a similar role also in the predictive structure, as the next result shows.
	 
	\begin{proposition}\label{first_pair}
		Consider model \eqref{partial_model} with $(\tilde p_1, \tilde p_2)$ as in \eqref{species_sampling}. 
		Then 
		\[
		\P \left(X_1 \in A, X_2 \in B \right) = \beta P_0(A \cap B) + \left(1-\beta \right)P_0(A) P_0(B).
		\]
		and
		\[
		\P \left(X_1 \in A, Y_1 \in B \right) = \gamma G_0(A \times B) + \left(1-\gamma \right)P_0(A) P_0(B).
		\]
	\end{proposition}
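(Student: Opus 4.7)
The plan is to compute each joint probability by conditioning on the random probability measures, expanding through the series representation, and then separating diagonal from off-diagonal contributions. The constraint that $\sum_k \bar J_k = \sum_k \bar W_k = 1$ almost surely will handle the off-diagonal sums for free.

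For the first identity, I would start from
\[
\P(X_1 \in A, X_2 \in B) = \E[\tilde p_1(A)\, \tilde p_1(B)],
\]
which follows from model \eqref{partial_model}. Substituting $\tilde p_1 = \sum_k \bar J_k \delta_{\theta_k}$ and exploiting independence of the weights from the atoms together with $\theta_k \simiid P_0$, the diagonal terms $k=k'$ contribute $\sum_k \E[\bar J_k^2]\, P_0(A\cap B) = \beta\, P_0(A\cap B)$, while the off-diagonal terms $k\neq k'$ contribute $\bigl(\sum_{k\neq k'}\E[\bar J_k \bar J_{k'}]\bigr) P_0(A) P_0(B)$. The key algebraic observation is that squaring $\sum_k \bar J_k = 1$ and taking expectations gives $\sum_{k\neq k'}\E[\bar J_k \bar J_{k'}] = 1 - \beta$, which yields the stated formula.

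For the second identity, I would use the embedding \eqref{series_normCRV} to write
\[
\P(X_1 \in A, Y_1 \in B) = \E[\tilde p_1(A)\, \tilde p_2(B)] = \E\!\left[\sum_{k,k'} \bar J_k \bar W_{k'}\, \mathbbm{1}_A(\theta_k)\, \mathbbm{1}_B(\phi_{k'})\right].
\]
The crucial point here is that on the diagonal $k=k'$ one must use $G_0$ rather than $P_0^2$, since $(\theta_k,\phi_k)\sim G_0$, producing $\sum_k \E[\bar J_k \bar W_k]\, G_0(A\times B) = \gamma\, G_0(A\times B)$. Off the diagonal, $\theta_k \perp \phi_{k'}$ and both are marginally $P_0$, so those terms give $\bigl(\sum_{k\neq k'}\E[\bar J_k \bar W_{k'}]\bigr) P_0(A) P_0(B)$. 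Exploiting that $\bigl(\sum_k \bar J_k\bigr)\bigl(\sum_{k'} \bar W_{k'}\bigr) = 1$ almost surely gives $\sum_{k\neq k'}\E[\bar J_k \bar W_{k'}] = 1 - \gamma$, completing the derivation.

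The computations involve interchanging sums and expectations, but this is justified at once by Tonelli since every term is non-negative. The only conceptual care needed is in the second identity: one must remember that within a pair of indices $k=k'$, the atoms $(\theta_k,\phi_k)$ are jointly distributed according to $G_0$, not independently, which is precisely the mechanism by which $G_0$ enters the predictive rule for observations across the two samples.
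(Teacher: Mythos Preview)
Your proof is correct. The paper's argument is organized differently: it conditions on the event of a tie (respectively, a hyper-tie) and then invokes the independence of the atoms to identify the conditional probabilities as $P_0(A\cap B)$ and $P_0(A)P_0(B)$ (respectively, $G_0(A\times B)$ and $P_0(A)P_0(B)$). Your direct series expansion with the diagonal/off-diagonal split is the computational counterpart of that conditioning: the diagonal $k=k'$ is precisely the tie/hyper-tie event, and the off-diagonal is its complement. The two routes are therefore close in spirit, but your version is more self-contained: it makes explicit, via the identity $\sum_{k\neq k'}\E[\bar J_k\bar W_{k'}]=1-\gamma$ coming from $\sum_k\bar J_k=\sum_k\bar W_k=1$, why the off-diagonal contribution factors as $P_0(A)P_0(B)$, whereas the paper's phrasing leaves that step somewhat implicit.
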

	 
	The result is indeed quite intuitive. If $X_1$ and $Y_1$ form a hyper-tie (with probability $\gamma$) they come from the same pair of atoms and need to be sampled jointly; otherwise they refer to different atoms and are sampled independently. The same happens inside each group, where $X_1$ and $X_2$ are equal with probability $\beta$.
	 
	\begin{example}
		\label{HDPexample}
		{\rm
		The hierarchical Dirichlet process \citep{teh2006hierarchical} is characterized by the hierarchical representation
		$\tilde p_i \mid \tilde p_0 \overset{\text{i.i.d.}}{\sim} \text{DP}(\theta, \tilde p_0)$, with $\tilde p_0 \sim  \text{DP}(\theta_0, P_0)$, where $P_0$ is a diffuse measure and DP$(\alpha, H)$ denotes the law of a Dirichlet process with concentration parameter $\alpha > 0$ and baseline distribution $H$. Since the $\tilde p_i$'s share the atoms, an hyper-tie corresponds to an actual tie between observations in different samples, so that with simple computations we get 
		\[
		\beta =  \Corr(X_i, X_j) = 1-\frac{\theta\theta_0}{(1+\theta)(1+\theta_0)},
	\qquad
		 \gamma = \Corr(X_i, Y_j) = \frac{1}{1+\theta_0}.
		\]
		Thus, the correlation among the observables is forced to be positive, with $\theta_0$ tuning the dependence; see Example $1$ in \cite{camerlenghi2019distribution} for more details.
	}
	\end{example}
	 
	Given the above results and considerations, it should be clear that $\gamma$ defined in \eqref{eq:gamma}  is crucial for tuning the level of dependence. However, closed form expressions of $\gamma$ are available only for a few cases 
	and, in fact, we are facing a trade--off: on the one hand we have dependent processes based on the stick-breaking representation, that allow for high flexibility while sacrificing the availability of analytical results; on the other hand we have constructions based on CRMs, for which an extensive theory has been developed, though they are not as effective for 
	tuning the dependence, since all the existing instances produce non-negative correlation across samples. In the following we combine the best of both approaches through n-FuRBI: they are flexible processes that can attain any value for the correlation between the observables, while at the same time a posterior representation can be derived. Their construction is based on CRMs and completely random vectors, reviewed in the next section. 
	
		\section{
		Some basics on completely random measures}\label{s:crm}
	As shown in \cite{lijoi2010models}, many Bayesian nonparametric models can be obtained as suitable transformations of CRMs; among others, these include the Dirichlet process, the Pitman-Yor process and the neutral-to-the-right priors. 
	The extension of CRMs to the bivariate setting is provided by \textit{completely random vectors} $\mu=(\mu_1,\mu_2)$, whose components take values in the space of boundedly finite measures on $\mathbb{X}$ and are such that, for every collection of pairwise disjoint sets $(A_i)_{i\geq1}^{n}$, the random vectors 
		$(\mu_1(A_1), \mu_2(A_1)), \ldots,(\mu_1(A_n),\mu_2(A_n))$
		are mutually independent. 
	We focus on the case of no fixed atoms and no deterministic component, so that the marginal CRMs $\mu_1$ and $\mu_2$ are almost surely discrete and can be written as sum of {$\X$--valued} random atoms with random weights, i.e.
	\[
	\mu_1 \overset{a.s.}{=}\sum_{i\geq1}J_i\delta_{\kappa_i}, \quad \mu_2 \overset{a.s.}{=} \sum_{i\geq1}W_i\delta_{\kappa_i}.
	\]
	{In the following section it will be convenient to use the reparametrization $\kappa_i=(\theta_i,\phi_i) \in \X=\X_1 \times \X_2$.}
	Such completely random vectors are characterized by the L\'evy-Khintchine representation
	\begin{equation}\label{Levy_bivariate}
		\mathbb{E}\left\{e^{-\mu_1(f_1)-\mu_2(f_2)}\right\} = 
		\exp\left[-\int\limits_{\R^2_+ \times \X} \{1-e^{-s_1\,f_1(x)-s_2\,f_2(x)}\}\, v(\d s_1, \d s_2, \d x)\right]
	\end{equation}
where $\mu_i(f_i)=\int_{\mathbb{X}}f_i(x)\mu_i(\d x)$ for $\mathbb{R}^+$-valued $f_i$ and
 $v(\d s_1,\d s_2,\d x)$ is the joint L\'evy intensity. We shall focus on the homogeneous case, in which jumps $(J_j)_{j\geq1}$ and locations $(X_j)_{j\geq1}$ are independent. In terms of L\'evy intensity it reads
$v(\d s_1,\d s_2,\d x)= \rho(\d s_1, \d s_2) \alpha(\d x)$
for some finite measure $\alpha$ on $\X$ and measure $\rho$. 
Moreover, in the sequel we will also need the joint and marginal Laplace exponents given by
	\begin{align*}
	\psi_b(\lambda_1, \lambda_2) :=& \int\limits_{\mathbb{R}^2_+\times\mathbb{X}}(1-e^{-\lambda_1 s_1-\lambda_2 s_2}) \rho(\d s_1, \d s_2) \alpha(\d x), \quad \lambda_1 > 0, \lambda_2 > 0.\\
	\psi(\lambda) :=& \int\limits_{\mathbb{R}_+\times\mathbb{X}}(1-e^{-\lambda s}) \rho(\d s) \alpha(\d x) \quad \lambda > 0,
	\end{align*}
 
	For an exhaustive account on CRMs, we refer to \cite{kingman1967completely,kingmanpoisson}.
	Completely random vectors and CRMs are often normalized to obtain random probability measures, as introduced in \cite{regazzini2003distributional}, i.e.
$p(\cdot) = {\mu(\cdot)}/{\mu(\X)}$.
Notice that in principle any random measure $\mu$ such that $\mathbb{P}(0 < \mu(\X) < \infty)=1$ can be normalized in order to define a random probability measure. However, the strength of completely random vectors and measures lies in their L\'evy--Khintchine representations and unique correspondence with the associated L\'evy intensity, which 
	allow a high degree of analytical tractability. CRMs and the corresponding normalized probabilities have been extensively studied to model exchangeable data \citep[see, for instance,][]{james2006conjugacy,james2009posterior,james2010posterior,lijoi2010models,favaro2016stick,camerlenghi2018bayesian}. Similarly, a completely random vector can be used to model dependence between two groups. For more details on completely random vectors and an interesting account of their dependence structure, we refer to \cite{catalano2021measuring,catalano2023}. Since the two measures in the vector share all the 
	atoms, by virtue of Proposition~\ref{positive_corr} 
	the induced 
	model 
	yields non--negative correlation between samples. The issue is addressed in the next section, by means of a novel class of random probability measures that leverage the dependence structure specifed for the atoms.
	
	\section{Full-range borrowing of information nonparametric prior}\label{s:prior}
	\subsection{Definition and first properties}
	In this section we introduce n-FuRBI and 
	for simplicity we still consider only the case of two samples with the same a priori marginal distribution.
	 
	\begin{definition}\label{def:aCRM}
		{\rm 
		Consider a completely random vector $(\mu_1,\mu_2)$ on $\X^2$ with L\'evy intensity
		\[
		v(\d s_1, \d s_2, \d x_1, \d x_2) = \rho(\d s_1, \d s_2)\: \alpha(\d x_1, \d x_2),
		\]
		where $\alpha(\d x_1, \d x_2) = \theta G_0(\d x_1, \d x_2)$, where $\theta =\alpha(\X^2)\in (0,+\infty)$, and $G_0$ is a non-atomic probability measure on $\X^2$ such that $G_0(\cdot \times \X) = G_0(\X \times \cdot) = P_0(\cdot)$. Then $\tilde \mu_1$ and $\tilde \mu_2$  defined as
		\begin{equation*}
			\tilde\mu_1(\cdot)=\mu_1(\mathbb{X}\times\cdot) \qquad \tilde\mu_2(\cdot)=\mu_2(\cdot\times\mathbb{X})
		\end{equation*}
		are CRMs with \textit{Full-Range Borrowing of Information} (FuRBI CRMs) 
		and underlying L\'evy intensity $v$. The normalized versions $\tilde p_j(\cdot) = \tilde \mu_j(\cdot)/\tilde\mu_j(\mathbb{X})$ for $j=1,2$ are said \textit{normalized CRMs with Full-Range Borrowing of Information} (n-FuRBI).
	}
	\end{definition}
	 
	Essentially, first a pair of random measures endowed with the same locations is constructed on the product space $\X^2$; 
	as a second step, the coordinates of each pair of atoms are split. Thus, the n-FuRBI admit a representation as in \eqref{species_sampling} and \eqref{series_normCRV}. In general FuRBI CRMs are not completely random vectors, because the joint sampling of the atoms forbids the independence 
	of the vector evaluated on pairwise disjoint sets. 
	However, the representation in terms of a completely random vector in the product space is useful to characterize the joint law of the FuRBI CRMs, as shown in the following proposition.
	 
	\begin{proposition}
		Let $(\tilde \mu_1, \tilde \mu_2)$ be a vector of FuRBI CRMs. Then
		\begin{itemize}
			\item[{\rm (i)}] 
			$\tilde \mu_1$ and $\tilde \mu_2$ are CRMs with intensity $\rho(\d s)\theta P_0(\d x)$, where $\rho(\d s) = \int _{\R_+}\rho(\d s_1, \d s)$.
			\item[{\rm (ii)}] For any $A$ and $B$, the following equality holds
			\begin{multline*}
				\begin{aligned}
					\E \bigl[ \mathrm{e}^{ -\lambda_1\tilde{\mu}_1(A) -\lambda_2\tilde{\mu}_2(B)} \bigr] =
					\exp\{-G_0(A \times B^c)\psi(\lambda_1)-G_0(A^c \times B)\psi(\lambda_2)\}\\
					\times\:\exp\{ -G_0(A \times B){\psi_b}(\lambda_1, \lambda_2) \},
				\end{aligned}
			\end{multline*}
where $\psi$ denotes the common marginal Laplace exponent and $\psi_b$ the joint Laplace exponent of $(\mu_1, \mu_2)$.
			\item[{\rm (iii)}] The joint law of $(\tilde \mu_1,\tilde \mu_2)$ is characterized by the joint L\'evy intensity of  $(\mu_1, \mu_2)$.
		\end{itemize}
	\end{proposition}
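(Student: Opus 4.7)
The plan is to extract all three parts from the joint Lévy--Khintchine representation \eqref{Levy_bivariate} of $(\mu_1,\mu_2)$ on $\X^2$, by evaluating the Laplace functional on test functions tailored to the two projections $\tilde\mu_1=\mu_1(\X\times\cdot)$ and $\tilde\mu_2=\mu_2(\cdot\times\X)$.

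For part (i), I would first recall that $\mu_1$ alone is a CRM on $\X^2$, whose Lévy intensity is obtained from $v$ by integrating out the second jump coordinate, namely $\rho(\d s)\,\theta\,G_0(\d x_1,\d x_2)$ with $\rho$ the appropriate marginal of the bivariate jump law (the symmetry of the two marginals of the jump law is guaranteed by the assumption that $\tilde p_1$ and $\tilde p_2$ have the same marginal distribution). The projection $\tilde\mu_1$ is the pushforward under the map $\pi_2\colon (x_1,x_2)\mapsto x_2$: this preserves the CRM property because preimages under $\pi_2$ of pairwise disjoint sets in $\X$ are pairwise disjoint rectangles in $\X^2$. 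Plugging $g(x_1,x_2)=f(x_2)$ into the marginal Laplace functional of $\mu_1$ then collapses the $G_0$-integral to its second marginal, which by assumption equals $P_0$, giving the stated intensity $\rho(\d s)\,\theta\, P_0(\d x)$. The same argument with $\pi_1$ in place of $\pi_2$ treats $\tilde\mu_2$.

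For part (ii), I would set $g_1(x_1,x_2)=\lambda_1\mathbbm{1}_{\X\times A}(x_1,x_2)=\lambda_1\mathbbm{1}_A(x_2)$ and $g_2(x_1,x_2)=\lambda_2\mathbbm{1}_{B\times\X}(x_1,x_2)=\lambda_2\mathbbm{1}_B(x_1)$ in the joint Laplace formula \eqref{Levy_bivariate}. I would then split $\X^2$ along the disjoint union $(B\times A)\cup(B\times A^c)\cup(B^c\times A)\cup(B^c\times A^c)$. On $B^c\times A^c$ the integrand vanishes; on $B\times A$ both exponents are active and the $s$-integration reproduces $\psi_b(\lambda_1,\lambda_2)$ multiplied by $G_0(B\times A)$; on the two mixed rectangles only one of the two $\lambda_j s_j$ terms survives, and after integrating out the other jump coordinate against $\rho$ the contribution is $\psi(\lambda_j)$ times the corresponding $G_0$-mass. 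Summing the three non-trivial contributions yields the stated formula (up to the labelling convention for the two coordinates of $G_0$).

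For part (iii), I would note that the computation in (ii) extends by linearity and monotone convergence to non-negative simple and then to arbitrary non-negative measurable test functions, so that the joint Laplace functional $\E[\exp(-\tilde\mu_1(f_1)-\tilde\mu_2(f_2))]$ is fully determined by $\rho$, $G_0$ and $\theta$, hence by the joint Lévy intensity $v$. Since the joint Laplace functional characterises the law of a pair of random measures on $\X$, the claim follows. The main bookkeeping obstacle is the four-way partition in (ii): one must verify carefully that integration against $\rho$ on the two mixed rectangles really returns the marginal Laplace exponent $\psi(\lambda_j)$ rather than the joint $\psi_b$, which is where the same-marginal assumption on the jump law enters; everything else is a routine application of Lévy--Khintchine.
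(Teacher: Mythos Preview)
Your proposal is correct and follows essentially the same route as the paper: for (ii) both you and the authors split $\X^2$ into the four rectangles determined by $A$ and $B$ and identify the three non-trivial contributions as $\psi_b$ on the common rectangle and $\psi$ on the two mixed ones, with the only cosmetic difference that the paper first factorises the expectation via the complete randomness of $(\mu_1,\mu_2)$ on disjoint sets and then recognises each factor, whereas you carry out the same decomposition directly inside the L\'evy--Khintchine integral. Parts (i) and (iii) are handled identically in both treatments, and your parenthetical remark about the coordinate labelling of $G_0$ correctly flags the (harmless) convention mismatch between Definition~\ref{def:aCRM} and the proof in the appendix.
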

	 
	The next proposition shows that the $\beta$ and $\gamma$ associated to any couple of n-FuRBI can be computed through their Laplace exponents.
	 
	\begin{proposition}\label{prob_tie}
		Consider $(\tilde p_1, \tilde p_2)$ n-FuRBI. Then the probability of a tie and of a hyper-tie are respectively
		\[
		\beta = -\int_{\R_+}u\left\{\frac{\d^2}{\d u^2} \psi(u) \right\}e^{-\psi(u)}\, \d u,
		\quad  \gamma = -\int_{\R^2_+}\left\{\frac{\partial^2}{\partial u_1\partial u_2} \psi_b(u_1, u_2) \right\}e^{-\psi_b(u_1, u_2)}\, \d u_1 \d u_2.
		\]
	\end{proposition}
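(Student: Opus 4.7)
The plan is to express the totals $T_1 := \tilde\mu_1(\X)$ and $T_2 := \tilde\mu_2(\X)$ appearing in the normalized jumps $\bar J_k = J_k/T_1$ and $\bar W_k = W_k/T_2$ via the gamma-type identities
\[
\frac{1}{T_1^2} = \int_0^\infty u\, e^{-uT_1}\,\d u, \qquad \frac{1}{T_1 T_2} = \int_0^\infty\!\int_0^\infty e^{-u_1 T_1 - u_2 T_2}\,\d u_1\,\d u_2.
\]
Starting from $\beta = \E\!\left[\sum_k J_k^2/T_1^2\right]$ and $\gamma = \E\!\left[\sum_k J_k W_k / (T_1 T_2)\right]$, Tonelli's theorem lets me pull the auxiliary integrals outside the expectation, reducing the task to evaluating the mixed Laplace-type functionals $\E[\sum_k J_k^2 e^{-uT_1}]$ and $\E[\sum_k J_k W_k e^{-u_1 T_1 - u_2 T_2}]$.

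To compute these moments I would invoke the Campbell--Mecke (Palm) formula for the Poisson point process underlying the completely random vector $(\mu_1,\mu_2)$ on $\R^2_+\times \X^2$ with intensity $\rho(\d s_1,\d s_2)\,\alpha(\d x_1,\d x_2)$. For a non-negative test function it gives
\[
\E\!\left[\sum_k f(J_k, W_k)\, e^{-u_1 T_1 - u_2 T_2}\right] = \left(\int f(s_1,s_2)\, e^{-u_1 s_1 - u_2 s_2}\,\rho(\d s_1,\d s_2)\,\alpha(\X^2)\right) e^{-\psi_b(u_1, u_2)},
\]
since the Palm atom contributes the multiplicative factor $e^{-u_1 s_1 - u_2 s_2}$ while the residual process keeps the usual Laplace exponent. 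Taking $f(s_1,s_2)=s_1 s_2$ and recognizing, through differentiation under the integral in the definition of $\psi_b$, that $\iint s_1 s_2 e^{-u_1 s_1-u_2 s_2}\rho(\d s_1,\d s_2)\alpha(\X^2) = -\partial^2 \psi_b/\partial u_1 \partial u_2$ yields the claimed expression for $\gamma$. The analogous one-dimensional calculation with $f(s)=s^2$ under the marginal intensity $\rho(\d s)\,\theta P_0(\d x)$ (provided by part (i) of the preceding proposition) gives $\E[\sum_k J_k^2 e^{-uT_1}] = -\psi''(u)\,e^{-\psi(u)}$, whence the formula for $\beta$.

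The only real technical point is justifying the Tonelli interchanges and the differentiation under the integral sign. Non-negativity of all integrands together with the hypothesis $\theta = \alpha(\X^2) < \infty$ ensure that $T_1, T_2 \in (0,\infty)$ almost surely and that both $\psi$ and $\psi_b$ are smooth on $(0,\infty)$ and $(0,\infty)^2$ respectively; these suffice to legitimize each exchange. Beyond this standard verification, no further analytical obstacle arises: the proof reduces to the deterministic Laplace-exponent calculus outlined above.
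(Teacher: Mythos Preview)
Your argument is correct and follows a genuinely different route from the paper. You start from the definitions $\beta=\E\bigl[\sum_k J_k^2/T_1^2\bigr]$ and $\gamma=\E\bigl[\sum_k J_kW_k/(T_1T_2)\bigr]$, introduce the gamma integrals, and then invoke the Slivnyak--Mecke (Palm) formula for the underlying Poisson process to evaluate the resulting mixed Laplace moments in one stroke. The paper instead computes $\P(X\in A,\,Y\in B)=\E[\tilde p_1(A)\tilde p_2(B)]$ by decomposing $A\times\X$ and $\X\times B$ into disjoint rectangles, exploiting the independence of $(\mu_1,\mu_2)$ on disjoint sets, differentiating Laplace transforms, and then applying an integration-by-parts lemma; comparison with the mixture representation of Proposition~\ref{first_pair} identifies the coefficient of $G_0(A\times B)$ as $\gamma$. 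Your approach is shorter and conceptually cleaner, but it presupposes the Palm/Mecke machinery; the paper's route is more self-contained (only the L\'evy--Khintchine representation and elementary calculus are used) and, as a byproduct, delivers the explicit bivariate predictive formula \eqref{simple_pred}, which is reused in the proof of Proposition~\ref{t:correlation}. The derivation of $\beta$ also differs slightly: you repeat the one-dimensional Palm argument, whereas the paper obtains it from the $\gamma$-formula by specializing to equal jumps, i.e.\ $\psi_b(u_1,u_2)=\psi(u_1+u_2)$, together with the change of variables $u=u_1+u_2$.
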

	 
	Thus, the crucial value of $\gamma$ can be obtained by computing, analytically or numerically, a bivariate integral. The two results above show a recurrent trait of our approach: interesting quantities will be usually rewritten in terms of the original completely random vector, in order to exploit its analytical tractability. We conclude this section with two examples of FuRBI CRMs, that also show how some existing constructions can be obtained as special cases.
	 
	\begin{example}[FuRBI CRMs with equal jumps]{\rm 
		Let
		$\rho(\d s_1)\delta_{s_1}(\d s_2) \, \theta\, G_0(\d x_1,\d x_2)$
		be the underlying L\'evy intensity. The series representation of the corresponding FuRBI CRMs is
		\[
		\tilde \mu_1(\cdot) \overset{a.s.}{=} \sum_{k\geq1}W_k\delta_{\theta_k}\qquad\qquad 
		\tilde \mu_2(\cdot) \overset{a.s.}{=} \sum_{k\geq1}W_k\delta_{\phi_k}\qquad \qquad \text{with}\enskip (\theta_k,\phi_k) \overset{i.i.d}{\sim}G_0.
		\]
		Therefore, $\gamma=\beta$, so that a tie and a hyper--tie are observed with the same 
		probability. 
		}
	\end{example}

	\begin{example}[Extended Compound FuRBI CRMs]\label{compound}
		{\rm Consider the 
		L\'evy intensity
		\[
		v(\d s_1, \d s_2, \d x_1, \d x_2) =\int z^{-2}h(s_1/z,\,s_2/z) \, \d s_1 \d s_2v^*(\d z)\, \theta\, G_0(\d x_1,\d x_2),
		\]
		where $h$ is some density 
		and $v^*$ is a L\'evy intensity that satisfies
		\[
		\int z^{-2}\int \min \{1, ||{s}|| \}h(s_1/z,\,s_2/z) \, \d s_1 \d s_2 v^*(\d z) < \infty, \quad ||{s}|| = \sqrt{s_1^2+s_2^2}.
		\]
		The series representation of the corresponding FuRBI CRMs is 
		\[
		\tilde \mu_1(\cdot) \overset{a.s.}{=} \sum_{k\geq1}m_{1,k}W_k\delta_{\theta_k}\qquad\qquad 
		\tilde \mu_2(\cdot) \overset{a.s.}{=} \sum_{k\geq1}m_{2,k}W_k\delta_{\phi_k}
		\]
		where $(\theta_k,\phi_k) \overset{i.i.d}{\sim}G_0$ and $(m_{1,k},m_{2,k})\overset{iid}{\sim}h$. When $G_0$ is degenerate on the main diagonal, one retrieves the class of compound random measures introduced by \cite{griffin2017compound}. }
	\end{example}

	\subsection{Correlation structure between n-FuRBI}
	In order to analyze the dependence between the marginal n-FuRBI priors $\tilde p_1$ and $\tilde p_2$, it is useful to compute the correlation of the random probability measures evaluated on the same set $A$. In all the existing CRM-based models such a correlation does not depend on the specific set considered and, hence, it is often used as a global measure of dependence. The next proposition provides the covariance structure between two n-FuRBI.
	 
	\begin{proposition}\label{t:correlation}
		Let $\tilde p_1$ and $\tilde p_2$ be n-FuRBI. Then for any $A, B$, such that $0\leq P_0(A)\leq1$ and $0\leq P_0(B)\leq1$, we have
		$\mbox{\rm cov}(\tilde p_1 (A) , \tilde p_2 (B) ) = \gamma \left[G_0 (A \times B) - P_0(A)P_0(B) \right]$
		and
		\[
		\Corr(\tilde p_1 (A) , \tilde p_2 (B) ) = \frac{\gamma}{\beta} \frac{G_0 (A \times B) - P_0(A)P_0(B)}{\sqrt{P_0(A)(1-P_0(A))P_0(B)(1-P_0(B))}}.
		\]
	\end{proposition}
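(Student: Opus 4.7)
The plan is to recognize that both the covariance and variance reduce, via the conditional-iid sampling in \eqref{partial_model}, to predictive probabilities already delivered by Proposition~\ref{first_pair}. Indeed, since $(X_1,Y_1)\mid (\tilde p_1,\tilde p_2)\sim \tilde p_1\times\tilde p_2$, the tower property yields
\[
\mathbb{E}[\tilde p_1(A)\,\tilde p_2(B)] \;=\; \mathbb{E}\bigl[\mathbb{P}(X_1\in A,\,Y_1\in B\mid\tilde p_1,\tilde p_2)\bigr] \;=\; \mathbb{P}(X_1\in A,\,Y_1\in B),
\]
and likewise $\mathbb{E}[\tilde p_j(C)^2] = \mathbb{P}(X_1\in C,\,X_2\in C)$, while marginalisation gives $\mathbb{E}[\tilde p_j(A)] = P_0(A)$. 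This turns the computation into bookkeeping around two applications of Proposition~\ref{first_pair}.

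For the covariance, I would substitute the second identity of Proposition~\ref{first_pair} into the display above and subtract $\mathbb{E}[\tilde p_1(A)]\,\mathbb{E}[\tilde p_2(B)] = P_0(A)P_0(B)$, obtaining
\[
\mathrm{cov}\bigl(\tilde p_1(A),\,\tilde p_2(B)\bigr) \;=\; \gamma\,G_0(A\times B) + (1-\gamma)P_0(A)P_0(B) - P_0(A)P_0(B) \;=\; \gamma\bigl[G_0(A\times B)-P_0(A)P_0(B)\bigr],
\]
which is exactly the claimed formula.

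For the correlation denominator, I would apply the first identity of Proposition~\ref{first_pair} with $A=B=C$ inside a single sample to get $\mathbb{E}[\tilde p_j(C)^2] = \beta\,P_0(C) + (1-\beta)P_0(C)^2$, hence $\mathrm{Var}(\tilde p_j(C)) = \beta\,P_0(C)(1-P_0(C))$. Dividing the covariance by $\sqrt{\mathrm{Var}(\tilde p_1(A))\,\mathrm{Var}(\tilde p_2(B))}$ produces the stated correlation expression. The main obstacle here is not analytical — Proposition~\ref{first_pair} has already done the heavy lifting through the hyper-tie decomposition — but rather conceptual: seeing that this apparently technical covariance of two normalised CRMs is simply the predictive probability of Proposition~\ref{first_pair} in disguise. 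The only genuine caveat is that the correlation expression presupposes $P_0(A),P_0(B)\in(0,1)$ strictly, so that the denominator is non-zero; at the endpoints $\tilde p_j(\cdot)$ is a.s.~constant and the correlation is undefined.
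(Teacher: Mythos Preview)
Your proposal is correct and follows essentially the same route as the paper's proof: both start from $\mathbb{E}[\tilde p_1(A)\tilde p_2(B)]=\mathbb{P}(X_1\in A,\,Y_1\in B)$, then plug in the hyper-tie decomposition of that probability, and obtain the variance as the diagonal special case with $\beta$ in place of $\gamma$. The only cosmetic difference is that the paper cites the intermediate identity \eqref{simple_pred} derived via the CRM machinery in the proof of Proposition~\ref{prob_tie}, whereas you invoke Proposition~\ref{first_pair} directly; since n-FuRBI are of the form \eqref{species_sampling}, this is entirely legitimate and arguably cleaner.
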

	 
	By setting $A = B$, from the previous results one immediately deduces that
	$\mbox{\rm cov}(\tilde p_1 (A) , \tilde p_2 (A) ) = \gamma \left[G_0 (A \times A) - P_0(A)^2 \right]$ and
	\[
	\Corr(\tilde p_1 (A) , \tilde p_2 (A) ) = \frac{\gamma}{\beta} \frac{G_0 (A \times A) - P_0(A)^2}{P_0(A)(1-P_0(A))}.
	\]
	Unlike what usually happens with existing models, here the correlation can be negative, when $A$ is such that $G_0 (A \times A) < P_0(A)^2$, that is when $G_0$ exhibits a repulsive behaviour between the coordinates in $\X^2$. Moreover, the correlation depends on the specific set on which the two measures are evaluated and, therefore, it has to be interpreted as a local measure of dependence. See Section S3.~for an illustration of this phenomenon on sets of the form $(-\infty, x)$. 
	 
	\begin{example}[n-FuRBI with equal jumps]
		{\rm 
		In this case, Proposition~\ref{beta_gamma} 
		entails $\beta=\gamma$. Therefore
		\begin{equation*}
			\Corr \left( \tilde{p}_1(A),\tilde{p}_2(A) \right) = \frac{G_0(A \times A) - P_0(A)^2}{P_0(A) (1-P_0(A))}.
		\end{equation*}
		Moreover, still by virtue of Proposition \ref{beta_gamma}, for a given $G_0$ this is the highest possible correlation in absolute value.
	}
	\end{example}
	 

Proposition~\ref{corr} then provides the correlation between the observables, which is even more important from a modeling perspective. 
 
	\begin{example}[Gamma n-FuRBI with equal jumps]\label{example_equal_jumps}
	{\rm 
		If the common marginal is the law of a Dirichlet process, 
	then
		$\Corr(X_i, Y_j) = \rho_0/(1+\theta)$.
		Choosing appropriately $\rho_0$ and $\theta$ the entire range $(-1,1)$ becomes available.
	}
	\end{example}
	 Note that hyper-ties allow to perform a more general type of borrowing, compared to ties, even when the correlation is positive. While ties are a useful construction to model multiple samples that share certain values/latent parameters, hyper-ties can borrow information even when the two samples have no common values/latent parameter. This aspect will play a crucial role in the data-analyses of Sections~\ref{s:illustration1} and \ref{application}; for these the assumption of common values would be highly unrealistic.
	
	\section{Inference}\label{s:posterior}
	\subsection{Posterior Characterization}
	Having provided an exhaustive description of the a priori properties of n-FuRBI, the following key step is to provide a tractable posterior characterization. Conjugacy is out of question here: even in the exchangeable context it is a property 
	characterizing the Dirichlet process \citep[see][]{james2006conjugacy}. Nevertheless, conditional on a set of suitable latent variables, the posterior distribution of the original completely random vector $(\mu_1, \mu_2)$ turns out to be again a completely random vector leading to a neat posterior characterization and viable methods for sampling. 
	
	Consider a sample of $n$ observations $(X_{i})_{i=1}^{n}$ from $\tilde{p}_1$ with unique values $\underline X_{n}^* = \left( X^*_{1}, \dots, X^*_{k} \right)$ and associated multiplicities $(n_1, \dots, n_k)$; analogously, consider $m$ observations $(Y_{j})_{j=1}^{m}$ from $\tilde{p}_2$ with unique values $\underline{Y}_{m}^* = \left( Y^*_{1}, \dots, Y^*_{c} \right)$ and multiplicities $(m_1, \dots, m_c)$. 
	While it is immediate to check for ties, 
	hyper-ties cannot be 
	identified from the data. To this end, we define a latent random element ${p}$ encoding the hyper-ties, such that ${p} = \left\{ (i_l, j_l) \right\}_l$, where $(i,j)$, with $1 \leq i \leq k$ and $1 \leq j \leq c$, denotes a hyper-tie between $X_i^*$ and $Y_j^*$. Moreover $(i, 0)$, with $1 \leq i \leq k$, denotes that $X_i^*$ does not form a hyper-tie with any value in $\underline{Y}_{m}^*$ and $(0, j)$, with $1 \leq j \leq c$, denotes that $Y_j^*$ does not form an hyper-tie with any value in $\underline{X}_{n}^*$.
	
Therefore, if $(i,j) \in {p}$ with $i \neq 0$ and $j \neq 0$, it means that $X_i^*$ and $Y_j^*$ come from the same pair of atoms in representation \eqref{series_normCRV}. Instead, $(i,0) \in {p}$ implies that $X_i^*$ is the only value associated to a specific pair, and similarly for $Y_j^*$ if $(0,j) \in {p}$. Since we are working with unique values, it is clear that each $X_i^*$ and $Y_j^*$ can form at most one hyper-tie, i.e. it is associated to a unique member of ${p}$. This justifies the following formal definition.
	 
\begin{definition}{\rm 
		We say that ${p} = \left\{ (i_l, j_l) \right\}_l$ is a \textit{compatible hyper-ties structure} for $(X_{i})_{i=1}^{n}$ and $(Y_{j})_{j=1}^{m}$ if, firstly, for any $1 \leq i \leq k$, there exists exactly one $i_l$ such that $i_l = i$, thus each element of $\underline{X}_{n}^*$ forms at most one hyper-tie; secondly, for any $1 \leq j \leq c$, there exists exactly one $j_l$ such that $j_l = j$, thus each element of $\underline{Y}_{m}^*$ forms at most one hyper-tie; lastly, for any $l$, if $i_l = 0$ then $j_l \neq 0$, thus at least one coordinate refers to an element of $\underline{X}_{n}^*$ or $\underline{Y}_{m}^*$.
			}
	\end{definition}
	 
	As a simple example, suppose that $\underline{X}_{n}$ and $\underline{Y}_{m}$ contain respectively $2$ and $1$ unique values. Then $k = 2$, $c = 1$ and the support of ${p}$ is
	\[
	\Part = \biggl\{ \{(1,1), (2,0) \}, \{(1,0), (2,1) \}, \{(1,0), (2,0), (0,1) \}  \biggr\}.
	\]
	Once the latent structure ${p}$ is 	identified, its elements can be conveniently partitioned into the set $\Delta_{{p}} = \left\{ (i,j) \in {p} \mid i \neq 0 \text{ and } j \neq 0 \right\}$, which includes all the hyper-ties, and the sets
	$\Delta^1_{{p}} = \left\{ (i,j) \in {p} \mid j = 0 \right\}$ and $\Delta^2_{{p}} = \left\{ (i,j) \in {p} \mid i = 0 \right\}$.
	If $X_i^*$ and $Y_j^*$ form a hyper-tie, it means that $(X_i^*, Y_j^*)$ is an actual atom in representation \eqref{series_normCRV}. Instead, if $X_i^*$ does not form a hyper-tie, we have a partial knowledge of the original pair: the unknown second coordinate can be sampled from $P_{X_i^*}(\cdot)$, that is the conditional distribution given $X_i^*$, induced by the joint measure $G_0$, which will henceforth be assumed to be non--atomic. A similar argument applies if $Y_j^*$ does not form a hyper-tie.
	
	In order to simplify notation, we set
	$g_{i,j} = g_0(X_i^*, Y_j^*)$, $g_{i,0} = p_0(X_i^*)$, and $g_{0,j} = p_0(Y_j^*)$,
	where $g_0$ and $p_0$ are the density functions of $G_0$ and $P_0$ respectively, that we assume 
	exist with respect to suitable dominating measures. Finally, we consider the following integrals
	\[
	\tau_{n,m}(\underline{u}) = \int_{\R_+^2}e^{-u_1s_1-u_2s_2}s_1^ns_2^m \, \rho (\d s_1, \d s_2), \quad \underline{u} = (u_1, u_2),
	\]
	where often $n$ and $m$ will be equal to $n_i$ and $m_j$, with $1 \leq i \leq k$ and $1 \leq j \leq c$. For consistency, we set $n_0 = m_0 = 0$. 
	
	The key result of the section relies on a latent structure that is identified by random variables whose conditional distributions, given  $(X_{i})_{i=1}^{n}$ and $(Y_{j})_{j=1}^{m}$, are available. Indeed, these random variables are given by ${p}$, whose probability mass function is proportional to
		\[
		\left(\prod_{(i,j) \in p}g_{i,j}\right)\int_{\R_+^2}u_1^{n-1}u_2^{m-1}\prod_{(i,j) \in {p}}\tau_{n_i, m_j}(\underline{u})\,\mathrm{e}^{-\psi_b(\underline{u})}\, \d \underline{u},
		\]
	the vector $(U_1, U_2)$, whose density on $\R_+^2$ is proportional to
	$u_1^{n-1}u_2^{m-1}\prod_{(i,j) \in {p}}\tau_{n_i, m_j}(\underline{u})e^{-\psi_b({u})}$,
	the variables $\{Z_i^x\}_i$, whose distribution is $P_{X_i^*}(\cdot)$, for any $i = 1, \dots k$, and
	$\{Z_j^y\}_j$, whose distribution is $P_{Y_j^*}(\cdot)$, for any $j = 1, \dots, c$. 	We are now ready to state the key posterior characterization.
	
	\bigskip
	 
	\begin{theorem}\label{main}
		Let$(X_{i})_{i=1}^{n}$ and $(Y_{j})_{j=1}^{m}$ be from model \eqref{partial_model}, with $Q$ being the law of a n-FuRBI. Then, the distribution of $(\mu_1, \mu_2)$ conditional on $(X_{i})_{i=1}^{n}$, $(Y_{j})_{j=1}^{m}$ and the set of latent variables $( p, U_1, U_2,$ $ \{Z_i^x\}_i, \{Z_j^y\}_j)$ is
		\[
		(\hat{\mu}_1, \hat{\mu}_2) + \sum_{(i,j) \in \Delta_{{p}}}{J}_{i,j}\delta_{\left( X_i^*, Y_j^* \right)}+ \sum_{(i,j) \in \Delta^1_{{p}}}{J}_{i,0}\delta_{\left( X_i^*, Z_i^x \right)}+\sum_{(i,j) \in \Delta^2_{{p}}}{J}_{0,j}\delta_{\left( Z_j^y, Y_j^* \right)},
		\]
		where $(\hat{\mu}_1, \hat{\mu}_2)$ is a completely random vector with intensity $e^{-U_1s_1-U_2s_2}\rho(\d s_1, \d s_2)G_0(\d x)$ and ${J}_{i,j} = (J^1_{i,j}, J^2_{i,j})$, with $i=0,\ldots,k$ e $j=0,\ldots,c$, are jumps with density proportional to
		\[
		s_1^{n_i}s_2^{m_j}e^{-U_1s_1-U_2s_2}\rho(\d s_1, \d s_2).
		\]
		Moreover $(\hat{\mu}_1, \hat{\mu}_2)$ and ${J}_{i,j}$ are independent.
	\end{theorem}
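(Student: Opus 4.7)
The strategy is to lift the problem to the underlying completely random vector $(\mu_1,\mu_2)$ on $\X^2$, exploiting the fact that $\tilde p_1, \tilde p_2$ are the normalized projections of its coordinates. This turns the posterior analysis into a conditioning problem for a Poisson random measure on $\R_+^2 \times \X^2$, where the standard Palm calculus / Lévy disintegration can be applied. The first step is the classical gamma augmentation: writing $1/\tilde\mu_1(\X)^n \tilde\mu_2(\X)^m$ as an integral over auxiliary variables $(U_1,U_2)$ weighted by $u_1^{n-1} u_2^{m-1} e^{-u_1 \tilde\mu_1(\X) - u_2 \tilde\mu_2(\X)}$, so that the normalizers turn into exponential tilts acting on the Lévy intensity.

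Next I would unfold the joint density of the observations and latent variables through the series representation of $(\mu_1,\mu_2)$. Each unique value $X_i^*$ (resp.\ $Y_j^*$) pins down the first (resp.\ second) coordinate of some atom of the CRV; the unobserved coordinate is either another observed value (a hyper-tie) or an unobserved location that must be integrated against the conditional $G_0$. Introducing the latent $p$ enumerates all compatible ways of matching atoms to observations, and the latent $Z_i^x, Z_j^y$ stand for the unobserved coordinates of atoms tagged to a single sample. Conditionally on a fixed $p$, the contribution of each atom to the likelihood factorizes as $s_1^{n_i} s_2^{m_j}\, g_{i,j}$ evaluated at the relevant location pair, where $m_j = 0$ for $(i,0) \in \Delta^1_p$ and $n_i = 0$ for $(0,j) \in \Delta^2_p$.

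The key step is then to apply the Palm formula for the Poisson random measure on $\R_+^2 \times \X^2$ driving $(\mu_1,\mu_2)$: conditioning on a finite family of fixed atoms reproduces a Poisson process with the same intensity plus the prescribed fixed atoms, independent of each other. After the exponential tilt coming from $(U_1,U_2)$, the leftover Poisson component has intensity $e^{-u_1 s_1 - u_2 s_2}\rho(\d s_1,\d s_2) G_0(\d x)$, which is exactly the $(\hat\mu_1,\hat\mu_2)$ of the statement. The fixed atoms fall in three groups—$\Delta_p, \Delta^1_p, \Delta^2_p$—with jumps whose posterior law is proportional to $s_1^{n_i} s_2^{m_j} e^{-u_1 s_1 - u_2 s_2}\rho(\d s_1,\d s_2)$, since the exponent $n_i$ or $m_j$ comes from the multiplicities of the observations tied to that atom and vanishes for the missing coordinate. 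Integrating these Lévy weights over the jumps produces the function $\tau_{n_i,m_j}(\underline u)$, and collecting the location factors yields $\prod_{(i,j)\in p} g_{i,j}$; normalizing in $p$ and $(U_1,U_2)$ then gives the stated conditional laws.

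The main obstacle is the combinatorial bookkeeping: a careful application of the Palm formula requires summing over the set of compatible matchings $p$ while tracking which observations are attached to a common atom, and one must check that the integration over the Poisson process of atoms with both coordinates unobserved collapses neatly into the exponential of $-\psi_b(\underline u)$ so that the conditional distributions of $p$, $(U_1,U_2)$ and the jumps factorize as stated. Handling the $Z_i^x, Z_j^y$ variables correctly—namely, recognizing that after integrating against the marginal constraint $P_0(\d X_i^*) = G_0(\d X_i^* \times \X)$ the residual law of the unobserved coordinate is precisely $P_{X_i^*}$—is the other delicate point, and follows from the disintegration of $G_0$ along its first marginal.
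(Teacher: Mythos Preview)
Your proposal is correct and arrives at the same posterior structure, but the route differs from the paper's in a meaningful way. The paper does not invoke Palm calculus or Poisson partition machinery; instead it computes the conditional Laplace functional directly by conditioning on shrinking balls $A_{i,\epsilon}$, $B_{j,\epsilon}$ around the unique values, expands $\mu_1(A_i\times\X)^{n_i}\mu_2(\X\times B_j)^{m_j}$ via the multinomial theorem over the grid $\{A_h\times B_r\}$, applies the gamma augmentation you describe, and then carries out an infinitesimal analysis as $\epsilon\to 0$. Three technical lemmas (Lemmas~S4--S6) identify which multi-indices $(\bm i,\bm j)$ survive in the limit, and the surviving ones are shown to be in bijection with compatible hyper-tie structures $p\in\mathcal P$; the posterior Laplace functional is then read off term by term.

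Your Palm-formula approach packages the same content more abstractly: the exponential tilt, the fixed-atom component, and the independence all fall out of the reduced Palm distribution of the driving Poisson measure, and the sum over matchings $p$ appears as the sum over ordered tuples of distinct atoms in the Palm expansion. What the paper's hands-on computation buys is an explicit demonstration of \emph{why} only compatible structures survive---namely, that cross terms are of higher infinitesimal order in $\epsilon$ because $G_0$ is non-atomic---whereas in your approach this is implicit in the a.s.\ distinctness of Poisson atoms together with the projection/disintegration step for the unobserved coordinate. Both are rigorous; yours is shorter if one takes the Poisson partition calculus as given, while the paper's is self-contained and makes the emergence of the latent $p$ completely transparent.
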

	 
	Conditional on the latent variables, the structure is 
	quite intuitive: the posterior is the law of a completely random vector with modified intensity and fixed locations, given by the pairs formed by the hyper-ties. 
	This is somehow reminiscent of the posterior structures of exchangeable models \citep{james2009posterior, lijoi2010models}, with the key novelty played by the new notion of hyper-ties, in addition to the identification of a suitable latent structure.   
	
	The distribution of the latent variables 
	admits a nice interpretation. For instance, the mass function of the latent structure ${p}$ is the product of two terms: the probability of observing the number of hyper-ties identified by ${p}$ times the likelihood that exactly those pairs are formed, through the density function $g_0$.
	Thus, thanks to the homogeneity of the original completely random vector, we observe a separate effect for jumps and locations on this hidden clustering structure. The next corollary shows how the posterior distribution of the normalized measures can be deduced from Theorem~\ref{main}. The statement focuses on $p_1$, though an  analogous representation holds also for $p_2$.
	
	\bigskip
	 
	\begin{corollary}\label{post_p}
		Under the same assumptions of {\rm Theorem~\ref{main}}, conditional on $(X_{i})_{i=1}^{n}$, $(Y_{j})_{j=1}^{m}$ and the 
		latent variables $( p, U_1, U_2, \{Z_i^x\}_i, \{Z_j^y\}_j)$, the random probability measure $p_1$ in \eqref{series_normCRV} equals in distribution 
		\[
		\begin{aligned}
			w_1\frac{\hat{\mu}_1}{T_1} &+w_2\frac{\sum_{(i,j) \in \Delta_{{p}}}J^1_{i,j}\delta_{\left( X_i^*, Y_j^* \right)}}{\sum_{(i,j) \in \Delta_{{p}}}J^1_{i,j}}+ w_3\frac{\sum_{(i,j) \in \Delta^1_{{p}}}J^1_{i,0}\delta_{\left( X_i^*, Z_i^x \right)}}{\sum_{(i,j) \in \Delta^1_{{p}}}J^1_{i,0}}+w_4\frac{\sum_{(i,j) \in \Delta^2_{{p}}}J^1_{0,j}\delta_{\left( Z_j^y, Y_j^* \right)}}{\sum_{(i,j) \in \Delta^2_{{p}}}J^1_{0,j}},
		\end{aligned}
		\]
		where $T_1 = \hat{\mu}_1(\X \times \X)$, while
		\[
		w_1 \propto T_1, \quad w_2 \propto \sum_{(i,j) \in \Delta_{{p}}}J^1_{i,j}, \quad w_3 \propto \sum_{(i,j) \in \Delta^1_{{p}}}J^1_{i,0}, \quad w_4 \propto \sum_{(i,j) \in \Delta^2_{{p}}}J^1_{0,j},
		\]
		with the constraint $\sum_{i = 1}^4w_i = 1$. 
	\end{corollary}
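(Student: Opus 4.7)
The plan is to derive the decomposition directly from Theorem~\ref{main} by normalizing the first coordinate $\mu_1$ of the posterior random vector. The identity $p_1 = \mu_1/\mu_1(\X \times \X)$, inherited from \eqref{series_normCRV}, makes this a completely mechanical operation once the representation of $\mu_1$ is in hand.

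First, I would read off from Theorem~\ref{main} the conditional representation
\[
\mu_1 \overset{d}{=} \hat{\mu}_1 + \sum_{(i,j) \in \Delta_{{p}}} J^1_{i,j}\,\delta_{(X_i^*, Y_j^*)} + \sum_{(i,j) \in \Delta^1_{{p}}} J^1_{i,0}\,\delta_{(X_i^*, Z_i^x)} + \sum_{(i,j) \in \Delta^2_{{p}}} J^1_{0,j}\,\delta_{(Z_j^y, Y_j^*)},
\]
conditional on the data and on the latent variables, with $\hat\mu_1$ and the jumps mutually independent. Evaluating this measure on the whole space yields the total mass
\[
Z := \mu_1(\X \times \X) = T_1 + \sum_{(i,j) \in \Delta_{{p}}} J^1_{i,j} + \sum_{(i,j) \in \Delta^1_{{p}}} J^1_{i,0} + \sum_{(i,j) \in \Delta^2_{{p}}} J^1_{0,j},
\]
which decomposes as a sum of four nonnegative contributions, one per summand of $\mu_1$.

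Dividing $\mu_1$ by $Z$ and grouping each summand with its own partial mass, each of the four pieces factors as (partial mass$/Z$) times (the probability measure obtained by normalizing that piece alone). Identifying these four ratios with $w_1,\dots,w_4$ and the normalized pieces with the probability measures displayed in the statement yields exactly the claimed representation, while the constraint $\sum_i w_i = 1$ follows automatically from the fact that all four weights share the common denominator $Z$.

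The only subtle point is to verify that the supports of the four summands are almost surely pairwise disjoint, so that each normalized piece is an unambiguous probability measure and the expression is an honest convex combination rather than a mere sum. This is ensured by the non-atomicity of $G_0$ assumed in Section~\ref{s:posterior}, which makes the auxiliary variables $Z_i^x$ and $Z_j^y$ (drawn from its conditionals) diffuse, together with the fact that $\hat\mu_1$ inherits a diffuse intensity from Theorem~\ref{main} and hence places no mass on the finite set of fixed atoms appearing in the three sums. Beyond this orthogonality check, the argument is a routine rearrangement that preserves the independence structure of Theorem~\ref{main}.
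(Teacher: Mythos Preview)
Your argument is correct and matches the paper's approach: both reduce the corollary to the observation that $p_1$ is the deterministic normalization $\mu_1/\mu_1(\X\times\X)$, so its conditional law is obtained by normalizing the posterior representation of $\mu_1$ from Theorem~\ref{main}. The paper phrases this via finite-dimensional distributions, rewriting $\{p_1(A)\le y\}=\{\mu_1(\mathbbm{1}_A-y)\le 0\}$, while you carry out the normalization directly; your disjoint-supports check is harmless but not actually needed, since the convex-combination identity is a purely algebraic rearrangement that holds regardless of whether the pieces share atoms.
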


	\subsection{Predictive structure}
	Prediction of new observations arises naturally within the Bayesian framework, since it coincides with the estimate of the distribution under a square loss function. Moreover, it has the merit of providing intuition on how the model behaves and learns
	and it can be used to develop marginal algorithms that avoid the direct sampling of $\tilde p_1$ and $\tilde p_2$, which are infinite-dimensional objects. In Proposition \ref{first_pair} we saw how to sample the first pair of observations. The next result tackles the general case.
	 
	\begin{theorem}\label{predictive}
		Consider samples $(X_{i})_{i=1}^{n}$ and $(Y_{j})_{j=1}^{m}$ from model \eqref{partial_model}, with the same setting of Theorem~\ref{main}. Then there exist probability weights $\xi_0$, $\{\xi_i^x\}$ and $\{\xi_j^y\}$ such that
		\[
		\P \bigl( X_{n+1} \in C \mid (X_{i})_{i=1}^{n} , \,(Y_{j})_{j=1}^{m} \bigr) = \xi_0P_0(C) + \sum_{i= 1}^k\xi_i^x \delta_{X_i^{*}}(C)+  \sum_{j = 1}^c\xi_j^yP_{Y_j^*}\left(C\right).
		\]
		Analogously, there exist probability weights $\eta_0$, $\{\eta_i^x\}$ and $\{\eta_j^y\}$ such that for any $C \in \mathcal{X}$ 
		\[
		\P \bigl( Y_{m+1} \in C \mid (X_{i})_{i=1}^{n}, \,(Y_{j})_{j=1}^{m} \bigr) = \eta_0P_0(C) + \sum_{j = 1}^c\eta_j^y \delta_{Y_j^{*}}(C)+  \sum_{i = 1}^k\eta_i^xP_{X_i^*}\left(C\right).
		\]
	\end{theorem}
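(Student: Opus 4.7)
The plan is to obtain the predictive by writing $\P(X_{n+1}\in C\mid (X_i)_{i=1}^n,(Y_j)_{j=1}^m)=\E[\tilde p_1(C)\mid (X_i)_{i=1}^n,(Y_j)_{j=1}^m]$, where $\tilde p_1(\cdot)=p_1(\cdot\times\X)$, and then to evaluate this expectation by conditioning on the latent variables $(p,U_1,U_2,\{Z_i^x\}_i,\{Z_j^y\}_j)$ and invoking Corollary~\ref{post_p}.

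Fixing the latents, Corollary~\ref{post_p} decomposes $p_1$ as a convex combination of four random probability measures on $\X\times\X$ with random weights $w_1,\dots,w_4$. I would project each piece onto the first coordinate via the map $A\mapsto p_1(A\times\X)$. For the $w_1$-term, $\hat\mu_1$ is a homogeneous completely random vector with atoms sampled i.i.d.~from $G_0$, so the first marginal of its atoms is $P_0$; by conditioning on the jumps of $\hat\mu_1$, a direct computation gives $\E[w_1\hat\mu_1(C\times\X)/T_1\mid \text{data and latents}]=P_0(C)\,\E[w_1\mid \text{data and latents}]$. The $w_2$- and $w_3$-terms are atomic at $(X_i^*,Y_j^*)$ for $(i,j)\in\Delta_p$ and at $(X_i^*,Z_i^x)$ for $(i,0)\in\Delta_p^1$ respectively, so their first-coordinate projections are weighted Dirac masses at the observed $X_i^*$. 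The $w_4$-term is atomic at $(Z_j^y,Y_j^*)$ for $(0,j)\in\Delta_p^2$, with first-coordinate atoms at the latent $Z_j^y$; since $Z_j^y\mid Y_j^*\sim P_{Y_j^*}$, marginalizing it out contributes $P_{Y_j^*}(C)$. Collecting and averaging the weights over the posterior law of the latents yields the claimed mixture, and the weights sum to one because $\tilde p_1(\X)=1$ almost surely. The statement for $Y_{m+1}$ follows by the symmetric argument, projecting instead $p_2(\X\times\cdot)$.

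The main obstacle is bookkeeping rather than a conceptual step: $\xi_0$, $\xi_i^x$, $\xi_j^y$ are expectations of $w_1,\dots,w_4$ taken jointly over the posterior distribution of the latent jumps $J_{i,j}^1$ (with density proportional to $s_1^{n_i}s_2^{m_j}e^{-U_1 s_1-U_2 s_2}\rho(\d s_1,\d s_2)$), of $T_1=\hat\mu_1(\X^2)$, and of the hyper-ties configuration $p$, whose mass function is proportional to the product of the $g_{i,j}$ and the $\tau_{n_i,m_j}$-integrals introduced before Theorem~\ref{main}. Finiteness of these expectations follows from the integrability conditions built into the definition of an n-FuRBI, so the structural mixture representation is guaranteed; closed-form evaluation is tractable only for specific L\'evy intensities $\rho$ but is not needed for this qualitative statement.
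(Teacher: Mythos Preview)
Your proposal is correct and follows essentially the same route as the paper: both write the predictive as $\E[\tilde p_1(C)\mid\text{data}]$, condition on the latents $(p,U_1,U_2)$, invoke the posterior decomposition of Corollary~\ref{post_p}, project each of the four pieces onto the first coordinate, and identify the $P_0$, $\delta_{X_i^*}$ and $P_{Y_j^*}$ contributions exactly as you describe. The one additional step the paper carries out, which you explicitly defer, is the closed-form evaluation of the weights: it writes $1/R=\int_0^\infty e^{-vR}\,\d v$, exploits the conditional independence of $\hat\mu_1$ and the jumps $J_{i,j}$, and then performs the change of variables $(w,z)=(u_1+v,u_1)$ to obtain, for instance, $\xi_0=\tfrac{\theta}{n}\int_D u_1\,\tau_{1,0}(u_1,u_2)\,F(\d\underline h)$ and analogous expressions for $\xi_i^x$ and $\xi_j^y$; since the theorem only asserts existence of the weights, your argument already establishes the statement.
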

Explicit formulae for the weights are available in the proof of Theorem \ref{predictive}, in Section $S1$. In specific cases they can be computed in closed form, conditional to the latent variables: see e.g. example $S1$ in Section $S4$ for the Inverse Gaussian case with equal jumps.
	 
	Hence, the marginal predictive distributions have a quite intuitive form: they are linear combinations of the centering distribution $P_0$, a weighted version of the empirical distribution and a last term that depends on the other sample. The crucial differences with respect to prediction rules arising in the exchangeable case  \citep{lijoi2010models,DeBlasi2015a} is the addition of the last term, which clearly shows how posterior inference changes when incorporating heterogeneous information and performing borrowing of information.
	 
	\begin{example}[n-FuRBI with equal atoms]{\rm 
		If the joint distribution $G_0$ is degenerate such that the atoms are completely shared between $\tilde p_1$ and $\tilde p_2$, then $P_{Z}(\cdot) = \delta_{Z}(\cdot)$. Therefore, the last term in Theorem~\ref{predictive} becomes a weighted version of the empirical distribution relative to the other sample.}
	\end{example}
	{Algorithms for posterior inference and prediction are derived in Section S4.}
	
		\section{Numerical Illustrations and Real Data Analyses}\label{s:illustration}
	\subsection{Bayesian mixture models}\label{bayesian_mixture}
	Discrete Bayesian models, as the one specified in \eqref{partial_model}, are usually not employed directly on the data, but as a building block in hierarchical mixture models: in this setting ${X}$ and ${Y}$ are hidden values that describes the clustering structure within the data. Such models have been introduced by 
	\cite{lo1984class} for the Dirichlet processes and gained popularity thanks also to the availability of sampling methods for posterior inference \citep{escobar1995bayesian,ishwaran2001gibbs,neal2000markov}.
	Suppose $\{f(\cdot \mid x) \, : \, x \in \X\}$ is a family of probability density kernels on a 
	space $\mathbb{W}$.  
	Then the model can be formulated in the context of \eqref{partial_model} as
	\begin{equation*}\label{hier_model}
		\begin{aligned}
			W_i \mid &X_i \overset{\text{ind}}{\sim} f(\cdot \mid X_i)\\
			&X_i \mid \tilde p_1 \simiid \tilde p_1
		\end{aligned},
		\quad
		\begin{aligned}
			V_j \mid &Y_j \overset{\text{ind}}{\sim} f(\cdot \mid Y_j)\\
			&Y_j \mid \tilde p_2 \simiid \tilde p_2
		\end{aligned},
		\quad (\tilde p_1, \tilde p_2) \sim \text{n-FuRBI}.
	\end{equation*}
	where $(W_i)_{i=1}^{n}$ and $(V_j)_{j=1}^{m}$ are the observable samples and are assumed to be conditionally independent, given $(X_{i})_{i=1}^{n}$ and $(Y_{j})_{j=1}^{m}$. Integrating out the latent variables $(X_{i})_{i=1}^{n}$ and $(Y_{j})_{j=1}^{m}$, the data are random draws from suitable countable mixtures, i.e.
	\[
	W_i \mid \tilde p_1 \overset{iid}{\sim} \int f(\cdot \mid x) \, \tilde p_1(\d x), \quad V_j\mid \tilde p_2 \overset{iid}{\sim} \int f(\cdot \mid y) \, \tilde p_2(\d y).
	\]
	\begin{example}[Gaussian mixtures]\label{gaussian_example}
	{\rm 	
		We assume $f(\cdot \mid x) := N(\cdot \mid x, \sigma^2)$, with $\sigma^2$ positive known constant, to be the normal density. Thus, the latent parameter is the mean, i.e. $\X = \R$.  In this case
		$\text{cov}(X_i, Y_j) = \text{cov}(W_i, V_j)$,
		so that the joint behavior of the latent means is reflected on the observations: this shows the importance of the correlation structure given by Proposition \ref{corr} also for hierarchical models. 
		Alternatively, the latent parameters could specify both the mean and the variance, with $\X = \R \times \R_+$.
	}
	\end{example}
	The goal is then to draw samples from the posterior distribution given $(W_{i})_{i=1}^{n}$ and $(V_{j})_{j=1}^{m}$: however this requires to integrate out all the possible partitions of the $n+m$ latent variables. As detailed in Section S4., it is possible to devise a Gibbs sampler for drawing from the posterior distribution of $(X_{i})_{i=1}^{n}$ and $(Y_{j})_{j=1}^{m}$.
	
	Once a posterior sample $(X_{i})_{i=1}^{n}$ and $(Y_{j})_{j=1}^{m}$ is generated, relevant quantities of interest can be approximated by exploiting the conditional independence 
	of $(W_{i})_{i=1}^{n}$ and $(V_{j})_{j=1}^{m}$, given the latent variables.

	\subsection{Simulation study for density estimation}\label{simulations_density}
	We consider a simple application with simulated data, in order to understand how inference changes when taking into account heterogeneous sources of information. Assume the following generating mechanism:
$ W_i \simiid N(\cdot \mid 10,1)$, for $i = 1, \dots, 20$, and
		$V_j \simiid N(\cdot \mid -10, 1)$, for $j = 1, \dots, 100$.
	Supposing only the phenomenon associated to the first sample is of interest, hierarchical mixtures are considered to make prediction on the unknown density of $W_i$. The kernel considered is the one specified in Example \ref{gaussian_example}, with known $\sigma^2 = 1$ and latent mean $\mu$. Four different approaches for modelling dependence between $(W_i)_{i\geq1}$ and $(V_i)_{i\geq1}$ are devised:
the exchangeable approach, according to which 
		sequences ${W}$ and ${V}$ are supposed to form one exchangeable sequence, inducing the highest positive correlation between $W_i$ and $V_j$; the independent approach, according to which the sample $(V_i)_{i\geq1}$ is disregarded entirely, that is $(W_i)_{i\geq1}$ and $(V_i)_{i\geq1}$ are treated independently; the hierarchical approach, where we use a hierarchical Dirichlet process (see Example \ref{HDPexample}) that corresponds to a classical borrowing of information; the FuRBI approach, where the underlying random probability measures $\tilde p_1$ and $\tilde p_2$ are n-FuRBI with equal weights and the distribution on the atoms is
			$G_0(\cdot \mid \rho_0) = N_2\left(\cdot \mid \underline{0}, 1, \rho_0\right)$ with  $\rho_0 \sim \mbox{Unif}([-1,1])$,
		where $N_2(\cdot \mid \underline{m}, \sigma_0^2, \rho_0)$ denotes the bivariate normal distribution with mean vector $\underline{m}$, common variance $\sigma_0^2$ and correlation $\rho_0$. It can be proven that under this specification $\Corr(W_i, V_j) = 0$, so that a priori ${W}$ and ${V}$ are marginally uncorrelated. The prior specification is purposely simple, especially regarding the base measure and the concentration parameter, in order to single out the effect of the borrowing between the two groups as much as possible.
		
	For the first two cases and the n-FuRBI, the marginal distribution is given by a Dirichlet process with $\theta = 1$ and $P_0(\cdot) = N(\cdot \mid 0, 1)$; instead for the hierarchical process the concentration parameters are fixed in order to match the expected number of different clusters with the other methods, for a fair comparison.
	As highlighted in Example \ref{example_equal_jumps}, n-FuRBI with equal jumps lead to the most general setting in terms of achievable correlation between samples; moreover, choosing the marginal processes to derive from a Gamma process, we can achieve any value in the interval $(-1,1)$, tuning appropriately the concentration parameter $\theta$.
	
	\begin{figure}[h!]
		\centering
		\captionsetup{width=\linewidth}
		\includegraphics[width=.49\textwidth]{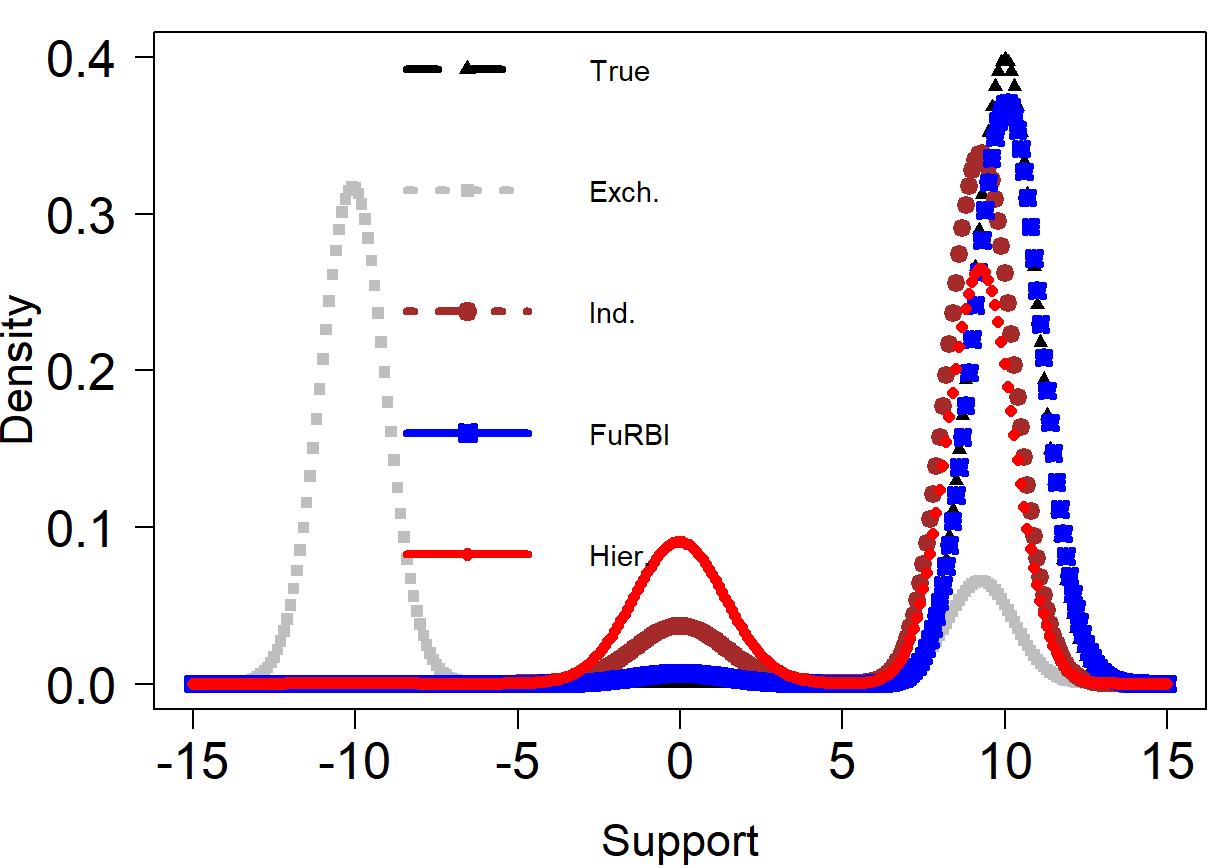}
		\includegraphics[width=.49\textwidth]{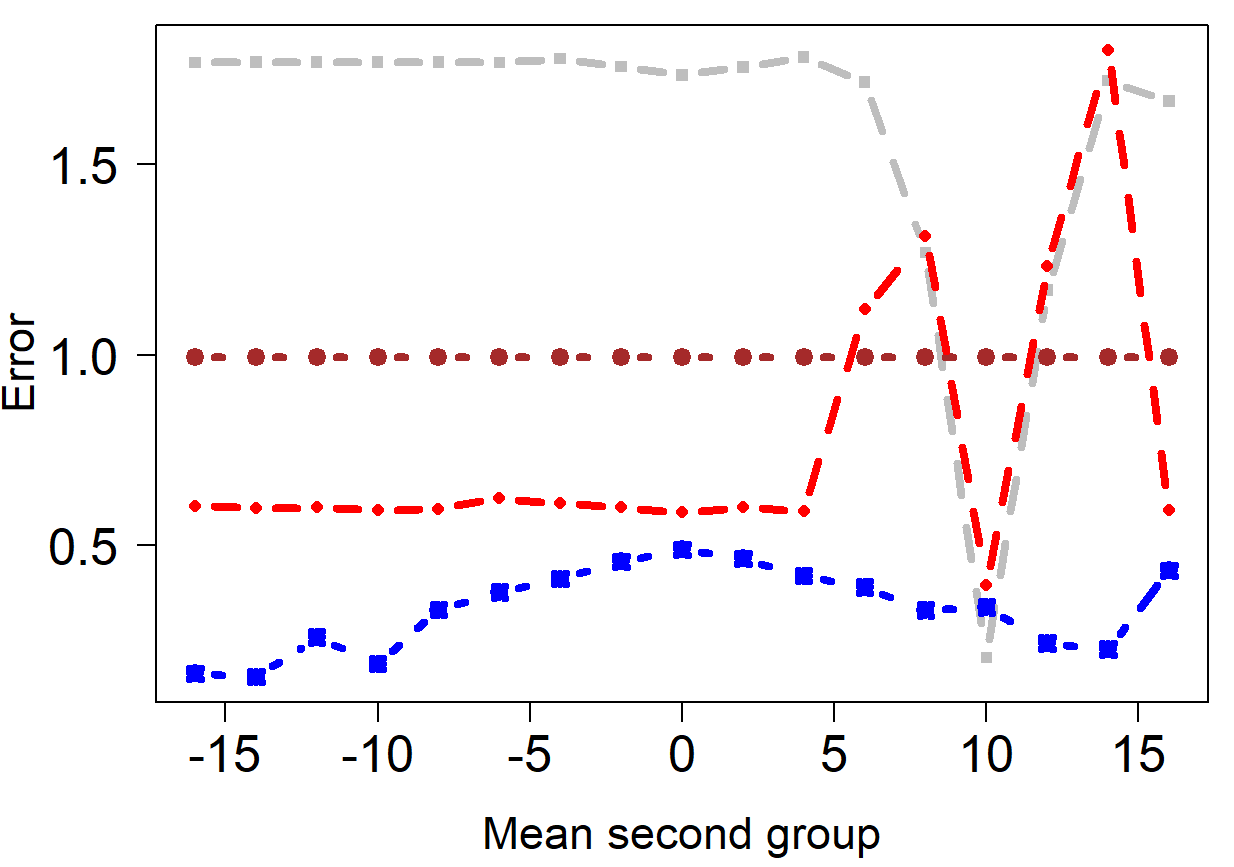}
		\caption{ Left: mean posterior densities for the case with opposite true means. Right: mean integrated error (computed on a grid and as the median over $50$ different samples) for the four estimates, varying the true mean of $V$.}
		\label{fig1}
	\end{figure}
	
	The left panel of Figure \ref{fig1} shows the performances of the four methods, after the application of the blocked Gibbs sampler provided in the supporting material: the mean posterior density (computed pointwise) is depicted. The exchangeable approach behaves very badly, as expected, because the two samples clearly have a different distribution. The independent choice leads to a reasonable estimate, even if it still overestimates the probability mass around the prior mean (because of the small sample size of the first sample). The hierarchical estimate is quite good, but our proposal, instead, fits almost perfectly the target density and seems to exploit the opposite behaviour of the two phenomena: this is clearly highlighted by the posterior distribution of $\rho_0$, whose approximated mean is close to $-0.9$.
	
	One may wonder whether these superior performances follow from the precise specification above, with opposite true means. Therefore, we have repeated the experiment by keeping the same generating mechanism for $W$, but with the true mean of ${V}$ ranging in the set $\{-16, -14, \dots, 14, 16\}$: the mean integrated absolute error (computed on a grid and as the median over $50$ different samples) is depicted in the right panel of Figure \ref{fig1}. It is apparent that the FuRBI approach almost always yields the smallest error, regardless of the true value. Its performance is close to the exchangeable case only when the two true means are equal, that is when exchangeability actually holds; analogously, the n-FuRBI priors yield the highest error when the mean of $V$ corresponds to the prior mean, i.e., when the other group provides less additional information. The hierarchical process captures the right dependence when the two means coincide, but can be misled when they are close; finally, when the second sample is very far from the first one it performs better than the independent model, probably thanks to the different inner clustering structure. The results are also summarized in Table \ref{table 1}. Thus, n-FuRBI seem to be always capable of combining heterogeneous information in the right way; in particular, at least in this example, they recognize the most useful type of borrowing of information. In Section S5.1 similar experiments are conducted, using different data generating distributions: they show that the conclusions hold even when the data display significantly different features, as multimodality or heavy tails.
		\begin{figure}\label{fig_cor}
			\vspace{-1\baselineskip}
\floatbox[{\capbeside\thisfloatsetup{capbesideposition={right,center},capbesidewidth=6cm}}]{figure}[\FBwidth]
{\caption{Posterior median of the correlation (obtained through $100$ simulation studies) between the three unknown means. Black with triangular shapes: correlation between the first and second components. Red with square shapes: correlation between the first and third components. Green with circular shapes: correlation between the second and third components.}\label{fig_corr}}
{\includegraphics[width=7cm]{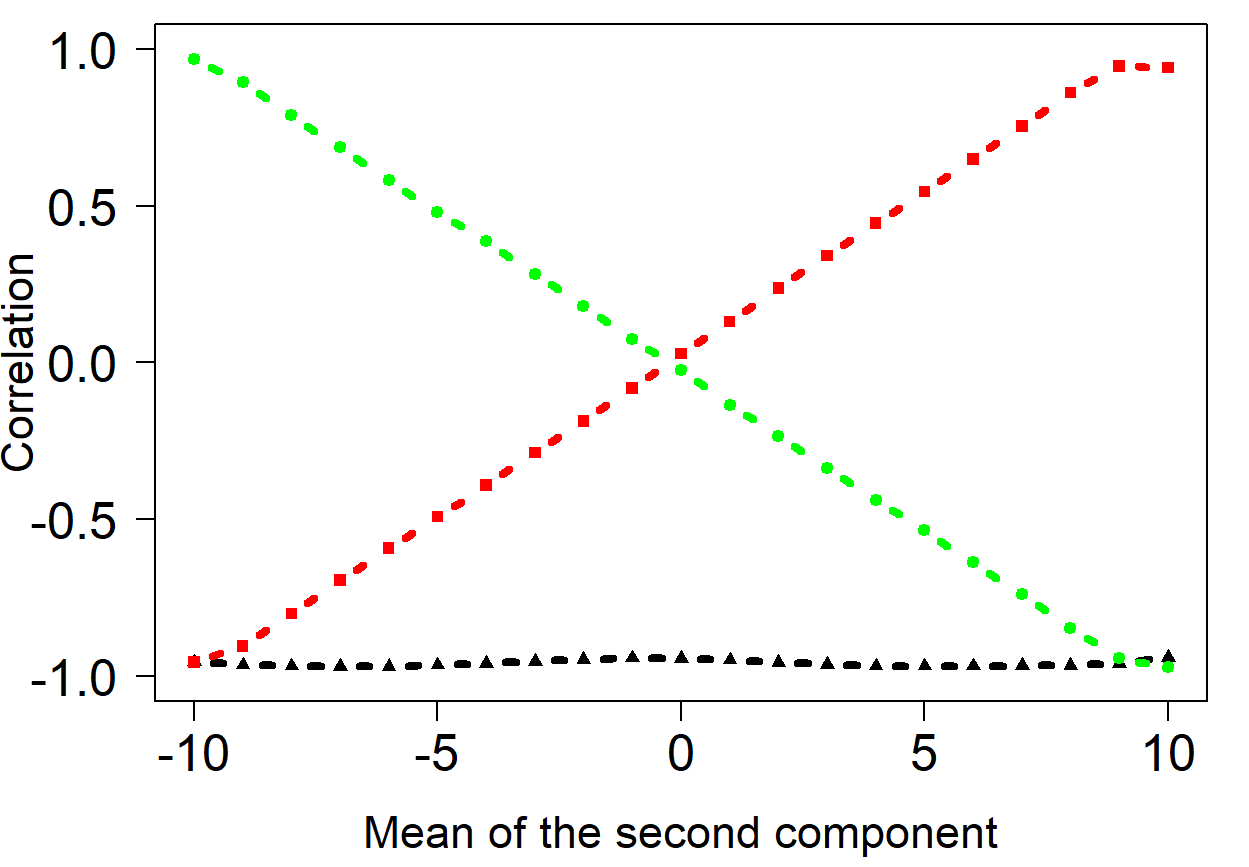}}
	\end{figure}

	\begin{table}
		\caption{\label{table 1}Mean integrated absolute error associated to the four methods for some values of the mean of ${V}$. The values in bold are the smallest ones for each row.}
		\centering
		\fbox{%
			\begin{tabular}{ccccc}\hline
				\text{Mean of } {V}  &\text{Exch.} &\text{Ind.} & \text{FuRBI} &\text{Hier.}  \\\hline
				-16   & 1.769 & 0.995 & \textbf{0.163} & 0.604 \\\hline
				-10   & 1.769 &0.995 & \textbf{0.189} & 0.592 \\\hline
				0   & 1.737 & 0.995 & \textbf{0.489} & 0.587 \\\hline
				10   & \textbf{0.205} & 0.995 & 0.338 & 0.397 \\\hline
				16   & 1.666 & 0.995 & \textbf{0.435} & 0.592 \\\hline
		\end{tabular}}
	\end{table}
Finally, we consider a similar application with three groups, in order to see whether n-FuRBI are able to discern more complex types of dependence. We assume to observe
$W_{1,i} \simiid N(\cdot \mid 10,1)$, $W_{2,i} \simiid N(\cdot \mid -10, 1)$, and $W_{3,i} \simiid N(\cdot \mid x, 1)$, 	where $i=1,\ldots,20$ and $x \in \{-10, -9, \dots, 10 \}$. Then, for each value of $x$ we apply the same n-FuRBI with the same weights described above, but where the atoms are distributed according to
	\begin{equation*}\label{proposed_g0_2}
		\begin{aligned}
		G_0\left(\cdot \right)&= N_3\left(\cdot \biggl \lvert\underline{0}, 1, \begin{bmatrix} 1& \rho_{12}& \rho_{13}\\ \rho_{12}&1&\rho_{23} \\ \rho_{13}& \rho_{23}& 1\end{bmatrix}\right),\\
		 \end{aligned}
	\end{equation*}
	where $N_3(\cdot \mid \mu_0, \sigma^2, \Psi)$ denotes a multivariate normal distribution with mean $\mu_0$, all the variances equal to $\sigma^2$ and correlation matrix $\Psi$ and $ \rho_{12 }, \rho_{13}, \rho_{23} \simiid \mbox{Unif}([-1,1])$. The posterior medians of $\rho_{12}, \rho_{13}$ and $\rho_{23}$ are depicted in Figure \ref{fig_corr}, for any value of $x$. 
	The results are in line with our intuition: the correlation between the first and second component is always close to $-1$ (indeed they have opposite behaviour relative to the prior), while $\rho_{13}$ and $\rho_{23}$ vary linearly with $x$, being positive when the means have the same sign.

 \subsection{Predicting stocks and bonds returns}\label{s:illustration1} 

	\begin{figure}
		\includegraphics[width=0.65\linewidth]{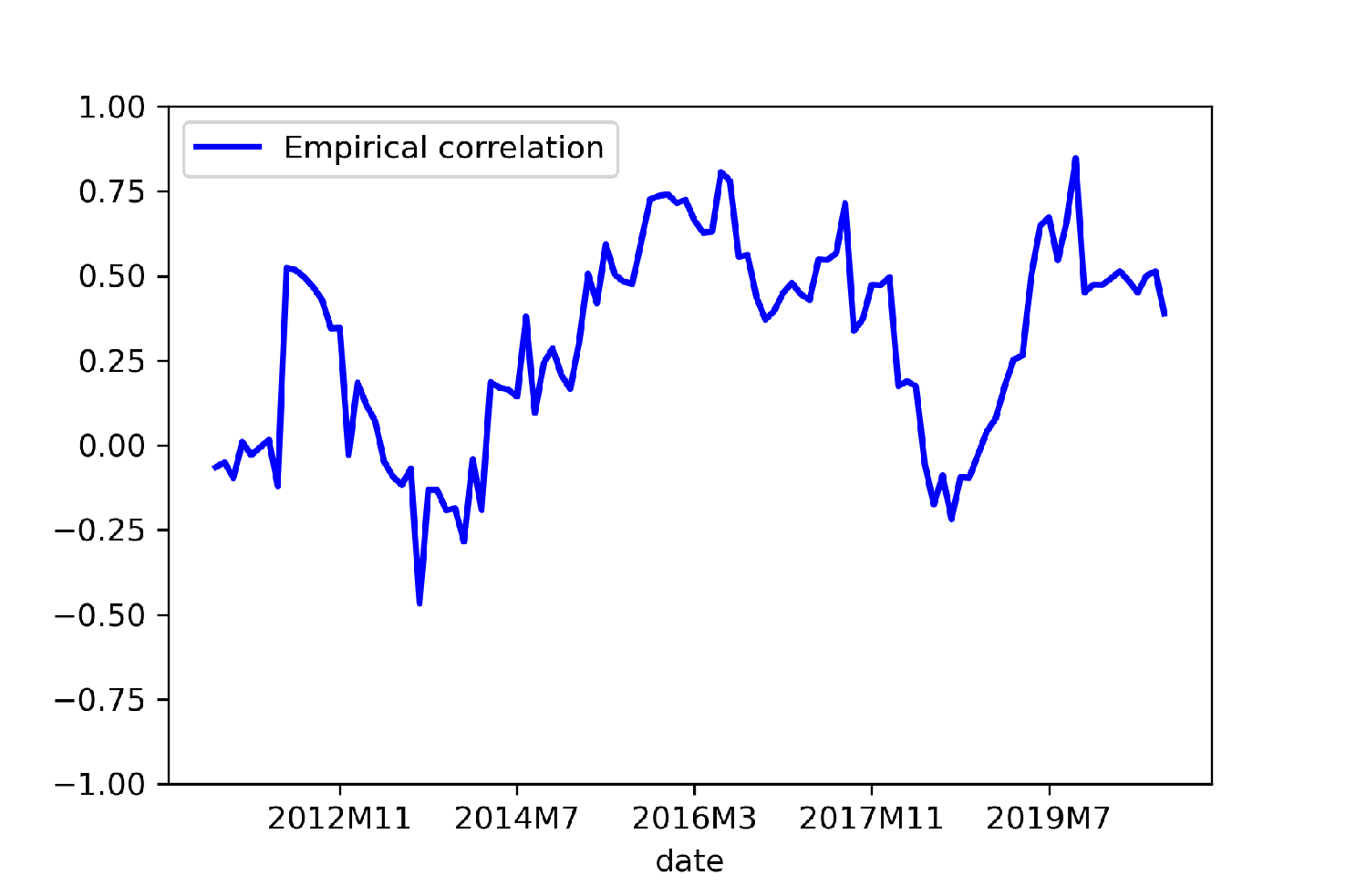}
		\caption{\label{fig:corr}{Empirical correlation between average stock return and average commodity return computed on a moving window of 12 months using data from March 2011 to January 2021.}}
	\end{figure}
Findings from the previous section and Section S5.1 suggest that n-FuRBI may be used to enhance density estimates and prediction in multi-sample data. Here, the performance 
	is showcased on a real dataset of stock and bond returns. We collected monthly returns of January 2021 for a sample of 49 stocks portfolios from the Kenneth R. French's Data Library (data available at \url{http://mba.tuck.dartmouth.edu/pages/faculty/ken.french/data_library.html}) and for a sample of 55 commodities from the Primary Commodity Prices Database of the International Monetary Fund (data available at \url{https://www.imf.org/en/Research/commodity-prices}).

{We employ a Bayesian mixture model and assume that stock and bonds returns, denoted by $W_i$ and $V_j$, respectively, are sampled from mixtures of normals where the mixing distributions act on mean and variance of the kernel, i.e.,
\begin{equation*}
	\begin{aligned}
		W_i\mid \tilde p_1 \overset{iid}{\sim} \int N(\cdot \mid x,\sigma^2_w) \, \tilde p_1(\d x, \d \sigma^2_w) \quad&\quad\quad
		V_j\mid \tilde p_2 \overset{iid}{\sim} \int N(\cdot \mid y,\sigma^2_v) \, \tilde p_2(\d y,\d \sigma^2_v).
	\end{aligned}
\end{equation*}}

Stocks and commodities exhibit correlation that largely varies over time ranging from positive to negative values \citep[see, for instance,][and Figure~\ref{fig:corr}]{bhardwaj2013business}. 
As a consequence, commodities returns contain useful information to make inference over the distribution of stocks portfolios, and viceversa. Thus, borrowing of information represents a natural strategy to improve inference. However, returns may differ even largely in value between the two sets of financial instruments, especially in periods of negative correlation. For instance, in our dataset, 53\% of the observed stocks returns are negative, while only 16\% of the bonds returns have negative sign. As such, classical nonparametric borrowing, consisting in sharing of mixture components, is not appropriate and, as shown in the following, possibly harmful. We instead make use of n-FuRBI models as prior distribution, i.e.,
\begin{equation*}
	\begin{aligned}
	(\tilde p_1,\tilde p_2) \mid \theta, z, G_0 &\sim \text{n-FuRBI}(\theta, \rho , G_0)\\
	\theta &\sim \text{Gamma}(\alpha,\beta)
	\end{aligned}
\end{equation*}
The base measure $G_0$ is chosen so that marginal distributions are given by normalized CRMs with conjugate Normal-InverseGamma base measure, i.e. 
\begin{equation*}
	\begin{aligned}
		G_0(\d_x, \d_y, \d \sigma^2_w, \d \sigma^2_v \mid \rho_0) = &\,N_2(\d x, \d y \mid {m}, \Sigma(\lambda_1, \lambda_2, \sigma^2_w, \sigma^2_v \rho_0))\\
		&\times \text{InvGamma}(\d \sigma^2_w \mid \alpha_1, \beta_1) \times \text{InvGamma}(\d \sigma^2_v \mid \alpha_2, \beta_2)
	\end{aligned}
\end{equation*}
with 
\begin{equation*}
	m = (m_1, m_2)' \qquad\text{and}\qquad
	\Sigma = \begin{bmatrix}
		\frac{\sigma^2_w}{\lambda_1} & 
		\rho_0\,\frac{\sigma_w}{\lambda_1^{1/2}} \frac{\sigma_v}{\lambda_2^{1/2}} \\
		\rho_0\,\frac{\sigma_w}{\lambda_1^{1/2}} \frac{\sigma_v}{\lambda_2^{1/2}}  & \frac{\sigma^2_v}{\lambda_2}
	\end{bmatrix}
\end{equation*}
and we use the following joint underlying L\'evy intensity
$v(\d s_1, \d s_2, \d x_1, \d x_2) =\{z\,[\rho(\d s_1)\delta_0(\d s_2) + \rho(\d s_2)\delta_0(\d s_1) ] +(1-z)\,\rho(\d s_1)\delta_{s_1}(\d s_2)\}\, \theta\, G_0(\d x_1,\d x_2)$,
with $z \sim \text{Unif}([0,\,1])$.
We term the resulting n-FuRBI \emph{additive n-FuRBI}, since the series representation of the corresponding FuRBI CRMs is
\begin{equation*}
	\tilde \mu_1(\cdot) \overset{a.s.}{=} \sum_{k\geq1}W_k\delta_{\theta_{0,k}}+\sum_{k\geq1}J_k\delta_{\theta_{1,k}}
	\qquad\qquad
	\tilde \mu_2(\cdot) \overset{a.s.}{=} \sum_{k\geq1}W_k\delta_{\phi_{0,k}}+\sum_{k\geq1}V_k\delta_{\phi_{2,k}},
\end{equation*}
where $(\theta_{0,k},\phi_{0,k})\overset{i.i.d}{\sim}G_0 $, $\theta_{1,k} \overset{i.i.d}{\sim}P_0 $ and $\phi_{2,k} \overset{i.i.d}{\sim}P_0$. When $G_0$ is degenerate on the main diagonal (i.e. $\rho_0=1$), one retrieves  GM-dependent completely random measures \citep{lijoi2014bayesian,lijoi2014dependent,lijoi2014class}. 
In order to obtain two Dirichlet processes marginally we set $\rho(s)=s^{-1}e^{-s}$, so that 
$\beta = 1/(1+\theta)$ and 
$\gamma = (1-z)\,{}_3 F_2(\theta-\theta\,z+2,\,1,\,1;\,\theta+2,\,\theta+2;\,1){\theta}/{(1+\theta)^2}$,
where ${}_3 F_2$ is the generalized hypergeometric function.

\begin{figure}[!t]
	\centering
	\begin{subfigure}[b]{0.32\textwidth}
		\centering
		\includegraphics[width=\textwidth]{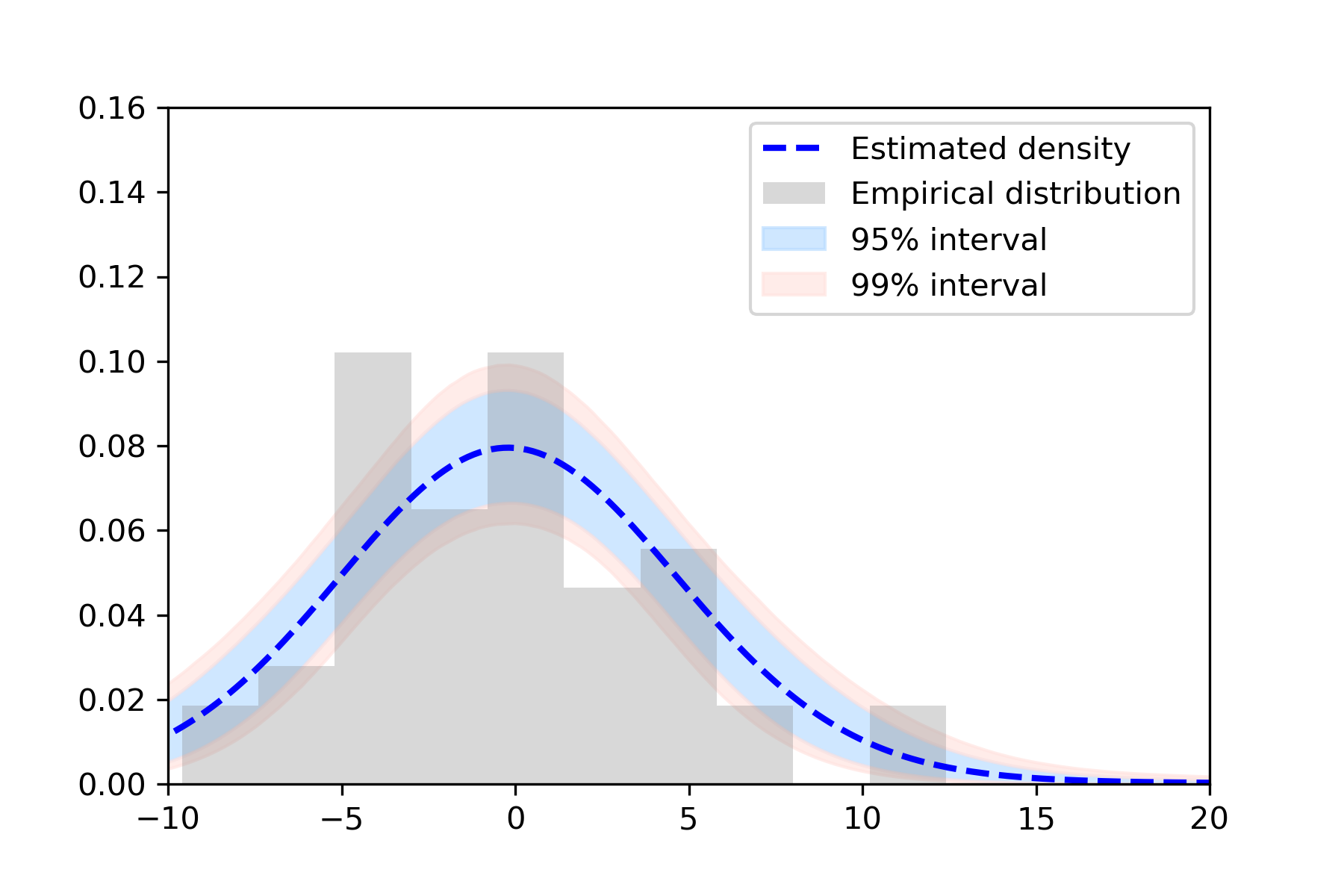}
		\caption{FuRBI with $\rho_0 \in [-1,1]$}
	\end{subfigure}
	\hfill
	\begin{subfigure}[b]{0.32\textwidth}
		\centering
		\includegraphics[width=\textwidth]{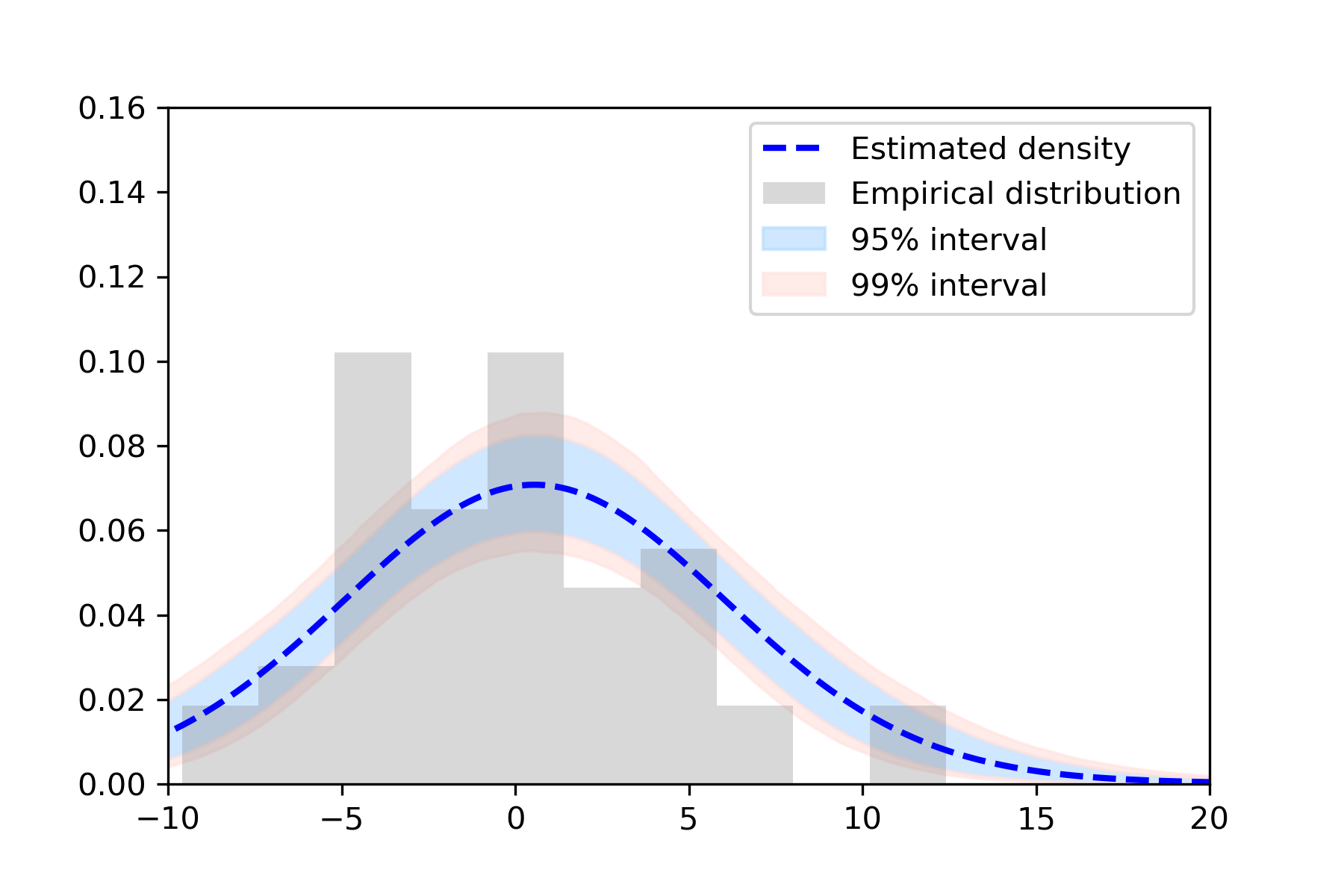}
		\caption{FuRBI with $\rho_0 = -0.95$}
	\end{subfigure}
	\hfill
	\begin{subfigure}[b]{0.32\textwidth}
		\centering
		\includegraphics[width=\textwidth]{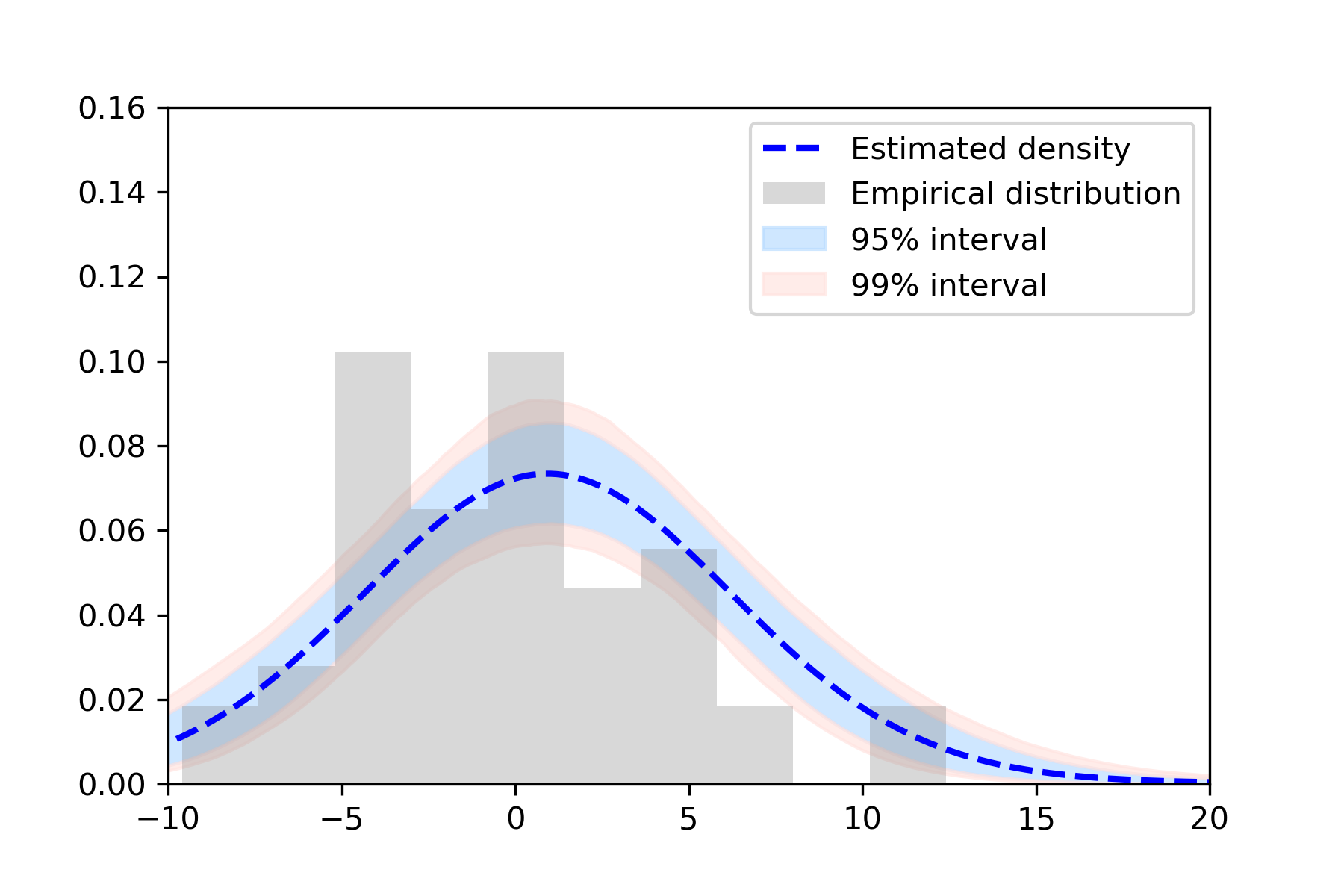}
		\caption{FuRBI with $\rho_0 = 0.95$}
	\end{subfigure}
	\begin{subfigure}[b]{0.32\textwidth}
		\centering
		\includegraphics[width=\textwidth]{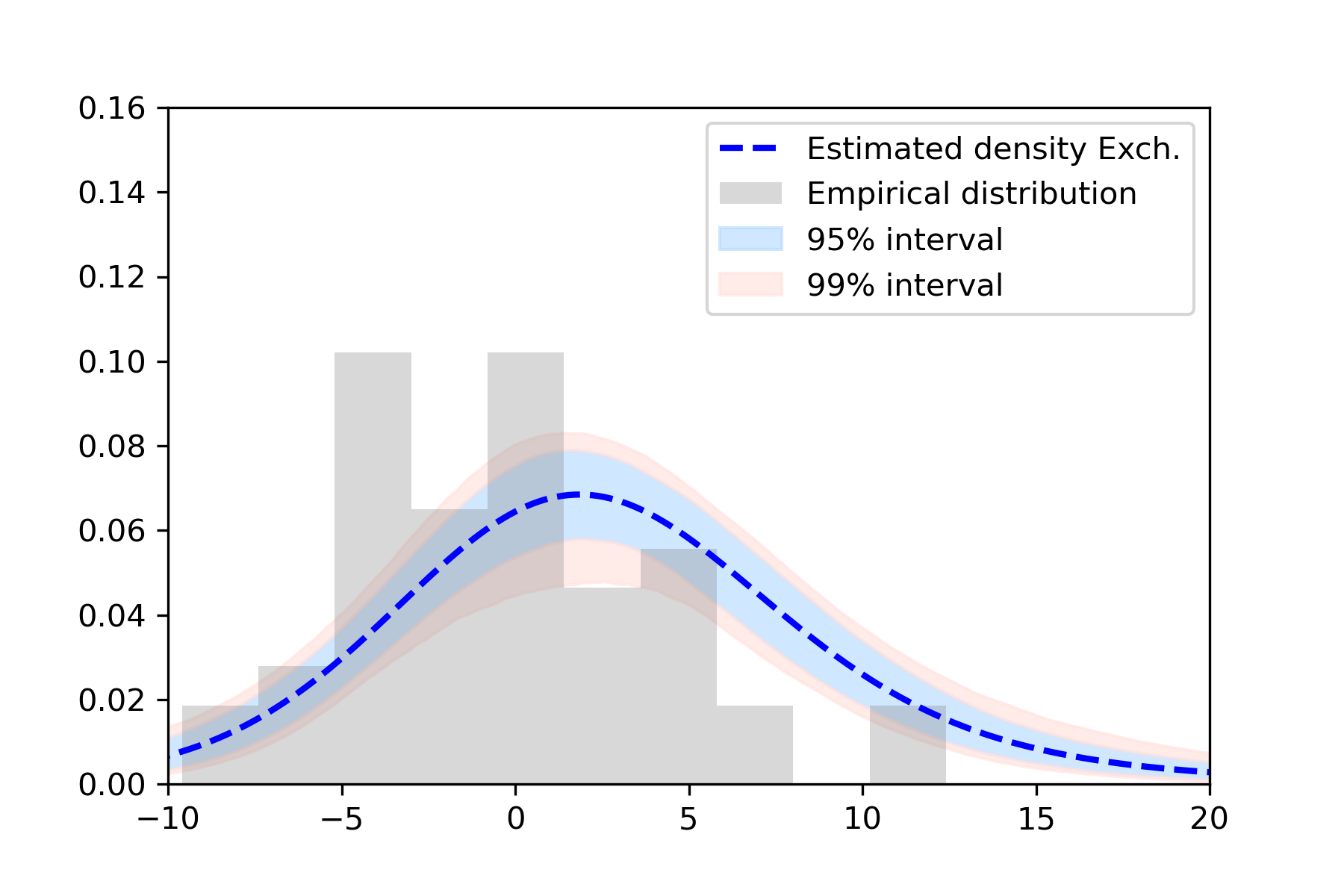}
		\caption{Exchangeable model}
	\end{subfigure}
	\hfill
	\begin{subfigure}[b]{0.32\textwidth}
		\centering
		\includegraphics[width=\textwidth]{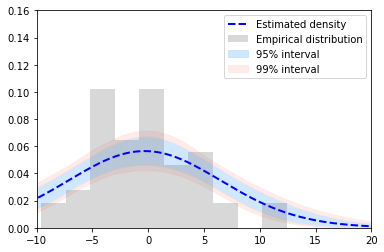}
		\caption{GM-dependent model}
	\end{subfigure}
	\hfill
	\begin{subfigure}[b]{0.32\textwidth}
		\centering
		\includegraphics[width=\textwidth]{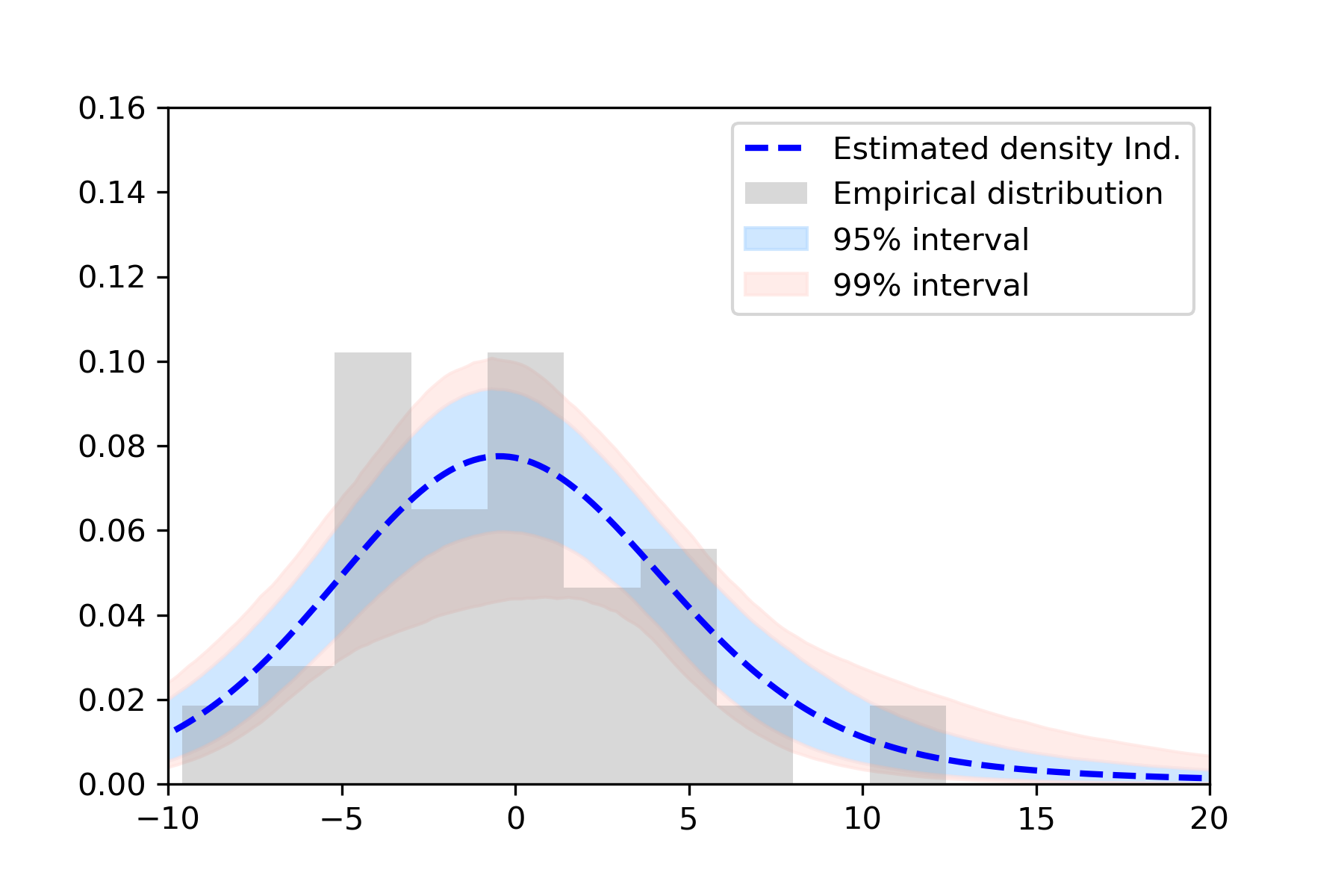}
		\caption{Independent model}
	\end{subfigure}
	\caption{\label{fig:densest} {Posterior density estimates for stocks returns.}}
\end{figure}
As for the hyperparameters of the model, we set the a priori expectations $m_1$ and $m_2$ in the two groups equal to the empirical averages of the two groups in December 2020, i.e., the month preceding the data collection, leading to $m_1 = 5.8591$ and $m_2 = 3.9731$. In the following, we say that a financial instrument is \emph{outperforming} if its observed return is higher than its a priori expected value. In order to assign $\rho_0$, we use the results of Propositions~\ref{corr} and \ref{first_pair}. The elicited $\rho_0$ should reflect our prior opinion about the correlation, which means that it should induce a learning mechanism agreeing with the following principle: under positive/negative correlation, conditioning on the event of outperforming commodities, the prior probability of outperforming/underperforming stocks should increase. Prior opinion about the correlation can be formulated working with financial experts and, thanks to n-FuRBI, incorporated through an informative prior on the parameter $\rho_0$.  Here, we consider three scenarios: in the first and second, we derive inferential results under a prior opinion of negative and positive correlation, respectively, while in the third scenario we assume that no information on the correlation is available. The three scenarios are obtained with, respectively, $\rho_0=0.95$, $\rho_0=-0.95$, and using a uniform prior on $\rho_0$.  After standardizing the data, we set the remaining hyperparameters in a weakly informative way, i.e. $\lambda_1 = \lambda_2 = 1$, $\alpha_1=\alpha_2=2$, and $\beta_1=\beta_2 = 4$. Sensitivity analysis, carried out in Section S6.2, shows that results are robust with respect to different choices for $\lambda_j$, $\alpha_j$ and $\beta_j$ for $j=1,2$. We perform $50,000$ iterations of the marginal algorithm (Section S4.1) and discard the first $10,000$ as burn--in. Section S8.~contains results about convergence diagnostic, mixing performance, and computational times of the algorithm.

Finally, we compare our approach with three alternative models: the independent model and the exchangeable model, described in the previous section, and the GM-dependent model from \cite{lijoi2014dependent}, which performs classical borrowing based on ties and shares the same additive structure of additive n-FuRBI.

\begin{table}
	\begin{tabular}{lcc}\hline
		& ALCPO & MLCPO\\\hline\hline
		FuRBI $\rho_0 \in [-1,1]$ & \textbf{-1.2347} & \textbf{-0.9627}\\
		FuRBI $\rho_0 = -0.95$ & -1.2925 & -1.0115\\
		FuRBI $\rho_0 = 0.95$  & -1.2896 & -1.0149 \\
		Exch & -1.5024   & -1.1521 \\
		GM-dep & -1.4864 & -1.1557 \\
		Ind &  -1.3495	& -1.1017\\
	\end{tabular}
	\caption{\label{table 2}{ALCPO and MLCPO under the three models. Best performance is highlighted in bold.}}
\end{table}

Figure~\ref{fig:densest} displays the posterior density estimates for stocks returns. The analogous figure for bonds returns can be found in Section S6.1. Models employing additive n-FURBI produce density estimates that 
better resemble the empirical distribution. The best performance is attained by placing a (non-informative) prior over the correlation $\rho_0$, which leads to a posterior skewed towards negative values but still quite dispersed (see Figure S7) reflecting the direction and intensity of the borrowing of information. 
The FuRBI models with fixed $\rho_0$ perform worse compared to full-borrowing; nonetheless, thanks to their flexibility, they still produce better results than other competitors. The GM-dependent and the exchangeable models yield the worst density estimates in terms of resemblance of the histogram, as expected. Indeed, the type of borrowing they perform differ from the one allowed by FuRBIs (even when $\rho_0=0.95$), as it is based on ties, which are not appropriate for the specific problem at hand. Lastly, we note that the independent model appears to provide a reasonable density estimation, but presents significantly higher uncertainty. 
 
While Figure~\ref{fig:densest} provides insight on the model performance, 
	an important \textit{caveat} is in order: a too close resemblance of the empirical distribution may indicate overfitting.
 
To evaluate the predictive performance, we resort to the conditional predictive ordinates (CPOs) statistics \citep[see, e.g.][]{gelfand1992model,barrios2013modeling}. Essentially, for each value $i$, we train the model without the $i$-th observation and compute the predictive density at the observed point. For the first sample it reads
$\text{CPO}^w_i=\tilde f (w_i\mid {w}^{-i},{v})$, for $i=1,\ldots,n$
and analogously for the second sample we have 
$\text{CPO}^v_j=\tilde f (v_j\mid {w},{v}^{-j})$, for $j=1,\ldots,m$,
where ${w}$ and ${v}$ denote the vectors of observed returns for, respectively, stocks and commodities.

Table~\ref{table 2} displays the average logarithmic CPO (ALCPO) and the median logarithmic CPO (MLCPO) in the overall sample. 
Higher values correspond to a better performance, and the n-FuRBI exhibits the best performance.

\subsection{Clustering of multivariate data with missing entries}\label{application}

 \begin{figure}[b!]
    \centering
    \begin{subfigure}[b]{0.49\textwidth}
    \includegraphics[width=\textwidth]{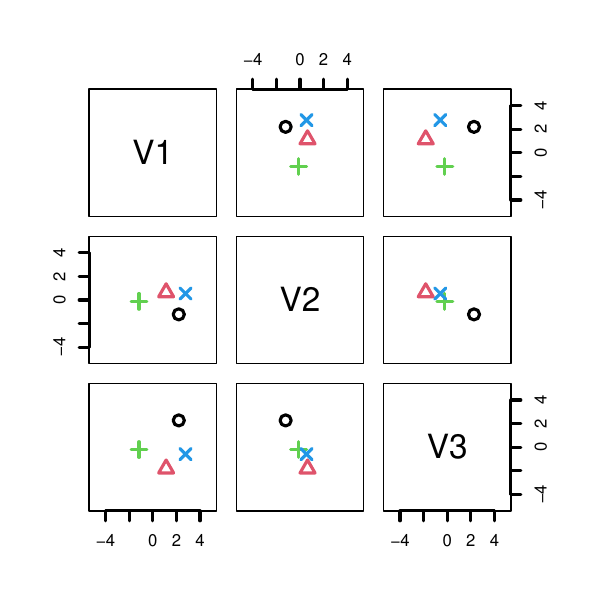}
    \end{subfigure}
    \hfill
    \begin{subfigure}[b]{0.49\textwidth}
    \includegraphics[width=\textwidth]{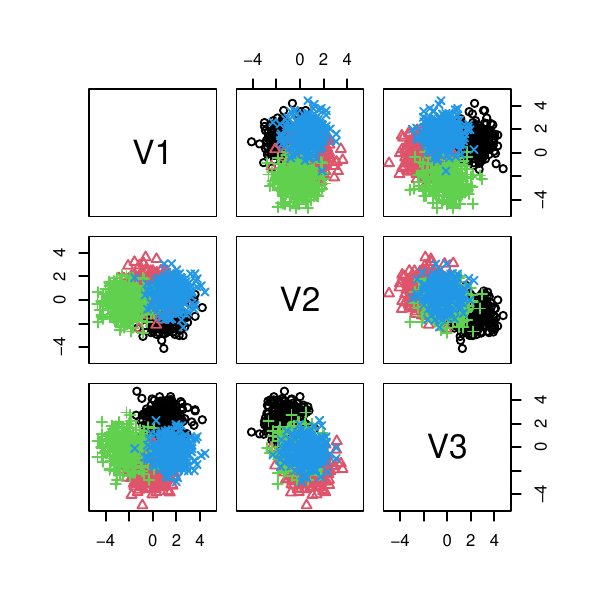}
    \end{subfigure}
    \vspace{-0.5\baselineskip}
    \caption{\label{fig:simdatana}{Simulated data: left panel shows true clusters locations, right panel shows complete simulated data for $n=1000$ before applying the missingness mechanisms.}} 
    \label{fig:my_label}
\end{figure}

We now show how to leverage on our methodology to perform borrowing of information and clustering with multivariate data affected by missing entries. The n-FuRBI priors are very well suited for this problem: indeed, incomplete observations can be interpreted as projections of latent complete observations and, in particular, hyper-ties between incomplete observations can be thought of as actual ties between complete observations.

We consider a $P$-variate ($P>1$) dataset with missing entries and divide the dataset into distinct samples based on the missing entries: denote by $(\underline{W}^{(j_1,\ldots,j_l)}_i, i=1,\ldots,n_{(j_1,\ldots,j_l)})$ the sample where $l$ outcomes with labels $(j_1,\ldots,j_l)$ are missing. The dimension of the vector $\underline{W}^{(j_1,\ldots,j_l)}_i$ is therefore $P_{j_1,\ldots,j_l} = P - l$.  Denote by $ \tilde q_{j_1,\ldots,j_l}$ the corresponding unknown distribution, i.e., 
\[
\underline{W}^{(x)}_i \mid \tilde q_x \overset{iid}{\sim} \tilde q_x \qquad \text{for } i = 1,\ldots,n_x  \text{ and } x\in I,
\]
where $I$ is the index set of all the possible combinations of missing variables identifying different samples, which are at most $2^P-1$. Independent analyses for each sample should clearly be avoided and classical nonparametric borrowing cannot even be specified because the support spaces of different samples differ one from the other.   

To perform clustering, we assume that each $\tilde q_x$ is a mixture of multivariate normal kernels with diagonal covariance matrix and mixing measure $\tilde p_x$ on locations, i.e.
\[
\underline{W}^{(x)}_i \mid \tilde p_x, \underline{\sigma}^2 \overset{iid}{\sim} \int N_{P_x}(\cdot \mid \underline{\mu}_x,\underline{\sigma}^2_x) \, \tilde p_x(\d \underline{\mu}_x) ,
\]
where $\underline{\sigma}^2 = \left(\sigma^2_1, \dots, \sigma^2_P \right)$, $\underline{\sigma}^2_x$ is the restriction of $\underline{\sigma}^2$ to all the elements besides $x$ and $N_K(\cdot \mid \underline{\mu}, \underline{\tau}^2)$ denotes the $K$-variate normal distribution with mean vector $\underline{\mu}$ and diagonal covariance matrix given by $\underline{\tau}^2$. Independence of the kernel (implied by the diagonal covariance matrix) is a common assumption in clustering models for multivariate responses \citep[see, for instance,][]{gao2020clusterings, franzolini2022bayesian}: in this way we are forcing the clustering structure to encode all the dependence across responses. The $\tilde p_x$ are distributed as
\[
(\tilde p_x, x\in I) \sim \text{additive n-FuRBI},
\]
described in Section $6.3$. The atoms of $(\tilde p_x, x\in I)$ are costrained so that an hyper-tie can be interpreted as an actual tie between complete observations: moreover the choice of dependent weights allows to recover group-specific features, if the missingness mechanism is informative. Section S7.1 provides a discussion of this and contains the details about the choice of the hyperparameters.

\begin{table}[!tb]
\begin{tabular}{c|c|c|c|c|c|c|c}
simul&missing& \% of missing& n-FuRBI & n-FuRBI & n-FuRBI & mice +  &mice +\\
number& mechanism &  entries& $z=0.2$&$z = 0.5$&$z=0.8$& k-means& DPM\\
\hline
n.1&MCAR& 16.1\%&\textbf{0.7883}&0.7882&0.7881&0.7408&0.7734\\
n.2&MNAR& 16.7\%&0.7703&0.7704&\textbf{0.7706}&0.6323&0.7617\\
n.3&MCAR& 35.9\%&\textbf{0.7292}&0.7285&0.7283&0.6786&0.7165\\
n.4&MNAR& 34\%&0.7304&0.7301&\textbf{0.7432}&0.6391&0.7328
\end{tabular}
\caption{\label{tab:ri}{Rand indexes for $5$ competing methods: 3 n-FuRBI models with varying parameter $z$, mice+k-means and mice+DPM. For n-FuRBI and mice+DPM the posterior expected value is computed averaging over the Rand indexes of all clustering configurations visited by the MCMC chain after burn-in.}}
\end{table}

\begin{table}[!tb]
\begin{tabular}{c|c|c|c|c|c|c|c}
simul&missing& \% of missing& n-FuRBI & n-FuRBI & n-FuRBI &mice + & mice +\\
number&  mechanism&   entries & $z=0.2$&$z = 0.5$&$z=0.8$& k-means& DPM\\
\hline
n.1&MCAR& 16.1\%&4.24&\textbf{4.19}&4.22&3&5.48\\
n.2&MNAR& 16.7\%&\textbf{4.59}&3.29&3.37&2&5.36\\
n.3&MCAR& 35.9\%&4.38&\textbf{4.18}&4.20&3&7.01\\
n.4&MNAR& 34.0\%&4.28&\textbf{4.17}&4.59&2&5.85\\
\end{tabular}
\caption{\label{tab:K} {Estimated number of clusters for $5$ competing methods. The posterior mean is used for n-FuRBI and mice+DPM, while the number of clusters is selected by maximizing the average silhouette for mice+k-means. The true number of clusters is equal to $4$.}}
\end{table}

First, we conduct a simulation study where data for $n=1,000$ items, $P=3$ responses, and $K=4$ clusters are simulated from a mixture of Gaussian distributions. Figure~\ref{fig:simdatana} shows the locations of the true clusters and the complete simulated data before deleting entries. Then, different missingness	mechanisms are applied to determine the entries to be treated as missing. Missing completely at random (MCAR) scenarios are obtained by sampling missing entries uniformly, while, in missing non at random (MNAR) scenarios the probability of being missing depends on the true cluster allocation. Different combinations of missing variables define different samples: the number of samples ranges from 3 to 6 among simulation scenarios. The detailed distributions of missing values are provided in Section S7.2. Different values of the hyperparameter $z$ of the L\'evy intensity are considered. Our results are compared with those obtained with two alternative approaches, called ``mice + k-means" and ``mice + DPM", which follow a two-steps procedure: first one imputes missing data by chained equations as implemented in the \texttt{R} package \texttt{mice} \citep{mice}, then, the clustering structure is estimated with, respectively, k-means and a Dirichlet process mixture. Note that the number of clusters for k-means is chosen to maximize the average silhouette. For each run of the n-FuRBI model, we perform $25,000$ iterations of the MCMC chain and discard the first half as burn-in. Section S8.~contains results about convergence diagnostics, mixing performance, and computational times of the algorithm. 
Tables~\ref{tab:ri} and \ref{tab:K} summarize the performance of the models. The n-FuRBI priors outperform the alternatives in all scenarios considered, in term of estimating both the number of clusters and the  clustering configuration, measured by Rand indexes between the estimated configuration and the true clustering structure. Moreover, the posterior distribution of n-FuRBI models reflects uncertainty both about the estimated clustering configuration and about the imputation mechanism, which is instead ignored by two-step procedures.

\begin{figure}[tb!]
	\centering
	\includegraphics[width=0.75\textwidth]{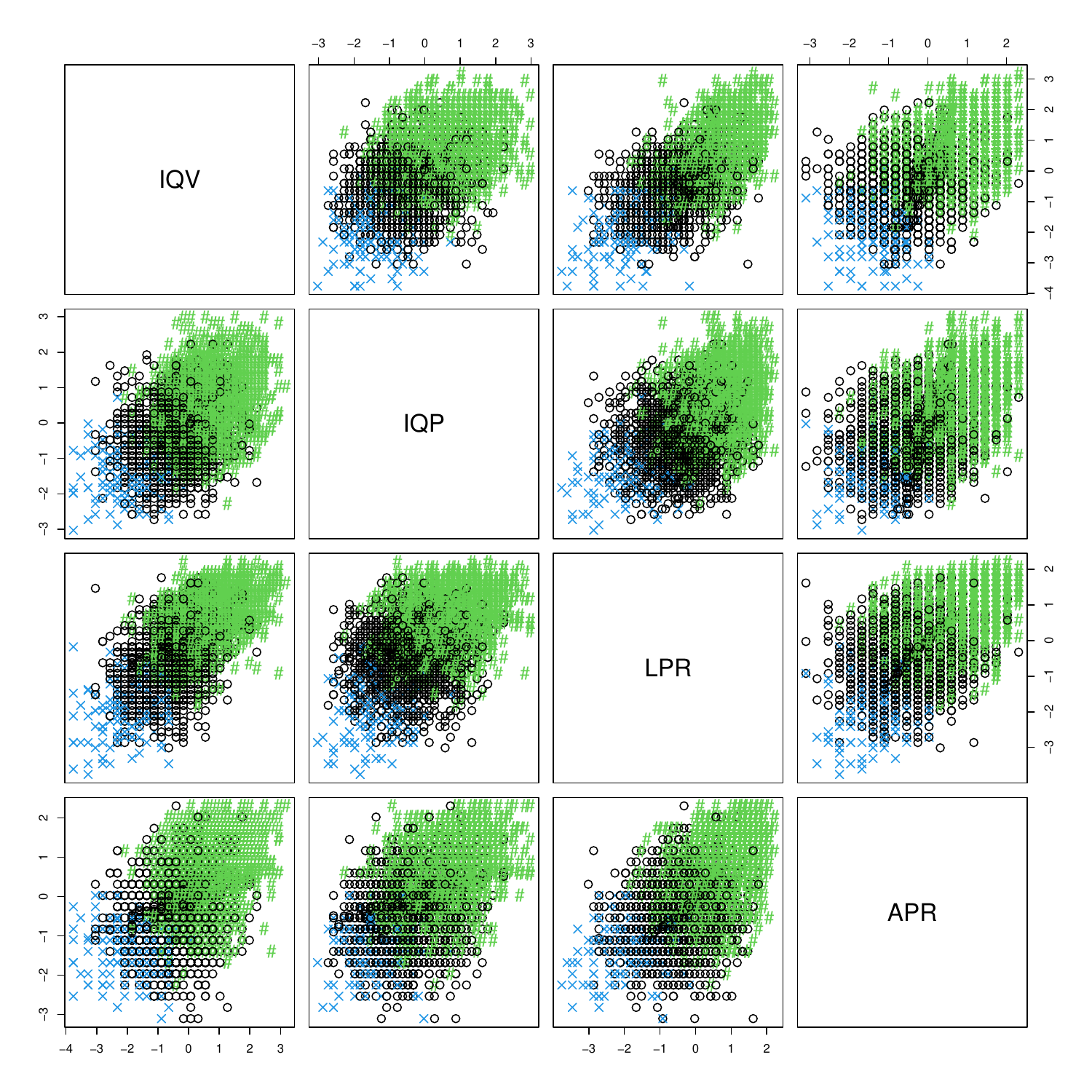}
	\caption{\label{fig:bradsma}{Scatter plots of the four scores (after standardization) for the  \texttt{brandsma} dataset. Coordinates of missing data are set equal to their respective posterior median. Different colors and symbols denote the three estimated clusters obtained minimizing the variation of information loss with respect to the posterior distribution.}}
\end{figure}

Finally, we apply the model also on the \texttt{brandsma} dataset (\cite{tom2012multilevel}), which refers to grade 8 students (age about 11 years) in elementary schools in the Netherlands \citep[see,][]{brandsma1989effects}. The goal is to cluster $n=4,106$ pupils, based on their IQ verbal score (IQV), IQ performance score (IQP), language score (LRP), and arithmetic score (APR). The number of subjects presenting missing entries is $339$ out of $4,106$ (i.e., $8.26\%$). As before, different combinations of missing variables define different samples: the number of samples is 7 in the \texttt{brandsma} dataset. In this real data analysis, the final clustering configuration provides a lower-dimensional description of the data rather than an estimate of ideal true clusters. Data are standardized before running the model, so that the sample means and variances are equal to $0$ and $1$. Figure~\ref{fig:bradsma} shows the estimated clustering configuration obtained minimizing the variation of information loss with respect to the posterior distribution. The model identifies three clusters, which show as major tendency that groups of students performing above/below average for one of the four scores tends to perform above/below average also for the other scores. In particular, a first cluster includes 53\% of the subjects, which have lower performances: indeed cluster averages of the standardized scores are IQV$=-0.371$, IQP$=-0.398$, LRP$=-0.387$, and APR$=-0.435$. Instead, the second cluster, including 44\% of the subjects, retains the best students: the cluster averages of the standardized scores are IQV$=0.609$, IQP$=0.595$, LRP$=0.629$, and APR$=0.642$. Finally, the students with the worst scores are allocated to a third cluster whose averages are IQV$=-2.01$, IQP$=-1.43$, LRP$=-1.90$, and APR$=-1.34$.

\section{Conclusion}\label{s:conclusion}
Hyper-ties play a crucial role in driving the  
Bayesian learning mechanism and the borrowing of information across samples. However existing nonparametric priors either do not allow an explicit evaluation of the probability of a hyper-tie or, when they do, often only non-negative correlation is induced. On the contrary, n-FuRBIs allow for analytical tractability and may induce either positive or negative correlation between the random probabilities as well as across samples resulting in a novel and flexible idea of borrowing of strength. They are immediately applicable to model multi-sample data through mixture models, as shown in Section~\ref{s:illustration1}. Morever, n-FuRBIs  also allow for a variety of interesting extensions, since they can be seen as an effective building block to model non-trivial dependencies in more complex data analyses. Future work will further explore these applications.

\section*{Acknowledgements}
F. Ascolani, A. Lijoi and I. Pr\"unster are partially supported by MIUR, PRIN Project P2022H5WZ9. Part of this work was carried out while B.\ 
Franzolini was a Research Fellow at the Agency for Science, Technology and Research, in Singapore, Republic of Singapore. B.\ Franzolini is supported by PNRR - PE1 FAIR - CUP B43C22000800006.

	\bibliographystyle{chicago}
	\bibliography{biblio}

\begin{thebibliography}{}

\bibitem[\protect\citeauthoryear{Arbel and Pr{\"u}nster}{Arbel and
  Pr{\"u}nster}{2017}]{arbel2017moment}
Arbel, J. and I.~Pr{\"u}nster (2017).
\newblock A moment-matching {F}erguson \& {K}lass algorithm.
\newblock {\em Statistics and Computing\/}~{\em 27\/}(1), 3--17.

\bibitem[\protect\citeauthoryear{Barrios, Lijoi, Nieto-Barajas, and
  Pr{\"u}nster}{Barrios et~al.}{2013}]{barrios2013modeling}
Barrios, E., A.~Lijoi, L.~E. Nieto-Barajas, and I.~Pr{\"u}nster (2013).
\newblock Modeling with normalized random measure mixture models.
\newblock {\em Statistical Science\/}~{\em 28\/}(3), 313--334.

\bibitem[\protect\citeauthoryear{Bhardwaj and Dunsby}{Bhardwaj and
  Dunsby}{2013}]{bhardwaj2013business}
Bhardwaj, G. and A.~Dunsby (2013).
\newblock The business cycle and the correlation between stocks and
  commodities.
\newblock {\em Journal of Investment Consulting\/}~{\em 14\/}(2), 14--25.

\bibitem[\protect\citeauthoryear{Brandsma and Knuver}{Brandsma and
  Knuver}{1989}]{brandsma1989effects}
Brandsma, H. and J.~Knuver (1989).
\newblock Effects of school and classroom characteristics on pupil progress in
  language and arithmetic.
\newblock {\em International Journal of Educational Research\/}~{\em 13\/}(7),
  777--788.

\bibitem[\protect\citeauthoryear{Brillinger}{Brillinger}{2002}]{brillinger2002john}
Brillinger, D.~R. (2002).
\newblock John {W}. {T}ukey: his life and professional contributions.
\newblock {\em The Annals of Statistics\/}~{\em 30\/}(6), 1535--1575.

\bibitem[\protect\citeauthoryear{Camerlenghi, Dunson, Lijoi, Pr{\"u}nster, and
  Rodriguez}{Camerlenghi et~al.}{2019}]{camerlenghi2019latent}
Camerlenghi, F., D.~B. Dunson, A.~Lijoi, I.~Pr{\"u}nster, and A.~Rodriguez
  (2019).
\newblock Latent nested nonparametric priors.
\newblock {\em Bayesian Analysis\/}~{\em 14\/}(4), 1303--1356.

\bibitem[\protect\citeauthoryear{Camerlenghi, Lijoi, Orbanz, and
  Pr{\"u}nster}{Camerlenghi et~al.}{2019}]{camerlenghi2019distribution}
Camerlenghi, F., A.~Lijoi, P.~Orbanz, and I.~Pr{\"u}nster (2019).
\newblock Distribution theory for hierarchical processes.
\newblock {\em The Annals of Statistics\/}~{\em 47\/}(1), 67--92.

\bibitem[\protect\citeauthoryear{Camerlenghi, Lijoi, and
  Pr{\"u}nster}{Camerlenghi et~al.}{2018}]{camerlenghi2018bayesian}
Camerlenghi, F., A.~Lijoi, and I.~Pr{\"u}nster (2018).
\newblock {B}ayesian nonparametric inference beyond the {G}ibbs-type framework.
\newblock {\em Scandinavian Journal of Statistics\/}~{\em 45\/}(4), 1062--1091.

\bibitem[\protect\citeauthoryear{Catalano, Lavenant, Lijoi, and
  Pr{\"u}nster}{Catalano et~al.}{2023}]{catalano2023}
Catalano, M., H.~Lavenant, A.~Lijoi, and I.~Pr{\"u}nster (2023).
\newblock A {W}asserstein index of dependence for random measures.
\newblock {\em Journal of the American Statistical Association\/}, forthcoming.

\bibitem[\protect\citeauthoryear{Catalano, Lijoi, and Pr{\"u}nster}{Catalano
  et~al.}{2021}]{catalano2021measuring}
Catalano, M., A.~Lijoi, and I.~Pr{\"u}nster (2021).
\newblock Measuring dependence in the {W}asserstein distance for {B}ayesian
  nonparametric models.
\newblock {\em The Annals of Statistics\/}~{\em 49\/}(5), 2916--2947.

\bibitem[\protect\citeauthoryear{Cifarelli and Regazzini}{Cifarelli and
  Regazzini}{1978}]{cifarelli1978problemi}
Cifarelli, D.~M. and E.~Regazzini (1978).
\newblock Nonparametric statistical problems under partial exchangeability: The
  role of associative means.
\newblock {\em Quaderni Istituto Matematica Finanziaria dell'Universit\`a di
  Torino Serie III\/}~{\em 12}, 1--36.

\bibitem[\protect\citeauthoryear{{De Blasi}, Favaro, Lijoi, Mena,
  Pr{\"{u}}nster, and Ruggiero}{{De Blasi} et~al.}{2015}]{DeBlasi2015a}
{De Blasi}, P., S.~Favaro, A.~Lijoi, R.~H. Mena, I.~Pr{\"{u}}nster, and
  M.~Ruggiero (2015).
\newblock {Are Gibbs-type priors the most natural generalization of the
  {D}irichlet process?}
\newblock {\em IEEE Transactions on Pattern Analysis \& Machine
  Intelligence\/}~{\em 37\/}(2), 212--229.

\bibitem[\protect\citeauthoryear{de~Finetti}{de~Finetti}{1938}]{DeFinetti1938}
de~Finetti, B. (1938).
\newblock {Sur la condition d'equivalence partielle}.
\newblock {\em Actualités Scientifiques et Industrielles\/}~{\em 739}, 5--18,
  Translated In: Studies in Inductive and Probability, II. Jeffrey, R. (ed.)
  University of California Press: Berkeley 1980.

\bibitem[\protect\citeauthoryear{Dunson and Park}{Dunson and
  Park}{2008}]{DunsonStickBreaking}
Dunson, D. and J.~Park (2008).
\newblock Kernel stick-breaking processes.
\newblock {\em Biometrika\/}~{\em 95\/}(2), 307--323.

\bibitem[\protect\citeauthoryear{Efron and Morris}{Efron and
  Morris}{1977}]{efron1977stein}
Efron, B. and C.~Morris (1977).
\newblock Stein's paradox in statistics.
\newblock {\em Scientific American\/}~{\em 236\/}(5), 119--127.

\bibitem[\protect\citeauthoryear{Epifani and Lijoi}{Epifani and
  Lijoi}{2010}]{epifani2010nonparametric}
Epifani, I. and A.~Lijoi (2010).
\newblock Nonparametric priors for vectors of survival functions.
\newblock {\em Statistica Sinica\/}~{\em 20\/}(4), 1455--1484.

\bibitem[\protect\citeauthoryear{Escobar and West}{Escobar and
  West}{1995}]{escobar1995bayesian}
Escobar, M.~D. and M.~West (1995).
\newblock {B}ayesian density estimation and inference using mixtures.
\newblock {\em Journal of the American Statistical Association\/}~{\em
  90\/}(430), 577--588.

\bibitem[\protect\citeauthoryear{Favaro, Lijoi, Nava, Nipoti, Pruenster, and
  Teh}{Favaro et~al.}{2016}]{favaro2016stick}
Favaro, S., A.~Lijoi, C.~Nava, B.~Nipoti, I.~Pruenster, and Y.~W. Teh (2016).
\newblock On the stick-breaking representation for homogeneous {NRMI}s.
\newblock {\em Bayesian Analysis\/}~{\em 11\/}(3), 697--724.

\bibitem[\protect\citeauthoryear{Ferguson and Klass}{Ferguson and
  Klass}{1972}]{ferguson1972representation}
Ferguson, T.~S. and M.~J. Klass (1972).
\newblock A representation of independent increment processes without gaussian
  components.
\newblock {\em The Annals of Mathematical Statistics\/}~{\em 43\/}(5),
  1634--1643.

\bibitem[\protect\citeauthoryear{Foti and Williamson}{Foti and
  Williamson}{2013}]{foti2013survey}
Foti, N.~J. and S.~A. Williamson (2013).
\newblock A survey of non-exchangeable priors for {B}ayesian nonparametric
  models.
\newblock {\em IEEE Transactions on Pattern Analysis \& Machine
  Intelligence\/}~{\em 37\/}(2), 359--371.

\bibitem[\protect\citeauthoryear{Franzolini, Cremaschi, Boom, and
  De~Iorio}{Franzolini et~al.}{2023}]{franzolini2022bayesian}
Franzolini, B., A.~Cremaschi, W.~v.~d. Boom, and M.~De~Iorio (2023).
\newblock Bayesian clustering of multiple zero-inflated outcomes.
\newblock {\em Philosophical Transactions of the Royal Society A\/}~{\em 381},
  1--16.

\bibitem[\protect\citeauthoryear{Gao, Bien, and Witten}{Gao
  et~al.}{2020}]{gao2020clusterings}
Gao, L.~L., J.~Bien, and D.~Witten (2020).
\newblock Are clusterings of multiple data views independent?
\newblock {\em Biostatistics\/}~{\em 21\/}(4), 692--708.

\bibitem[\protect\citeauthoryear{Gelfand, Dey, and Chang}{Gelfand
  et~al.}{1992}]{gelfand1992model}
Gelfand, A.~E., D.~K. Dey, and H.~Chang (1992).
\newblock Model determination using predictive distributions with
  implementation via sampling-based methods.
\newblock {\em Technical Report\/}, Department of Statistics, Stanford
  University.

\bibitem[\protect\citeauthoryear{Gong, Liu, Sciurba, Stojanov, Tao, Tseng,
  Zhang, and Batmanghelich}{Gong et~al.}{2021}]{gong2021unpaired}
Gong, M., P.~Liu, F.~C. Sciurba, P.~Stojanov, D.~Tao, G.~C. Tseng, K.~Zhang,
  and K.~Batmanghelich (2021).
\newblock Unpaired data empowers association tests.
\newblock {\em Bioinformatics\/}~{\em 37\/}(6), 785--792.

\bibitem[\protect\citeauthoryear{Griffin, Kolossiatis, and Steel}{Griffin
  et~al.}{2013}]{griffin2013comparing}
Griffin, J.~E., M.~Kolossiatis, and M.~F. Steel (2013).
\newblock Comparing distributions by using dependent normalized random-measure
  mixtures.
\newblock {\em Journal of the Royal Statistical Society: Series B (Statistical
  Methodology)\/}~{\em 75\/}(3), 499--529.

\bibitem[\protect\citeauthoryear{Griffin and Leisen}{Griffin and
  Leisen}{2017}]{griffin2017compound}
Griffin, J.~E. and F.~Leisen (2017).
\newblock Compound random measures and their use in {B}ayesian non-parametrics.
\newblock {\em Journal of the Royal Statistical Society: Series B (Statistical
  Methodology)\/}~{\em 79}, 525--545.

\bibitem[\protect\citeauthoryear{Ishwaran and James}{Ishwaran and
  James}{2001}]{ishwaran2001gibbs}
Ishwaran, H. and L.~F. James (2001).
\newblock {G}ibbs sampling methods for stick-breaking priors.
\newblock {\em Journal of the American Statistical Association\/}~{\em
  96\/}(453), 161--173.

\bibitem[\protect\citeauthoryear{James, Lijoi, and Pr{\"u}nster}{James
  et~al.}{2006}]{james2006conjugacy}
James, L.~F., A.~Lijoi, and I.~Pr{\"u}nster (2006).
\newblock Conjugacy as a distinctive feature of the {D}irichlet process.
\newblock {\em Scandinavian Journal of Statistics\/}~{\em 33\/}(1), 105--120.

\bibitem[\protect\citeauthoryear{James, Lijoi, and Pr{\"u}nster}{James
  et~al.}{2009}]{james2009posterior}
James, L.~F., A.~Lijoi, and I.~Pr{\"u}nster (2009).
\newblock Posterior analysis for normalized random measures with independent
  increments.
\newblock {\em Scandinavian Journal of Statistics\/}~{\em 36\/}(1), 76--97.

\bibitem[\protect\citeauthoryear{James, Lijoi, and Pr{\"u}nster}{James
  et~al.}{2010}]{james2010posterior}
James, L.~F., A.~Lijoi, and I.~Pr{\"u}nster (2010).
\newblock On the posterior distribution of classes of random means.
\newblock {\em Bernoulli\/}~{\em 16\/}(1), 155--180.

\bibitem[\protect\citeauthoryear{Kingman}{Kingman}{1967}]{kingman1967completely}
Kingman, J. (1967).
\newblock Completely random measures.
\newblock {\em Pacific Journal of Mathematics\/}~{\em 21\/}(1), 59--78.

\bibitem[\protect\citeauthoryear{Kingman}{Kingman}{1993}]{kingmanpoisson}
Kingman, J. (1993).
\newblock {\em Poisson {P}rocesses}.
\newblock Clarendon Press, Oxford.

\bibitem[\protect\citeauthoryear{Lee, S{\ae}ther, and Engen}{Lee
  et~al.}{2020}]{lee2020spatial}
Lee, A.~M., B.-E. S{\ae}ther, and S.~Engen (2020).
\newblock Spatial covariation of competing species in a fluctuating
  environment.
\newblock {\em Ecology\/}~{\em 101\/}(1), e02901.

\bibitem[\protect\citeauthoryear{Lijoi, Mena, and Pr{\"u}nster}{Lijoi
  et~al.}{2005}]{lijoi2005hierarchical}
Lijoi, A., R.~H. Mena, and I.~Pr{\"u}nster (2005).
\newblock Hierarchical mixture modeling with normalized inverse-{G}aussian
  priors.
\newblock {\em Journal of the American Statistical Association\/}~{\em
  100\/}(472), 1278--1291.

\bibitem[\protect\citeauthoryear{Lijoi and Nipoti}{Lijoi and
  Nipoti}{2014}]{lijoi2014class}
Lijoi, A. and B.~Nipoti (2014).
\newblock A class of hazard rate mixtures for combining survival data from
  different experiments.
\newblock {\em Journal of the American Statistical Association\/}~{\em
  109\/}(506), 802--814.

\bibitem[\protect\citeauthoryear{Lijoi, Nipoti, and Pr{\"u}nster}{Lijoi
  et~al.}{2014a}]{lijoi2014bayesian}
Lijoi, A., B.~Nipoti, and I.~Pr{\"u}nster (2014a).
\newblock Bayesian inference with dependent normalized completely random
  measures.
\newblock {\em Bernoulli\/}~{\em 20\/}(3), 1260--1291.

\bibitem[\protect\citeauthoryear{Lijoi, Nipoti, and Pr{\"u}nster}{Lijoi
  et~al.}{2014b}]{lijoi2014dependent}
Lijoi, A., B.~Nipoti, and I.~Pr{\"u}nster (2014b).
\newblock Dependent mixture models: clustering and borrowing information.
\newblock {\em Computational Statistics \& Data Analysis\/}~{\em 71}, 417--433.

\bibitem[\protect\citeauthoryear{Lijoi and Pr{\"u}nster}{Lijoi and
  Pr{\"u}nster}{2010}]{lijoi2010models}
Lijoi, A. and I.~Pr{\"u}nster (2010).
\newblock Models beyond the {D}irichlet process.
\newblock In {\em Bayesian nonparametrics (Hjort, N.L., Holmes, C.C.,
  M{\"u}ller, P., Walker, S.G. Eds.)}, pp.\  80--136. Cambridge University
  Press, Cambridge.

\bibitem[\protect\citeauthoryear{Lo}{Lo}{1984}]{lo1984class}
Lo, A.~Y. (1984).
\newblock On a class of {B}ayesian nonparametric estimates: {I}. {D}ensity
  estimates.
\newblock {\em The Annals of Statistics\/}~{\em 12\/}(1), 351--357.

\bibitem[\protect\citeauthoryear{Lorenz, Levy, and Datta}{Lorenz
  et~al.}{2018}]{lorenz2018inferring}
Lorenz, D.~J., S.~Levy, and S.~Datta (2018).
\newblock Inferring marginal association with paired and unpaired clustered
  data.
\newblock {\em Statistical methods in medical research\/}~{\em 27\/}(6),
  1806--1817.

\bibitem[\protect\citeauthoryear{MacEachern}{MacEachern}{1999}]{maceachern1999dependent}
MacEachern, S.~N. (1999).
\newblock Dependent nonparametric processes.
\newblock {\em In ASA Proceedings of the Section on Bayesian Statistical
  Science.\/}, Alexandria, VA: American Statistical Association.

\bibitem[\protect\citeauthoryear{MacEachern}{MacEachern}{2000}]{maceachern2000dependent}
MacEachern, S.~N. (2000).
\newblock Dependent {D}irichlet processes.
\newblock {\em Technical Report\/}, The Ohio State University.

\bibitem[\protect\citeauthoryear{M{\"u}ller, Quintana, and Rosner}{M{\"u}ller
  et~al.}{2004}]{muller2004method}
M{\"u}ller, P., F.~Quintana, and G.~Rosner (2004).
\newblock A method for combining inference across related nonparametric
  {B}ayesian models.
\newblock {\em Journal of the Royal Statistical Society: Series B (Statistical
  Methodology)\/}~{\em 66\/}(3), 735--749.

\bibitem[\protect\citeauthoryear{M{\"u}ller, Quintana, Jara, and
  Hanson}{M{\"u}ller et~al.}{2015}]{muller2015bayesian}
M{\"u}ller, P., F.~A. Quintana, A.~Jara, and T.~Hanson (2015).
\newblock {\em Bayesian nonparametric data analysis}.
\newblock Springer.

\bibitem[\protect\citeauthoryear{Neal}{Neal}{2000}]{neal2000markov}
Neal, R.~M. (2000).
\newblock Markov chain sampling methods for {D}irichlet process mixture models.
\newblock {\em Journal of Computational and Graphical Statistics\/}~{\em
  9\/}(2), 249--265.

\bibitem[\protect\citeauthoryear{Papaspiliopoulos and Roberts}{Papaspiliopoulos
  and Roberts}{2008}]{papa08}
Papaspiliopoulos, O. and G.~O. Roberts (2008).
\newblock Retrospective {M}arkov chain {M}onte {C}arlo methods for {D}irichlet
  process hierarchical models.
\newblock {\em Biometrika\/}~{\em 95\/}(1), 169--186.

\bibitem[\protect\citeauthoryear{Petralia, Rao, and Dunson}{Petralia
  et~al.}{2012}]{petralia2012repulsive}
Petralia, F., V.~Rao, and D.~B. Dunson (2012).
\newblock Repulsive mixtures.
\newblock In {\em Advances in Neural Information Processing Systems - NIPS}.

\bibitem[\protect\citeauthoryear{Quinlan, Quintana, and Page}{Quinlan
  et~al.}{2017}]{QuintanaRepulsive}
Quinlan, J.~J., F.~A. Quintana, and G.~L. Page (2017).
\newblock Parsimonious hierarchical modeling using repulsive distributions.
\newblock {\em arXiv preprint arXiv:1701.04457\/}.

\bibitem[\protect\citeauthoryear{Quintana, M\"uller, Jara, and
  MacEachern}{Quintana et~al.}{2022}]{QuintanaDDP}
Quintana, F., P.~M\"uller, A.~Jara, and S.~MacEachern (2022).
\newblock The dependent {D}irichlet process and related models.
\newblock {\em Statistical Science\/}~{\em 37\/}(1), 24–41.

\bibitem[\protect\citeauthoryear{Regazzini, Lijoi, and Pr{\"u}nster}{Regazzini
  et~al.}{2003}]{regazzini2003distributional}
Regazzini, E., A.~Lijoi, and I.~Pr{\"u}nster (2003).
\newblock Distributional results for means of normalized random measures with
  independent increments.
\newblock {\em The Annals of Statistics\/}~{\em 31\/}(2), 560--585.

\bibitem[\protect\citeauthoryear{Rigon and Durante}{Rigon and
  Durante}{2021}]{rigon2021ogit}
Rigon, T. and D.~Durante (2021).
\newblock Logit stick-breaking priors for bayesian density regression.
\newblock {\em Journal of Statistical Planning and inference\/}~{\em 211},
  131--142.

\bibitem[\protect\citeauthoryear{Riva-Palacio and Leisen}{Riva-Palacio and
  Leisen}{2021}]{RivaPalacio2021}
Riva-Palacio, A. and F.~Leisen (2021).
\newblock Compound vectors of subordinators and their associated positive
  {L\'e}vy copulas.
\newblock {\em Journal of Multivariate Analysis\/}~{\em 183}, 104728.

\bibitem[\protect\citeauthoryear{Rodriguez and Dunson}{Rodriguez and
  Dunson}{2011}]{DunsonProbitStickBreaking}
Rodriguez, A. and D.~Dunson (2011).
\newblock Nonparametric {B}ayesian models through probit stick-breaking
  processes.
\newblock {\em Bayesian Analysis\/}~{\em 6\/}(1), 145--178.

\bibitem[\protect\citeauthoryear{Rodriguez, Dunson, and Gelfand}{Rodriguez
  et~al.}{2008}]{rodriguez2008nested}
Rodriguez, A., D.~B. Dunson, and A.~E. Gelfand (2008).
\newblock The nested {D}irichlet process.
\newblock {\em Journal of the American Statistical Association\/}~{\em
  103\/}(483), 1131--1154.

\bibitem[\protect\citeauthoryear{Sethuraman}{Sethuraman}{1994}]{sethuraman1994constructive}
Sethuraman, J. (1994).
\newblock A constructive definition of {D}irichlet priors.
\newblock {\em Statistica sinica\/}~{\em 4\/}(2), 639--650.

\bibitem[\protect\citeauthoryear{Snijders and Bosker}{Snijders and
  Bosker}{2012}]{tom2012multilevel}
Snijders, T. and R.~Bosker (2012).
\newblock Multilevel analysis.
\newblock {\em Netherlands: SAGE Publications\/}.

\bibitem[\protect\citeauthoryear{Teh, Jordan, Beal, and Blei}{Teh
  et~al.}{2006}]{teh2006hierarchical}
Teh, Y.~W., M.~I. Jordan, M.~J. Beal, and D.~M. Blei (2006).
\newblock Hierarchical {D}irichlet processes.
\newblock {\em Journal of the American Statistical Association\/}~{\em
  101\/}(476), 1566--1581.

\bibitem[\protect\citeauthoryear{{van Buuren} and Groothuis-Oudshoorn}{{van
  Buuren} and Groothuis-Oudshoorn}{2011}]{mice}
{van Buuren}, S. and K.~Groothuis-Oudshoorn (2011).
\newblock {mice}: Multivariate imputation by chained equations in {R}.
\newblock {\em Journal of Statistical Software\/}~{\em 45\/}(3), 1--67.

\bibitem[\protect\citeauthoryear{Walker}{Walker}{2007}]{walker07}
Walker, S.~G. (2007).
\newblock Sampling the {D}irichlet mixture model with slices.
\newblock {\em Comm. Statist. Simulation Comput.\/}~{\em 36\/}(1-3), 45--54.

\end{thebibliography}
	

\newpage
\newgeometry{top=20mm, bottom=20mm, left=20mm, right=20mm,
	foot=10mm, marginparsep=0mm}

\fancypagestyle{mypagestyle}{
	\fancyhf{}
	\fancyfoot[C]{A--\thepage}  
	\renewcommand{\headrulewidth}{0pt}
	\renewcommand{\footrulewidth}{0pt}
}
\makeatletter
\let\ps@plain\ps@mypagestyle
\makeatother

\pagestyle{mypagestyle}
\setcounter{page}{1}

\begin{center}
	{{\LARGE \textbf{Appendix} }}
\end{center}

\normalsize
\setcounter{section}{0}
\renewcommand\thesection{S\arabic{section}}
\setcounter{figure}{0}
\renewcommand{\thefigure}{S\arabic{figure}}

\section{Proofs}
\subsection{Proofs of Section 1}

\bigskip

\begin{proof}[Proof of Proposition 1]
	Consider two partially exchangeable sequences $X$ and $Y$ whose elements take value in $\mathbb{R}$. By de Finetti's representation theorem, there exist two random probability measures $\tilde p_1$ and $\tilde p_2$ such that
	\[
	\left( X_i, Y_j \right) \mid \tilde p_1, \tilde p_2 \overset{\text{iid}}{\sim} \tilde p_1 \times \tilde p_2.
	\]
	Note that $
	\mbox{cov}(X_i,Y_j)=\mathbb{E}\{\mbox{cov}(X_i,Y_j\mid\tilde p_1,\tilde p_2)\} + \mbox{cov}\{\mathbb{E}(X_i\mid\tilde p_1),\mathbb{E}(Y_j\mid\tilde p_2)\}$,
	where the first term equals 0, so that
	\[
	\mbox{cov}(X_i,Y_j)=\mbox{cov}\left(\int x\,\tilde p_1(dx),\int x\,\tilde p_2(dx)\right),
	\]
	and analogously
	\[
	\mbox{cov}(X_i,X_{i'})=\mbox{cov}\left(\int x\,\tilde p_1(dx),\int x\,\tilde p_1(dx)\right)=\mbox{var}\left(\int x\,\tilde p_1(dx)\right).
	\]
	Lastly assume that $\tilde p_1\overset{d}{=}\tilde p_2$, where $\overset{d}{=}$ indicates equality in distribution. By the Cauchy-Schwartz inequality 
	\[
	- \mbox{var}\left(\int x\,\tilde p_1(dx)\right)\leq \mbox{cov}\left(\int x\,\tilde p_1(dx),\int x\,\tilde p_2(dx)\right) \leq \mbox{var}\left(\int x\,\tilde p_1(dx)\right),
	\] 
	which, in terms of the observables, can be equivalently rewritten as 
	\[
	-\mbox{cov}(X_i,X_{i'})\leq \mbox{cov}(X_i,Y_j)\leq \mbox{cov}(X_i,X_{i'}).
	\]
\end{proof}

\bigskip

\begin{proof}[Proof of Proposition 2]
	By definition of covariance we have 
	\[
	\text{cov}(X_i, Y_j) = \text{cov}\left(\sum_{j \geq 1}J_j\theta_j, \sum_{k \geq 1}W_k\phi_k \right) = \sum_{j \geq 1}\sum_{k \geq 1}\text{cov}\left(J_j\theta_j, W_k\phi_k \right).
	\]
	For arbitrary $j$ and $k$ we have
	\[
	E\left(J_jW_k\theta_j\phi_k \right) = E(J_jW_k)E(\theta_j \phi_k) \geq E(J_jW_k)E(\theta_j)E( \phi_k),
	\]
	since cov$(\theta_j, \phi_k) \geq 0$. Denoting $c = E(\theta_j) = E( \phi_k)$, we get
	\[
	\text{cov}\left(J_j\theta_j, W_k\phi_k \right) \geq c^2\text{cov}(J_j, W_k). 
	\]
	Finally, since $\tilde p_1$ and $\tilde p_2$ are random probability measures it holds
	\[
	\text{cov}(X_i, Y_j) \geq c^2\text{cov}\left(\sum_{j \geq 1}J_j, \sum_{k \geq 1}W_k\right) = 0,
	\]
	which completes the proof.
\end{proof}
\subsection{Proofs of Section 2}

\bigskip

\begin{proof}[Proof of Proposition 3]
	Recall that 
	\[
	\beta :=  \sum_{k \geq 1}\E(\bar{J}^2_k) = \sum_{k \geq 1} \E(\bar{W}^2_k)\qquad 
	\gamma:= \sum_{k \geq 1} \E(\bar{J}_k\bar{W}_k).
	\]
	Since
	\[
	\E(\bar{J}_k\bar{W}_k) \leq \sqrt{\E(\bar{J}^2_k)\E(\bar{W}^2_k)}=\E(\bar{J}^2_k)
	\]
	it follows that $\gamma\leq\beta$. Moreover, the equality holds if and only if $\bar J_k\overset{a.s}{=}a_k+\bar W_k$, for any $k$, with $a_k\in \mathbb{R}$. However the equality of marginal distributions implies $a_k=0$.
\end{proof}

\bigskip

\begin{proof}[of Proposition 4]
	Recall that
	\[
	\text{cov}(X_i, Y_j) = \text{cov}\left(\sum_{k \geq 1}\bar J_k\theta_k, \sum_{h \geq 1}\bar W_h\phi_h\right) = \sum_{k \geq 1}\sum_{h \geq 1}\text{cov}\left(\bar J_k\theta_k, \bar W_h\phi_h \right).
	\]
	and for arbitrary $k$ and $h$, we have 
	\begin{equation*}
		\begin{aligned}
			\mathbb{E}(\bar J_k \bar W_{h}\theta_k\phi_{h})=&
			\E(\bar J_k \bar W_{h})\E(\theta_k\phi_{h})\\
			=&\E(\bar J_k \bar W_{h}) 
			\left[\E(\theta_k\phi_{k})\mathbbm{1}_{\{k=h\}} + \E(\theta_k)\E(\phi_{h})\mathbbm{1}_{\{k\neq h\}}\right],
		\end{aligned}
	\end{equation*}
	while 
	\[
	\mathbb{E}(\bar J_k \theta_k) = \E(\bar J_k ) \E(\theta_k).
	\]
	Thus, setting $c = E(\theta_k) = E( \phi_h)$, we have
	\begin{equation*}
		\text{cov}(X_i, Y_j)
		=
		\sum_{k \geq 1} \E(\bar J_k \bar W_{h}) \E(\theta_k\phi_{k}) 
		- c^2\sum_{k \geq 1} \E(\bar J_k) \E(\bar W_{k}) +
		+c^2\sum_{k \geq 1}\sum_{h \neq k} \text{cov}\left(\bar J_k, \bar W_{h}\right) 
	\end{equation*}
	where 
	\begin{equation*}
		\begin{aligned}
			\sum_{k \geq 1}\sum_{h \neq k}\text{cov}\left(\bar J_k, \bar W_{h}\right) =& 
			\text{cov}\left(\sum_{k \geq 1}\bar J_k, \sum_{h \geq 1 }\bar W_{h}\right)-\sum_{k\geq1}\text{cov} \left(\bar J_k \bar W_k \right)\\
			=&  - \sum_{k\geq1} \E(\bar J_k \bar W_{h})  + \sum_{k\geq1} \E(\bar J_k) \E(\bar W_{k})
		\end{aligned}
	\end{equation*}
	Putting everything together we obtain
	\begin{equation*}
		\begin{aligned}
			\text{cov}(X_i, Y_j) = \sum_{k \geq 1} \E[\bar J_k \bar W_{k}]  \text{cov}(\theta_k,\phi_k).
		\end{aligned}
	\end{equation*}
	Moreover 
	\[
	\mbox{var}(X_i) = \mbox{var}(Y_j) = \int\int x\, G_0(\d x, \d y) = \text{\mbox{var}}(\theta_k)
	\]
	Thus, $\mbox{corr}(X_i,Y_j)=\gamma\,\rho_0$ proving the second statement in Proposition 4.
	Finally, applying the same procedure marginally, we get 
	\[
	\text{cov}(X_i, X_i') = \sum_{k \geq 1} \E(\bar J_k ^2)\,\text{\mbox{var}}(\theta_k) 
	\]
	which proves the first statement in Proposition 4.
\end{proof}

\bigskip

\begin{proof}[Proof of Corollary 1]
	The result immediately follows 
	from Propositions 3 and 4.
\end{proof}

\bigskip

\begin{proof}[Proof of Proposition 5]
	Let $\beta$ be the probability of a tie. By definition 
	we get
	\[
	\begin{aligned}
		\P \left(X_1 \in A, X_2 \in B \right) =&\, \P (X_1 \in A, X_2 \in B \mid X_1=X_2) \beta + \\
		&+ 
		\P (X_1 \in A, X_2 \in B \mid X_1\neq X_2) (1-\beta),
	\end{aligned}
	\]
	which, by independence of the atoms, equals
	\[
	\begin{aligned}
		\P \left(X_1 \in A, X_2 \in B \right) =&\, \P (X_1 \in A\in B) \beta + \\
		&+ 
		\P (X_1 \in A) \P( X_2 \in B) (1-\beta).
	\end{aligned}
	\]
	Analogously, we have 
	\[
	\begin{aligned}
		\P \left(X_1 \in A, Y_1 \in B \right) =&\, \P (X_1 \in A, Y_1 \in B \mid X_1 \text{ and } Y_1 \text{ form an hyper-tie }) \gamma  +\\
		&+ 
		\P (X_1 \in A, Y_1 \in B \mid X_1 \text{ and } Y_1 \text{ do not form an hyper-tie })(1-\gamma),
	\end{aligned}
	\]
	where $\gamma$ is the probability of a hyper-tie, which equals
	\[
	\begin{aligned}
		\P \left(X_1 \in A, Y_1 \in B \right) =&\, \P ( (X_1,Y_1) \in A\times B \mid X_1 \text{ and } Y_1 \text{ form an hyper-tie } ) \gamma + \\
		&+ 
		\P (X_1 \in A) \P( Y_1 \in B) (1-\gamma).
	\end{aligned}
	\]
\end{proof}

\subsection{Proofs of Section 4}

\begin{proof}[Proof of Proposition 6]
	The first point follows from the L\'evy-Khintchine representation of the Laplace functional of a CRV. As 
	for (ii), one has
	\begin{equation*}
		\begin{aligned}
			\E \left( \e \{ -\lambda_1\tilde{\mu}_1(A) -\lambda_2\tilde{\mu}_2(B) \} \right)  =&\, \E \left( \e \{ -\lambda_1\mu_1(A \times \X) -\lambda_2\mu_2(\X \times B) \} \right)\\
			=&\,\E \bigl( \e \{ -\lambda_1\mu_1(A \times B^c) - \lambda_1\mu_1(A \times B)+\\
			&\,-\lambda_2\mu_2(A^c \times B) -\lambda_2\mu_2(A \times B)\} \bigr).
		\end{aligned}
	\end{equation*}
	By independence of evaluations on disjoint sets, 
	$\mu_1(C)$ and $\mu_2(D)$ are independent if $C \cap D = \emptyset$, so that the right hand side reads
	\begin{equation*}
		\begin{aligned}
			\E \bigl( \e \{ -\lambda_1\tilde{\mu}_1(A) -\lambda_2\tilde{\mu}_2(B) \} \bigr)  =&  \E \left( \e \{ -\lambda_1\mu_1(A \times B^c) \} \right) \E \left( \e \{ -\lambda_2\mu_2(A^c \times B) \} \right)\times \\
			&\times\E \left( \e \{ -\lambda_1\mu_1(A \times B)- \lambda_2\mu_2(A \times B) \} \right).
		\end{aligned}
	\end{equation*}
	The result follows upon 
	upon using the expressions of the marginal and joint Laplace exponents of $\mu_1$ and $\mu_2$. Since from the joint L\'evy intensity it is possible to recover the joint Laplace exponent, 
	(iii) is also proved.
\end{proof}

In order to prove Proposition 7, we 
show that
\[
\P \left( X \in A, Y \in B\right) = P_0(A)P_0(B)\left( 1-\delta \right) + G_0(A \times B)\delta,
\]
where
\[
\delta := -\int_{\R_+^2} \left\{ \frac{\partial^2}{\partial u_1 \partial u_2} \psi_b(u_1, u_2) \right\}e^{- \psi_b(u_1, u_2)}\, \d u_1 \d u_2.
\]
is the probability of a pseudo-tie. 
We start with three Lemmas.

\begin{lemma}\label{tech1}
	If $\psi_b$ is the joint Laplace exponent of a CRV, then
	\begin{equation*}
		\begin{aligned}
			\int_{\R_+^2}& \left\{ \frac{\partial}{\partial u_1}\psi_b(u_1, u_2)\right\}\left\{\frac{\partial}{\partial u_2}\psi_b(u_1, u_2)\right\}e^{- \psi_b(u_1, u_2)}\, \d u_1 \d u_2 = 1-\delta.
		\end{aligned}
	\end{equation*}
\end{lemma}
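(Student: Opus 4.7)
The plan is to rewrite the integrand on the left-hand side as a mixed partial derivative and then evaluate the resulting boundary integral directly. A quick computation with the chain rule gives
\[
\frac{\partial^2}{\partial u_1\partial u_2}\,e^{-\psi_b(u_1,u_2)} = \left[\frac{\partial\psi_b}{\partial u_1}\frac{\partial\psi_b}{\partial u_2} - \frac{\partial^2\psi_b}{\partial u_1\partial u_2}\right]e^{-\psi_b(u_1,u_2)},
\]
so that the sum of the integrand on the left-hand side of the claim and the integrand defining $\delta$ is exactly $\partial^2 e^{-\psi_b}/(\partial u_1\partial u_2)$. Hence proving the lemma reduces to showing that
\[
\int_{\R_+^2}\frac{\partial^2}{\partial u_1\partial u_2}\,e^{-\psi_b(u_1,u_2)}\,\d u_1\,\d u_2 = 1.
\]

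I would evaluate this iterated integral by Fubini. Integrating first in $u_1$ on $(0,\infty)$ yields $e^{-\psi_b(\infty,u_2)}-e^{-\psi_b(0,u_2)}$. Since $\psi_b(0,u_2)$ coincides with the marginal Laplace exponent $\psi(u_2)$ of $\mu_2$ (this follows by setting $\lambda_1=0$ in the L\'evy--Khintchine formula, as recorded in Proposition~6(i)) and $\psi_b(u_1,u_2)\to\infty$ as $u_1\to\infty$ for any $u_2>0$, the inner integral equals $-e^{-\psi(u_2)}$. Differentiating in $u_2$ and integrating on $(0,\infty)$ then gives
\[
-\int_0^\infty\frac{d}{du_2}\,e^{-\psi(u_2)}\,\d u_2 = e^{-\psi(0)}-\lim_{u_2\to\infty}e^{-\psi(u_2)} = 1-0 = 1,
\]
using $\psi(0)=0$ and $\psi(u_2)\to\infty$ as $u_2\to\infty$.

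The remaining obstacle is purely technical and consists in justifying (i) the exchange of partial derivatives with the integrals defining $\psi_b$, (ii) Fubini's theorem, and (iii) the boundary behaviour $\psi_b(u_1,u_2)\to\infty$ whenever one argument diverges. Each of these follows from the standard regularity assumptions on homogeneous CRV L\'evy intensities used throughout the paper: $\rho$ is a L\'evy measure on $\R_+^2\setminus\{0\}$ satisfying $\int\min\{1,\|s\|\}\,\rho(\d s_1,\d s_2)<\infty$, $\theta=\alpha(\X^2)<\infty$, and $\psi_b$ is smooth in the interior of $\R_+^2$ with $\psi_b(0,0)=0$ and strictly monotone in each argument, which guarantees the required growth at infinity. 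Under these hypotheses both the differentiation under the integral and the application of Fubini follow by dominated convergence, completing the proof.
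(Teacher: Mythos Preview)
Your proof is correct and is essentially the same argument as the paper's, just reorganized: the paper integrates by parts in $u_1$ (writing $\partial_{u_1}\psi_b\,e^{-\psi_b}=-\partial_{u_1}e^{-\psi_b}$) to produce the boundary term $\partial_{u_2}\psi_b(0,u_2)\,e^{-\psi_b(0,u_2)}$ plus the $\delta$ integral, and then applies the fundamental theorem of calculus in $u_2$; your chain-rule identity for $\partial^2 e^{-\psi_b}/\partial u_1\partial u_2$ is precisely this integration by parts written as a product rule, and your iterated evaluation of the boundary integral is the same closing step.
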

\begin{proof}[Proof of Lemma S2.1]
	Integrating by parts
	\begin{equation*}
		\begin{aligned}
			\int_0^\infty& \left\{ \frac{\partial}{\partial u_1}\psi_b(u_1, u_2)\right\}\left\{\frac{\partial}{\partial u_2}\psi_b(u_1, u_2)\right\}e^{- \psi_b(u_1, u_2)}\, \d u_1\\
			&= -\int_0^\infty \left\{ \frac{\partial}{\partial u_2}\psi_b(u_1, u_2)\right\}\left\{\frac{\partial}{\partial u_1}e^{-\psi_b(u_1, u_2)}\right\}\, \d u_1\\
			&=  \biggl[\left[-\left\{ \frac{\partial}{\partial u_2}\psi_b(u_1, u_2)\right\}e^{- \psi_b(u_1, u_2)} \right]_0^\infty +\int_0^\infty \left\{ \frac{\partial^2}{\partial u_1 \partial u_2} \psi_b(u_1, u_2) \right\}e^{- \psi_b(u_1, u_2)}\, \d u_1 \biggr]\\
			& =  \biggl[\left\{ \frac{\partial}{\partial u_2}\psi_b(0, u_2)\right\}e^{- \psi_b(0, u_2)}+\int_0^\infty \left\{ \frac{\partial^2}{\partial u_1 \partial u_2} \psi_b(u_1, u_2) \right\}e^{- \psi_b(u_1, u_2)}\, \d u_1 \biggr].
		\end{aligned}
	\end{equation*}
	Note that $\int_0^\infty \left\{ \frac{\d}{\d u_2}\psi_b(0, u_2)\right\}e^{- \psi_b(0, u_2)} \, \d u_2 = 1$, by the fundamental theorem of calculus. Thus the result follows immediately.
\end{proof}

\begin{lemma}\label{tech2}
	We have
	\begin{equation*}
		\begin{aligned}
			\int_{\R_+^2}&\E \left(e^{-u_1\mu_1(\X  \times \X)-u_2\mu_2(\X \times \X)}\mu_1(C)\mu_2(C)\right)\, \d u_1 \d u_2 =  G_0(C)^2\left( 1-\delta \right) + G_0(C)\delta.
		\end{aligned}
	\end{equation*}
\end{lemma}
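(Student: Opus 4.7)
The plan is to compute the integrand by introducing auxiliary test functions into the Laplace functional of $(\mu_1,\mu_2)$ and then differentiating to pull down the factor $\mu_1(C)\mu_2(C)$. Specifically, for fixed $u_1,u_2\geq 0$ and variable $\lambda_1,\lambda_2\geq 0$, I would consider
\[
\Phi(\lambda_1,\lambda_2):=\E\bigl[\e\{-u_1\mu_1(\X^2)-u_2\mu_2(\X^2)-\lambda_1\mu_1(C)-\lambda_2\mu_2(C)\}\bigr]
\]
and note that $\partial^2\Phi/(\partial\lambda_1\partial\lambda_2)\bigm|_{\lambda_1=\lambda_2=0}$ equals the expectation appearing on the left-hand side of the lemma.

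Using the L\'evy--Khintchine representation \eqref{Levy_bivariate} with $\alpha=\theta G_0$ and splitting the spatial integral over $C$ and $C^c$, one obtains
\[
\Phi(\lambda_1,\lambda_2)=\exp\bigl\{-G_0(C^c)\,\psi_b(u_1,u_2)-G_0(C)\,\psi_b(u_1+\lambda_1,u_2+\lambda_2)\bigr\},
\]
where I have used that $\psi_b(u_1,u_2)=\theta\int(1-e^{-u_1s_1-u_2s_2})\rho(\d s_1,\d s_2)$ and $G_0(C)+G_0(C^c)=1$. A straightforward differentiation then yields
\[
\frac{\partial^2\Phi}{\partial\lambda_1\partial\lambda_2}\biggm|_{0}=\Bigl[-G_0(C)\,\partial_1\partial_2\psi_b(u_1,u_2)+G_0(C)^2\,\partial_1\psi_b(u_1,u_2)\,\partial_2\psi_b(u_1,u_2)\Bigr]e^{-\psi_b(u_1,u_2)}.
\]

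Integrating the above over $(u_1,u_2)\in\R_+^2$ and exchanging the order of integration (justified by non-negativity) gives two terms. The first is
\[
-G_0(C)\int_{\R_+^2}\partial_1\partial_2\psi_b(u_1,u_2)\,e^{-\psi_b(u_1,u_2)}\,\d u_1\d u_2=G_0(C)\,\delta,
\]
directly by the definition of $\delta$. The second is
\[
G_0(C)^2\int_{\R_+^2}\partial_1\psi_b(u_1,u_2)\,\partial_2\psi_b(u_1,u_2)\,e^{-\psi_b(u_1,u_2)}\,\d u_1\d u_2=G_0(C)^2(1-\delta),
\]
by Lemma~S2.1. Summing gives exactly $G_0(C)^2(1-\delta)+G_0(C)\delta$, as claimed.

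The main technical point is simply to justify the interchange of differentiation and integration (routine by dominated convergence, since the relevant derivatives are dominated by polynomial-times-exponential factors arising from the CRV moments) and to keep the $\theta$ normalization absorbed consistently inside $\psi_b$; the computational core is a one-line differentiation followed by invoking Lemma~S2.1.
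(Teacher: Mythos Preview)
Your proof is correct and follows essentially the same route as the paper. The paper splits $\X^2=C\cup C^c$, uses independence on disjoint sets to factorize the expectation, and then differentiates $e^{-G_0(C)\psi_b(u_1,u_2)}$ with respect to $u_1,u_2$ to pull down $\mu_1(C)\mu_2(C)$; you achieve the identical differentiation by introducing auxiliary parameters $\lambda_1,\lambda_2$ and differentiating $\Phi$ at zero, which is a slightly cleaner bookkeeping device but leads to the same expression $\bigl[-G_0(C)\partial_1\partial_2\psi_b+G_0(C)^2\partial_1\psi_b\,\partial_2\psi_b\bigr]e^{-\psi_b}$ and the same appeal to Lemma~S2.1.
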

\begin{proof}[Proof of Lemma S2.2]
	By independence of evaluations on disjoint sets it follows that
	\begin{equation*}
		\begin{aligned}
			&\int_{\R_+^2}\E \left(e^{-u_1\mu_1(\X  \times \X)-u_2\mu_2(\X \times \X)}\mu_1(C)\mu_2(C)\right)\, \d u_1 \d u_2\\
			&= \int_{\R_+^2}\E \left(e^{-u_1\mu_1(C)-u_2\mu_2(C)-u_1\mu_1(C^c)-u_2\mu_2(C^c)}
			\mu_1(C)\mu_2(C)\right)\, \d u_1 \d u_2 \\
			&=\int_{\R_+^2}\E \left(e^{-u_1\mu_1(C)-u_2\mu_2(C)}\mu_1(C)\mu_2(C)\right\}\E \left( e^{-u_1\mu_1(C^c)-u_2\mu_2(C^c)} \right) \, \d u_1 \d u_2 \\
			& = \int_{\R_+^2}\E \left(\frac{\partial }{\partial u_1}\frac{\partial }{\partial u_2}e^{-u_1\mu_1(C)-u_2\mu_2(C)}\right)\E \left( e^{-u_1\mu_1(C^c)-u_2\mu_2(C^c)} \right) \, \d u_1 \d u_2 \\
			& = \int_{\R_+^2}\frac{\partial }{\partial u_1}\frac{\partial }{\partial u_2}\left[ \E \left(e^{-u_1\mu_1(C)-u_2\mu_2(C)}\right) \right]\E \left( e^{-u_1\mu_1(C^c)-u_2\mu_2(C^c)} \right) \, \d u_1 \d u_2\\
			&= \int_{\R_+^2}\frac{\partial }{\partial u_1}\frac{\partial }{\partial u_2} \left\{e^{-G_0(C)\psi_b(u_1, u_2)} \right\} e^{-G_0(C^c)\psi_b(u_1, u_2)}\, \d u_1 \d u_2\\
			& = \int_{\R_+^2}\frac{\partial }{\partial u_1} \left\{-G_0(C)\frac{\partial}{\partial u_2}\psi_b(u_1, u_2)e^{-G_0(C)\psi_b(u_1, u_2)} \right\}e^{-G_0(C^c)\psi_b(u_1, u_2)}\, \d u_1 \d u_2.
		\end{aligned}
		\]
		Performing the derivative with respect to $u_1$, the latter expression can be written as follows
		\[
		\begin{aligned}
			& = \int_{\R_+^2} \left\{G_0(C)^2\frac{\partial}{\partial u_1}\psi_b(u_1, u_2)\frac{\partial}{\partial u_2}\psi_b(u_1, u_2)\right\}e^{-G_0(C)\psi_b(u_1, u_2)}e^{-G_0(C^c)\psi_b(u_1, u_2)}\, \d u_1 \d u_2 +\\
			&+ \int_{\R_+^2} \left\{-G_0(C)\frac{\partial}{\partial u_1 \partial u_2}\psi_b(u_1, u_2)\right\}e^{-G_0(C)\psi_b(u_1, u_2)}e^{-G_0(C^c)\psi_b(u_1, u_2)}\, \d u_1 \d u_2\\
			& = \int_{\R_+^2} \left\{G_0(C)^2\frac{\partial}{\partial u_1}\psi_b(u_1, u_2)\frac{\partial}{\partial u_2}\psi_b(u_1, u_2)\right\}e^{- \psi_b(u_1, u_2)}\, \d u_1 \d u_2 +\\
			& + \int_{\R_+^2} \left\{-G_0(C)\frac{\partial}{\partial u_1 \partial u_2}\psi_b(u_1, u_2)\right\}e^{ \psi_b(u_1, u_2)}\, \d u_1 \d u_2
		\end{aligned}
	\end{equation*}
	By Lemma \ref{tech1} we then obtain
	\begin{equation*}
		\int_{\R_+^2}\E \left(e^{-u_1\mu_1(\X  \times \X)-u_2\mu_2(\X \times \X)}\mu_1(C)\mu_2(C)\right)\, \d u_1 \d u_2 = G_0(C)^2\left( 1-\delta\right) + G_0(C)\delta,
	\end{equation*}
	as desired.
\end{proof}

\begin{lemma}\label{tech3}
	Let $C,D$ be such that $C \cap D = \emptyset$. Then
	\begin{equation*}
		\int_{\R_+^2}\E \left(e^{-u_1\mu_1(\X  \times \X)-u_2\mu_2(\X \times \X)}\mu_1(C)\mu_2(D)\right)\, \d u_1 \d u_2   
		= G_0(C)G_0(D)\left( 1-\delta \right)
	\end{equation*}
\end{lemma}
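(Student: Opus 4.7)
The plan is to mimic the structure of the proof of Lemma S2.2, exploiting the disjointness of $C$ and $D$ to gain stronger independence. Since $C$, $D$ and $(C\cup D)^c$ are pairwise disjoint, by the defining property of the completely random vector $(\mu_1,\mu_2)$ the three two-dimensional vectors $(\mu_1(C),\mu_2(C))$, $(\mu_1(D),\mu_2(D))$ and $(\mu_1((C\cup D)^c),\mu_2((C\cup D)^c))$ are mutually independent. I would first split $\mu_i(\X\times\X)=\mu_i(C)+\mu_i(D)+\mu_i((C\cup D)^c)$ for $i=1,2$, and then factor the exponential into three pieces, each of which is independent of the others.

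Taking expectation and using independence, the joint expectation factors as a product of three marginal expectations. The two ``tagged'' factors can be computed by differentiating under the expectation:
\[
\E\bigl(e^{-u_1\mu_1(C)-u_2\mu_2(C)}\mu_1(C)\bigr)=-\frac{\partial}{\partial u_1}e^{-G_0(C)\psi_b(u_1,u_2)}=G_0(C)\,\frac{\partial\psi_b}{\partial u_1}(u_1,u_2)\,e^{-G_0(C)\psi_b(u_1,u_2)},
\]
and analogously for the factor containing $\mu_2(D)$, producing $G_0(D)\,(\partial_{u_2}\psi_b)\,e^{-G_0(D)\psi_b}$. The third factor is simply $e^{-G_0((C\cup D)^c)\psi_b(u_1,u_2)}$.

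The key simplification is that $G_0(C)+G_0(D)+G_0((C\cup D)^c)=1$, so the three exponentials collapse to $e^{-\psi_b(u_1,u_2)}$. Therefore
\[
\E\bigl(e^{-u_1\mu_1(\X\times\X)-u_2\mu_2(\X\times\X)}\mu_1(C)\mu_2(D)\bigr)=G_0(C)G_0(D)\,\frac{\partial\psi_b}{\partial u_1}\frac{\partial\psi_b}{\partial u_2}\,e^{-\psi_b(u_1,u_2)}.
\]
Integrating both sides over $\R_+^2$ and applying Lemma S2.1 yields exactly $G_0(C)G_0(D)(1-\delta)$, as required.

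There is no real obstacle here: unlike Lemma S2.2, no mixed second-derivative term appears because the disjointness of $C$ and $D$ prevents the emergence of a pseudo-tie contribution. The only step deserving care is the justification of differentiating under the expectation and interchanging differentiation and expectation, which is standard given the existence of the Laplace exponent on $\R_+^2$ and the fact that $\mu_i(C)$ are integrable; this can be handled by the same dominated convergence argument implicit in Lemma S2.2.
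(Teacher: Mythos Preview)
Your proof is correct and follows essentially the same approach as the paper: decompose $\X\times\X$ into $C$, $D$, and $(C\cup D)^c$, use the independence of the completely random vector on disjoint sets to factor the expectation into three pieces, compute the tagged factors via differentiation of the joint Laplace transform, recombine the exponentials using $G_0(C)+G_0(D)+G_0((C\cup D)^c)=1$, and conclude with Lemma~\ref{tech1}. The paper's write-up first groups $C\cup D$ before splitting into $C$ and $D$, but this is a purely cosmetic difference.
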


\begin{proof}[Proof of Lemma S3]
	Let $Y = \left(C \cup D\right)^c$. Since $C$ and $D$ are disjoint, by independence of evaluations on disjoint sets it holds
	\begin{equation*}
		\begin{aligned}
			\int_{\R_+^2}&\E \left(e^{-u_1\mu_1(\X  \times \X)-u_2\mu_2(\X \times \X)}\mu_1(C)\mu_2(D)\right)\, \d u_1 \d u_2  \\
			= \int_{\R_+^2}&\E \left(e^{-u_1\mu_1(C \cup D)-u_2\mu_2(C \cup D)}\mu_1(C)\mu_2(D)\right\}\E \left\{ e^{-u_1\mu_1(Y)-u_2\mu_2(Y)} \right) \, \d u_1 \d u_2 \\
			=\int_{\R_+^2}&\E \left(e^{-u_1\mu_1(C)-u_2\mu_2(C)}\mu_1(C)\right)\E \left(e^{-u_1\mu_1(D)-u_2\mu_2(D)}\mu_2(D)\right)\times\\
			&\times\E \left(e^{-u_1\mu_1(Y)-u_2\mu_2(Y)} \right) \, \d u_1 \d u_2\\
			=\int_{\R_+^2}& \frac{\partial}{\partial u_1}\left\{ e^{-G_0(C) \psi_b(u_1, u_2)}\right\}\frac{\partial}{\partial u_2}\left\{ e^{-G_0(D) \psi_b(u_1, u_2)}\right\}
			e^{-G_0(Y)\psi_b(u_1, u_2)} \, \d u_1 \d u_2\\
			=G_0&(C)G_0(D)\int_{\R_+^2} \left\{\frac{\partial}{\partial u_1}\psi_b(u_1, u_2)\frac{\partial}{\partial u_2}\psi_b(u_1, u_2)\right\}e^{- \psi_b(u_1, u_2)}\, \d u_1 \d u_2
		\end{aligned}
	\end{equation*}
	The result follows by applying Lemma \ref{tech1}.
\end{proof}

\bigskip

\begin{proof}[Proof of Proposition 7]
	We have
	\begin{equation*}
		\begin{aligned}
			&\P \left( X \in A, Y \in B\right) = \E \left( \frac{\tilde{\mu}_1(A)}{\tilde{\mu}_1(\X)}\frac{\tilde{\mu}_2(B)}{\tilde{\mu}_2(\X)} \right) = \E \left( \frac{\mu_1(A \times \X)}{\mu_1(\X \times \X)}\frac{\mu_2(\X \times B)}{\mu_2(\X \times \X)} \right)=\\
			&= \int_{\R_+^2}\E \left(e^{-u_1\mu_1(\X \times \X)-u_2\mu_2(\X \times \X)}\mu_1(A \times \X)\mu_2(\X \times B)\right)\, \d u_1 \d u_2 = \\
			&=  \int_{\R_+^2}\E \biggl( e^{-u_1\mu_1(\X  \times \X)-u_2\mu_2(\X \times \X)}\bigl\{\mu_1(A \times B)\mu_2(A \times B)+\mu_1(A \times B)\mu_2(A^c \times B)+\\
			&+\mu_1(A \times B^c)\mu_2(A \times B)+\mu_1(A \times B^c)\mu_2(A^c \times B)\bigr\} \biggr) \, \d u_1 \d u_2
		\end{aligned}
	\end{equation*}
	We compute each integral separately applying Lemmas \ref{tech2} and \ref{tech3} and obtain 
	\begin{equation}\label{simple_pred}
		\begin{aligned}
			\P \left( X \in A, Y \in B\right) &= G_0(A \times \X)G_0(\X \times B)\left( 1-\delta \right) + G_0(A \times B)\delta\\
			& = P_0(A)P_0(B)\left( 1-\delta \right) + G_0(A \times B)\delta,
		\end{aligned}
	\end{equation}
	as desired. Then the probability of a tie in the product space is given exactly by $\delta$, denoted $\gamma$ in the 
	manuscript. The probability of a tie is given by the particular case $\psi_b(u_1, u_2) = \psi(u_1+u_2)$, since
	\[
	-\int_{\R_+^2} \left\{ \frac{\partial^2}{\partial u_1 \partial u_2} \psi_b(u_1 + u_2) \right\}e^{- \psi_b(u_1 + u_2)}\, \d u_1 \d u_2 = -\int_0^{\infty}\int_0^u \, \d v\left\{ \frac{\partial^2}{\partial u^2} \psi_b(u) \right\}e^{- \psi_b(u)} \, \d u,
	\]
	with the change of variables $u = u_1+u_2$ and $v = u_1$.
\end{proof}

\bigskip

\begin{proof}[Proof of Proposition 8]
	Since
	\[
	\E \left( \tilde{p}_1(A)\tilde{p}_2(B) \right) = \P \left(X \in A, Y \in B \right),
	\]
	by \eqref{simple_pred} we have
	\begin{equation*}
		\E \left( \tilde{p}_1(A)\tilde{p}_2(B) \right) = G_0(A \times \X)G_0(\X \times B)\left( 1-\gamma \right) + G_0(A \times B)\gamma.
	\end{equation*}
	Finally,
	\begin{equation*}
		\begin{aligned}
			\text{cov} \left( \tilde{p}_1(A),\tilde{p}_2(B) \right) &=  G_0(A \times \X)G_0(\X \times B)\left( 1-\gamma \right) + G_0(A \times B)\gamma - G_0(A \times \X)G_0(\X \times B)\\
			&=  \gamma \left\{G_0(A \times B)- G_0(A \times \X)G_0(\X \times B)\right\}.
		\end{aligned}
	\end{equation*}
	From this one also obtains
	\[
	\begin{aligned}
		\text{\mbox{var}} \left( \tilde{p}_1(A) \right) &= \text{cov} \left( \tilde{p}_1(A),\tilde{p}_1(A) \right) = \beta \left\{P_0(A)- P_0(A)^2\right\}\\
		& = \beta P_0(A)\left\{1-P_0(A) \right\},
	\end{aligned}
	\]
	and the desired result follows.
\end{proof}

\bigskip

\subsection{Proofs of Section 5}


\begin{proof}[Proof of Theorem 1]
	We need to compute the conditional Laplace functional of $(\mu_1, \mu_2)$, i.e.
	\[
	\E \left(e^{-\int_{\X^2}h_1(x)\, \mu_1(\d x)-\int_{\X^2}h_2(x)\, \mu_2(\d x)} \mid (X_{i})_{i=1}^{n}, (Y_{j})_{j=1}^{m}  \right),
	\]
	with $h_i \, : \X^2 \, \to  \R^+$ measurable functions. Define $A_j = A_{j,\epsilon} = \left\{x \in \X \mid d(x, X_i^*) < \epsilon \right\}$ and  $B_j = B_{j,\epsilon} = \left\{x \in \X \mid d(x, Y_j^*) < \epsilon \right\}$, with $1 \leq i \leq k$ and $1 \leq j \leq c$, such that $A_i \cap A_j = \emptyset$ and $B_i \cap B_j = \emptyset$ for any $i \neq j$. Moreover, denote
	\[
	A_{k+1} = \left( \cup_{i = 1}^k A_i \right)^c, \quad B_{c+1} = \left( \cup_{i = 1}^c B_i \right)^c.
	\]
	Thus our goal becomes to compute
	\begin{equation}\label{to_compute}
		\begin{aligned}
			&\E \left(e^{-\int_{\X^2}h_1(x)\, \mu_1(\d x)-\int_{\X^2}h_2(x)\, \mu_2(\d x)} \mid (X_{i})_{i=1}^{n}, (Y_{j})_{j=1}^{m} \right)\\
			& = \lim_{\epsilon \to 0} \E \left(e^{-\int_{\X^2}h_1(x)\, \mu_1(\d x)-\int_{\X^2}h_2(x)\, \mu_2(\d x)} \mid \underline{X}_{n}^* \in \times_{j = 1}^kA_j, \underline{Y}_{m}^* \in \times_{j = 1}^cB_j  \right)\\
			& = \lim_{\epsilon \to 0}\frac{\E\left(e^{-\int_{\X^2}h_1(x)\, \mu_1(\d x)-\int_{\X^2}h_2(x)\, \mu_2(\d x)} \prod_{j = 1}^k\tilde{p}_1(A_j)^{n_j}\prod_{j = 1}^c\tilde{p}_2(B_j)^{m_j} \right)}{\E \left( \prod_{j = 1}^k\tilde{p}_1(A_j)^{n_j}\prod_{j = 1}^c\tilde{p}_2(B_j)^{m_j} \right)}.
		\end{aligned}
	\end{equation}
	We start to evaluate
	\[
	\begin{aligned}
		&\E \left(\tilde{p}_1(A_1)^{n_1} \dots \tilde{p}_1(A_k)^{n_k}\tilde{p}_2(B_1)^{m_1}\tilde{p}_2(B_c)^{m_c}\right) =\\
		&= \E \left( \frac{\tilde{\mu}_1(A_1)^{n_1} \dots \tilde{\mu}_1(A_k)^{n_k}\tilde{\mu}_2(B_1)^{m_1}\tilde{\mu}_2(B_c)^{m_c}}{\tilde{\mu}_1(\X)^n\tilde{\mu}_2(\X)^m} \right)\\
		&= \E \left( \frac{\mu_1(A_1 \times \X)^{n_1} \dots \mu_1(A_k \times \X)^{n_k}\mu_2(\X \times B_1)^{m_1}\mu_2(\X \times B_c)^{m_c}}{\mu_1(\X \times \X)^n\mu_2(\X \times \X)^m} \right) = \mathcal{I}.
	\end{aligned}
	\]
	By Netwon's binomial
	\[
	\begin{aligned}
		&\mu_1(A_h \times \X) = \sum_{i_1^h+ \dots i_{c+1}^h = n_h}\binom{n_h}{i_1^h, \dots, i_{c+1}^h} \prod_{r = 1}^{c+1}\mu_1^{i_r^h}(A_h \times B_r), \quad h = 1, \dots, k,\\
		&\mu_2(\X \times B_r) = \sum_{j_1^r+ \dots j_{k+1}^r = m_r}\binom{m_r}{j_1^r, \dots, j_{k+1}^r} \prod_{h = 1}^{k+1}\mu_2^{j_h^r}(A_h \times B_r), \quad r = 1, \dots, c.
	\end{aligned}
	\]
	For ease of notation denote
	\[
	\begin{aligned}
		\sum_{\bm{i}, \bm{j}} \binom{\bm{n}}{\bm{i}}\binom{\bm{m}}{\bm{j}} =& \sum_{i_1^1+ \dots i_{c+1}^1 = n_1}\binom{n_1}{i_1^1, \dots, i_{c+1}^1}\dots \sum_{i_1^{c+1}+ \dots i_{c+1}^{k+1} = n_{k+1}}\binom{n_{k+1}}{i_1^{k+1}, \dots, i_{c+1}^{k+1}}\times\\
		&\times \sum_{j_1^1+ \dots j_{k+1}^1 = m_1}\binom{m_1}{j_1^1, \dots, j_{k+1}^1} \dots \sum_{j_1^{k+1}+ \dots j_{k+1}^{k+1} = m_{k+1}}\binom{m_{k+1}}{j_1^{k+1}, \dots, j_{k+1}^{k+1}}.
	\end{aligned}
	\]
	Thus
	\[
	\mathcal{I} = \sum_{\bm{i}, \bm{j}} \binom{\bm{n}}{\bm{i}}\binom{\bm{m}}{\bm{j}} \mathcal{I}_{\bm{i}, \bm{j}},
	\]
	with
	\[
	\begin{aligned}
		\mathcal{I}_{\bm{i}, \bm{j}} = \E \Biggl( &\frac{\prod_{h = 1}^{k}\prod_{r = 1}^{c}\mu_1^{i_r^h}(A_h \times B_r)\mu_2^{j_h^r}(A_h \times B_r)}{\mu_1(\X \times \X)^n}\times\\
		&\times\frac{\prod_{h = 1}^k\mu_1^{i_{c+1}^h}(A_h\times B_{c+1})\prod_{r = 1}^c\mu_2^{j_{k+1}^r}(A_{k+1}\times B_{r})}{\mu_2(\X \times \X)^m} \Biggr)
	\end{aligned}
	\]
	Letting $\mu_1 := \mu_1(\X \times \X)$ and $\mu_2 := \mu_2(\X \times \X)$, we have
	\[
	\frac{1}{\mu_1(\X \times \X)^n\mu_2(\X \times \X)^m} = \frac{1}{\Gamma(n)\Gamma(m)}\int_{\R_+^2}u_1^{n-1}u_2^{m-1}e^{-u_1\mu_1-u_2\mu_2}\, \d \underline{u},
	\]
	with $\underline{u} = (u_1, u_2)$. Thus, by Fubini's Theorem
	\[
	\begin{aligned}
		\mathcal{I}_{\i, \j} = \int_{\R_+^2}\frac{u_1^{n-1}u_2^{m-1}}{\Gamma(n)\Gamma(m)}&\E \biggl(e^{-u_1\mu_1-u_2\mu_2}\left\{\prod_{h = 1}^{k}\prod_{r = 1}^{c}\mu_1^{i_r^h}(A_h \times B_r)\mu_2^{j_h^r}(A_h \times B_r)\right\}\times\\
		&\times\prod_{h = 1}^k\mu_1^{i_{c+1}^h}(A_h\times B_{c+1})\prod_{r = 1}^c\mu_2^{j_{k+1}^r}(A_{k+1}\times B_{r}) \biggr)\, \d \underline{u} = \\
		& = \int_{\R_+^2}\frac{u_1^{n-1}u_2^{m-1}}{\Gamma(n)\Gamma(m)} \rho_{\i, \j}(\underline{u}) \, \d \underline{u}.
	\end{aligned}
	\]
	By independence of evaluations on disjoint sets we have
	\[
	\begin{aligned}
		\rho_{\i, \j}(\underline{u}) &= \E \Biggl(\left\{\prod_{h = 1}^{k}\prod_{r = 1}^{c}e^{-u_1\mu_1(A_h \times B_r)-u_2\mu_2(A_h \times B_r)}\mu_1^{i_r^h}(A_h \times B_r)\mu_2^{j_h^r}(A_h \times B_r)\right\}\times\\
		&\times\left\{\prod_{h = 1}^ke^{-u_1\mu_1(A_h \times B_{c+1})-u_2\mu_2(A_h \times B_{c+1})}\mu_1^{i_{c+1}^h}(A_h\times B_{c+1})\right\}\times\\
		&\times\left\{\prod_{r = 1}^ce^{-u_1\mu_1(A_{k+1} \times B_r)-u_2\mu_2(A_{k+1} \times B_r)}\mu_2^{j_{k+1}^r}(A_{k+1}\times B_{r})\right\} \Biggr)
	\end{aligned}
	\]
	This can be equivalently written as
	\[
	\begin{aligned}
		\prod_{h = 1}^{k}\prod_{r = 1}^{c}&\E \left( e^{-u_1\mu_1(A_h \times B_r)-u_2\mu_2(A_h \times B_r)}\mu_1^{i_r^h}(A_h \times B_r)\mu_2^{j_h^r}(A_h \times B_r)\right)\times\\
		&\times\prod_{h = 1}^k\E\left(e^{-u_1\mu_1(A_h \times B_{c+1})-u_2\mu_2(A_h \times B_{c+1})}\mu_1^{i_{c+1}^h}(A_h\times B_{c+1})\right)\times\\
		&\times\prod_{r = 1}^c\E\left(e^{-u_1\mu_1(A_{k+1} \times B_r)-u_2\mu_2(A_{k+1} \times B_r)}\mu_2^{j_{k+1}^r}(A_{k+1}\times B_{r})\right).
	\end{aligned}
	\]
	Considering each element separately we have
	\[
	\begin{aligned}
		\E \biggl( e^{-u_1\mu_1(A_h \times B_r)-u_2\mu_2(A_h \times B_r)}&\mu_1^{i}(A_h \times B_r)\mu_2^{j}(A_h \times B_r)\biggr) \\
		&= \E \left((-1)^{i+j}\frac{\partial^{i+j}}{\partial u_1^i\partial u_2^j} e^{-u_1\mu_1(A_h \times B_r)-u_2\mu_2(A_h \times B_r)}\right)\\
		&= (-1)^{i+j}\frac{\partial^{i+j}}{\partial u_1^i\partial u_2^j}\E \left( e^{-u_1\mu_1(A_h \times B_r)-u_2\mu_2(A_h \times B_r)}\right)\\
		& =(-1)^{i+j} \frac{\partial^{i+j}}{\partial u_1^i\partial u_2^j}\left\{e^{-\int_{A_h \times B_r}\int_{\R_+^2}(1-e^{-u_1s_1-u_2s_2})\, \rho(ds)G_0(x)}\right\}.
	\end{aligned}
	\]
	Recall that we are interested in the limit as $\epsilon \to 0$, so that
	\begin{equation}\label{order_infinitesim}
		\begin{aligned}
			\frac{\partial^{i+j}}{\partial u_1^i\partial u_2^j}\biggl\{&e^{-\int_{A_h \times B_r}\int_{\R_+^2}(1-e^{-u_1s_1-u_2s_2})\, \rho(\d s)G_0(\d x)}\biggr\} \sim e^{-\int_{A_h \times B_r}\int_{\R_+^2}(1-e^{-u_1s_1-u_2s_2})\, \rho(\d s)G_0(\d x)}\times\\
			&\times\frac{\partial^{i+j}}{\partial u_1^i\partial u_2^j}\left\{\int_{A_h \times B_r} \int_{\R_+^2}(1-e^{-u_1s_1-u_2s_2})\, \rho(\d s)G_0(\d x) \right\},
		\end{aligned}
	\end{equation}
	where we say $f \sim g$ if $\lim_{\epsilon \to 0}f(x)/g(x) = 1$. By simple algebra we get
	\[
	\begin{aligned}
		\frac{\partial^{i+j}}{\partial u_1^i\partial u_2^j}&\left\{e^{-\int_{A_h \times B_r}\int_{\R_+^2}(1-e^{-u_1s_1-u_2s_2})\, \rho(\d s)G_0(\d x)}\right\}=\frac{\partial^{i+j-1}}{\partial u_1^{i-1}\partial u_2^j}\biggl\{-\int_{A_h \times B_r}\int_{\R_+^2}e^{-u_1s_1-u_2s_2}\times \\
		&\times s_1\, \rho(\d s)G_0(\d x)e^{-\int_{A_h \times B_r}\int_{\R_+^2}(1-e^{-u_1s_1-u_2s_2})\, \rho(\d s)G_0(\d x)} \biggr\}\\
		&= \frac{\partial^{i+j-2}}{\partial u_1^{i-2}\partial u_2^j}\biggl\{\int_{A_h \times B_r}\int_{\R_+^2}e^{-u_1s_1-u_2s_2}s_1^2\, \rho(\d s)G_0(\d x)e^{-\int_{A_h \times B_r}\int_{\R_+^2}(1-e^{-u_1s_1-u_2s_2})\, \rho(\d s)G_0(\d x)}\\
		&+\left(\int_{A_h \times B_r}\int_{\R_+^2}e^{-u_1s_1-u_2s_2}s_1\, \rho(\d s)G_0(\d x) \right)^2 e^{-\int_{A_h \times B_r}\int_{\R_+^2}(1-e^{-u_1s_1-u_2s_2})\, \rho(\d s)G_0(\d x)}\biggr\},
	\end{aligned}
	\]
	and
	\[
	\lim_{\epsilon \to 0} \frac{\left(\int_{A_h \times B_r}\int_{\R_+^2}e^{-u_1s_1-u_2s_2}s_1\, \rho(\d s)G_0(\d x) \right)^2}{\int_{A_h \times B_r}\int_{\R_+^2}e^{-u_1s_1-u_2s_2}s_1^2\, \rho(\d s)G_0(\d x)} = 0.
	\]
	By applying this argument repeatedly we obtain \eqref{order_infinitesim}. Thus, letting $\rho({\underline{u}}) = \sum_{\bm{i}, \bm{j}} \binom{\bm{n}}{\bm{i}}\binom{\bm{m}}{\bm{j}}\rho_{\bm{i}, \bm{j}}({\underline{u}})$, by aggregating the terms we have
	\[
	\begin{aligned}
		\rho({\underline{u}}) \sim& \sum_{\bm{i}, \bm{j}} \binom{\bm{n}}{\bm{i}}\binom{\bm{m}}{\bm{j}} (-1)^{n+m}e^{-\psi_b(u)}\times\\
		&\times\prod_{h = 1}^k\prod_{r = 1}^c \left\{\frac{\partial^{i_r^h+j_h^r}}{\partial u_1^{i_r^h}\partial u_2^{j_h^r}}\int_{A_h \times B_r} \int_{\R_+^2}(1-e^{-u_1s_1-u_2s_2})\, \rho(\d s)G_0(\d x) \right\}\times\\
		&\times\prod_{h = 1}^k\left\{\frac{\partial^{i_{c+1}^h}}{\partial u_1^{i^h_{c+1}}} \int_{A_h \times B_{c+1}} \int_{\R_+^2}(1-e^{-u_1s_1-u_2s_2})\, \rho(\d s)G_0(\d x) \right\}\times\\
		&\times\prod_{r = 1}^c\left\{\frac{\partial^{j_{k+1}^r}}{\partial u_2^{i^r_{k+1}}} \int_{A_{k+1} \times B_r} \int_{\R_+^2}(1-e^{-u_1s_1-u_2s_2})\, \rho(\d s)G_0(\d x) \right\}\\
		& = \sum_{\bm{i}, \bm{j}} \binom{\bm{n}}{\bm{i}}\binom{\bm{m}}{\bm{j}}(-1)^{n+m}V(\bm{i}, \bm{j}).
	\end{aligned}
	\]
	The following three Lemmas characterize the set of indices $(\bm{i}, \bm{j})$ that are relevant once the limit is taken.
	
	\bigskip
	
	\begin{lemma}\label{lemma1}
		Consider $(\bm{i}, \bm{j})$ such that $0 < i_r^h, i_l^h < n_h$, with $r > l$ and $1 \leq h \leq k$. Then $\exists (\tilde{\bm{i}}, \tilde{\bm{j}})$ such that $\lim_{\epsilon \to 0} V(\bm{i}, \bm{j})/V(\tilde{\bm{i}}, \tilde{\bm{j}}) \to 0$.
	\end{lemma}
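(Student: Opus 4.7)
My plan is to pin down the asymptotic order of each factor of $V(\bm{i}, \bm{j})$ as $\epsilon \to 0$ and then construct an explicit $(\tilde{\bm{i}}, \tilde{\bm{j}})$ with a strictly smaller $V$-order. Writing $\alpha_h := P_0(A_h)$ and $\beta_r := P_0(B_r)$, both vanish as $\epsilon \to 0$, and non-atomicity of $G_0$ implies $G_0(A_h \times B_r) = O(\alpha_h \beta_r)$ for inner cells, while $G_0(A_h \times B_{c+1}) = O(\alpha_h)$ and $G_0(A_{k+1} \times B_r) = O(\beta_r)$ for boundary cells. Using
\[
\frac{\partial^{i+j}}{\partial u_1^i \partial u_2^j}\int_{A \times B}\int_{\R_+^2}(1-e^{-u_1 s_1 - u_2 s_2})\,\rho(\d s)G_0(\d x) = (-1)^{i+j-1}\int_{A \times B}\int_{\R_+^2}s_1^i s_2^j e^{-u_1 s_1 - u_2 s_2}\,\rho(\d s)G_0(\d x)
\]
for $i+j\geq 1$, the factor of $V(\bm{i}, \bm{j})$ attached to a cell with at least one positive index is proportional to the $G_0$ measure of that cell times a bounded $\tau_{i,j}(\underline u)$, while cells with $i=j=0$ contribute $O(1)$ once the global $e^{-\psi_b(\underline u)}$ is factored out. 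Hence the order of $V(\bm{i}, \bm{j})$ in $\epsilon$ is governed entirely by the pattern of activated cells: each inner activation costs one power of $\alpha_h \beta_r$, and each boundary activation costs one power of $\alpha_h$ or $\beta_r$.

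Given $i_r^h, i_l^h > 0$ with $r > l$ (treating the main case $1 \leq l < r \leq c$; the case $r = c+1$ is strictly easier and follows from the same construction with roles swapped), I define the modification by $\tilde i_l^h = i_l^h + i_r^h$, $\tilde i_r^h = 0$, all other $\bm{i}$-entries unchanged; this preserves $\sum_r \tilde i_r^h = n_h$. For $\bm{j}$: if $j_h^r = 0$ keep $\tilde{\bm{j}} = \bm{j}$, whereas if $j_h^r > 0$ set $\tilde j_{k+1}^r = j_{k+1}^r + j_h^r$ and $\tilde j_h^r = 0$, which preserves $\sum_h \tilde j_h^r = m_r$.

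Only cells $(h, l)$, $(h, r)$ and $(k+1, r)$ can see a change in activation status. Cell $(h, l)$ remains activated in both configurations, contributing the same order. Cell $(h, r)$ drops from activated (cost $\alpha_h \beta_r$) to inactive (cost $1$) under the modification. Cell $(k+1, r)$ is unaffected except when $j_h^r > 0$ and originally $j_{k+1}^r = 0$, in which case the modification adds a boundary activation of cost $\beta_r$. Comparing the three cases ($j_h^r = 0$; $j_h^r > 0$ with $j_{k+1}^r > 0$; $j_h^r > 0$ with $j_{k+1}^r = 0$), the ratio $V(\bm{i}, \bm{j})/V(\tilde{\bm{i}}, \tilde{\bm{j}})$ is respectively $O(\alpha_h \beta_r)$, $O(\alpha_h \beta_r)$ and $O(\alpha_h)$; all three tend to zero as $\epsilon \to 0$, proving the lemma.

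The main obstacle, which the two-part construction is designed to overcome, is that shuffling only $\bm{i}$ fails to deactivate cell $(h, r)$ whenever $j_h^r > 0$: the residual $j$-mass must be relocated. Redirecting it to the boundary row $k+1$ rather than to another inner row $h' \leq k$ is crucial, since boundary activations cost only $\beta_r$ rather than $\alpha_{h'}\beta_r$, guaranteeing that the exchange is uniformly favourable.
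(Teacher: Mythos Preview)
Your argument is correct and follows the same strategy as the paper's: both construct an explicit $(\tilde{\bm{i}}, \tilde{\bm{j}})$ that merges the $i$-mass of two cells in row $h$, thereby deactivating an inner cell $(h,r)$ and removing its vanishing $G_0(A_h\times B_r)$ factor, with any residual $j_h^r$ relocated to the boundary row $k{+}1$. The only difference is cosmetic---the paper additionally singles out the sub-case $j_h^l=0$ (merging into position $r$ instead of $l$, so that $\bm{j}$ need not be touched)---and one small remark: the order claim $G_0(A_h\times B_r)=O(\alpha_h\beta_r)$ relies on the existence of a joint density $g_0$ (which the paper assumes), not on non-atomicity of $G_0$ alone.
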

	
	\begin{proof}[Proof of Lemma S2.4]
		For ease of notation set $\bm{i}^h = (i^h_1, \dots, i^h_{c+1})$. Then
		\begin{itemize}
			\item If $r = c+1$, set $\tilde{\bm{i}}^h = (i_1^h, \dots, i_l^h+i^h_{c+1}, \dots, 0)$.
			\item If $j_h^r = 0$, set $\tilde{i}^h = (i_1^h, \dots, i_l^h+i_r^h, \dots, 0, \dots)$.
			\item If $j_h^l = 0$, set $\tilde{\bm{i}}^h = (i_1^h, \dots, 0, \dots, i_r^h+i_l^h, \dots)$.
			\item If $j_h^l > 0$ and $j_h^r > 0$, set $\tilde{\bm{j}}^r = (j_1^r,  \dots, 0,\dots, j^r_{k+1}+j_h^r)$ and $\tilde{\bm{i}}^h = (i_1^h, \dots, i_l^h+i_r^h, \dots, 0, \dots)$.
		\end{itemize}
		For example in the last case we have
		\[
		\lim_{\epsilon \to 0} \frac{\mbox{var}(\bm{i}, \bm{j})}{\mbox{var}(\tilde{\bm{i}}, \tilde{\bm{j}})} = \lim_{\epsilon \to 0} \frac{\int_{A_h \times B_r}\int_{\R_+^2}e^{-u_1s_1-u_2s_2}s_1^{i^h_r}s_2^{j^r_h}\, \rho(\d s)G_0(\d x)}{\int_{A_{c+1} \times B_r}\int_{\R_+^2}e^{-u_1s_1-u_2s_2}s_2^{j^r_h+j^r_{c+1}}\, \rho(\d s)G_0(\d x)} = 0,
		\]
		as desired.
	\end{proof}
	Thus, Lemma \ref{lemma1} guarantees that $\bm{i}^h$ has exactly one element different from $0$, that is equal to $n_h$.
	
	\medskip
	
	\begin{lemma}\label{lemma2}
		Consider $(\i, \j)$ such that $i_r^h = n_h$ and $j_h^r = 0$. Then there exists $ (\tilde{\i}, \tilde{\j})$ such that
		\[
		\lim_{\epsilon \to 0} V(\i, \j)/V(\tilde{\i}, \tilde{\j}) \to 0.
		\] 
	\end{lemma}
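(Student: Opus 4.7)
The plan is to adapt the case-by-case strategy used in the proof of Lemma~\ref{lemma1}. Given $(\i, \j)$ as in the hypothesis, with some $1 \le h \le k$ and $1 \le r \le c$, I would construct a modified multi-index $(\tilde\i, \tilde\j)$ that differs from $(\i, \j)$ only in row $h$ of $\i$, by moving all $n_h$ mass from the ``small'' column $r$ (where the cell $A_h \times B_r$ shrinks in both coordinates of $\X^2$) to the ``large'' column $c+1$ (where $A_h \times B_{c+1}$ shrinks only in the first coordinate). Since factors supported on $A_h \times B_r$ are of strictly smaller order than factors supported on $A_h \times B_{c+1}$, such a transfer should produce a dominating configuration.

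Concretely, I would set $\tilde\j = \j$ and
\[
\tilde{i}^h_{c+1} = n_h, \qquad \tilde{i}^h_{r'} = 0 \text{ for every } r' \le c, \qquad \tilde{i}^{h'}_{r'} = i^{h'}_{r'} \text{ for every } h' \neq h.
\]
The row-sum constraint is preserved because the hypothesis $i^h_r = n_h$ forces all other entries in row $h$ of $\i$ to vanish; the column-sum constraints for $\j$ are unaffected. Crucially, the hypothesis $j^r_h = 0$ guarantees that, in $V(\tilde\i, \tilde\j)$, the $(h, r)$ cell receives no derivative at all and reduces to the trivial factor $1 + o(1)$, while the $(h, c+1)$ cell now supports an $n_h$-th derivative in $u_1$ and becomes the active factor.

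Applying the asymptotic expansion in~\eqref{order_infinitesim} cell by cell, all factors cancel in the ratio $V(\i, \j)/V(\tilde\i, \tilde\j)$ except those at $(h, r)$ and $(h, c+1)$. The numerator's active factor is asymptotic to $\tau_{n_h, 0}(\underline{u})\, G_0(A_h \times B_r)$, while the denominator's active factor is asymptotic to $\tau_{n_h, 0}(\underline{u})\, G_0(A_h \times B_{c+1})$. Consequently,
\[
\lim_{\epsilon \to 0} \frac{V(\i, \j)}{V(\tilde\i, \tilde\j)} = \lim_{\epsilon \to 0} \frac{G_0(A_h \times B_r)}{G_0(A_h \times B_{c+1})} = 0,
\]
where the last limit uses the existence of the density $g_0$: one has $G_0(A_h \times B_r) \sim g_0(X_h^*, Y_r^*)\, |A_h|\, |B_r|$ while $G_0(A_h \times B_{c+1}) \sim p_0(X_h^*)\, |A_h|$, so the ratio scales like $|B_r| \to 0$.

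The main bookkeeping issue I anticipate is verifying that the exponential prefactors $e^{-F}$ at the remaining cells multiply uniformly to $1 + o(1)$, so that the leading behaviour is indeed governed only by the two active factors identified above. This follows because $F \to 0$ uniformly over the small cells, but must be tracked carefully across the full product defining $V$; the same subtlety is implicit in the proof of Lemma~\ref{lemma1}, so the argument should go through essentially verbatim once the right index perturbation is chosen.
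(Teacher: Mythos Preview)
Your proposal is correct and coincides with the paper's own proof: the paper simply sets $(\tilde{\bm{i}}, \tilde{\bm{j}})$ equal to $(\bm{i}, \bm{j})$ apart from $\tilde{i}^h_r = 0$ and $\tilde{i}^h_{c+1} = n_h$, which is exactly your construction (your observation that $i^h_r = n_h$ forces the rest of row $h$ to vanish makes the two phrasings identical). Your detailed verification of the ratio via $G_0(A_h\times B_r)/G_0(A_h\times B_{c+1})\to 0$ is precisely the computation the paper leaves implicit.
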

	\begin{proof}[Proof of Lemma S5]
		Set $(\tilde{\bm{i}}, \tilde{\bm{j}})$ equal to $(\i, \j)$, apart from $\tilde{i}^h_r = 0$ and $\tilde{i}^h_{c+1} = n_h$.
	\end{proof}
	
	\bigskip
	
	\begin{lemma}\label{lemma3}
		Consider $(\i, \j)$ such that $i^h_{c+1} = n_h$ and $j_h^r > 0$. Then there exists $(\tilde{\i}, \tilde{\j})$ such that
		\[
		\lim_{\epsilon \to 0} V(\i, \j)/V(\tilde{\i}, \tilde{\j}) \to 0.
		\] 
	\end{lemma}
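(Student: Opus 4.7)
The plan is to carry out the swap opposite to the one used in Lemma S2.5. In the present situation, the mass $n_h$ coming from the $X$--observations at $A_h$ is currently assigned to the coarse cell $A_h\times B_{c+1}$, while the condition $j^r_h>0$ guarantees that a fine-cell factor $G_0(A_h\times B_r)$, of much smaller order in $\epsilon$, is already present. Consolidating the two so that everything at $A_h$ is coupled with $B_r$ reduces the overall order of the configuration and will therefore produce a strictly dominant $V(\tilde{\bm{i}},\tilde{\bm{j}})$.

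Concretely, first I would define $(\tilde{\bm{i}},\tilde{\bm{j}})$ by setting $\tilde{\bm{j}}=\bm{j}$ and $\tilde{\bm{i}}$ equal to $\bm{i}$ at all rows different from $h$; in row $h$ I would set $\tilde{i}^h_{c+1}=0$ and $\tilde{i}^h_r=n_h$. By Lemma S2.4 we may assume $i^h_l=0$ for every $l\ne c+1$, so no further entries of $\tilde{\bm{i}}^h$ need to be changed. Under this modification the only cells in which the products defining $V(\bm{i},\bm{j})$ and $V(\tilde{\bm{i}},\tilde{\bm{j}})$ differ are $(h,c+1)$ and $(h,r)$; every other cell contributes the same factor to numerator and denominator and cancels in the ratio.

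Next I would use the leading-order relation employed throughout the proof of Theorem 1,
\[
\frac{\partial^{a+b}}{\partial u_1^{a}\partial u_2^{b}}\int_{A\times B}\int_{\R_+^2}(1-e^{-u_1 s_1-u_2 s_2})\,\rho(\d s)\,G_0(\d x)\;\sim\;(-1)^{a+b+1}\,\tau_{a,b}(\underline{u})\,G_0(A\times B)
\]
for $a+b\ge 1$, to reduce the surviving factors to
\[
\lim_{\epsilon\to 0}\frac{V(\bm{i},\bm{j})}{V(\tilde{\bm{i}},\tilde{\bm{j}})}=\lim_{\epsilon\to 0}\;\pm\,G_0(A_h\times B_{c+1})\,\frac{\tau_{n_h,0}(\underline{u})\,\tau_{0,j^r_h}(\underline{u})}{\tau_{n_h,j^r_h}(\underline{u})}.
\]
Since $G_0$ is non-atomic and $A_h$ shrinks to the singleton $\{X^*_h\}$ as $\epsilon\to 0$, we have $G_0(A_h\times B_{c+1})\to 0$, while the ratio of $\tau$'s stays finite for each fixed $\underline{u}$. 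This gives the desired vanishing.

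The main obstacle I anticipate is purely bookkeeping: verifying that the product over all cells other than $(h,c+1)$ and $(h,r)$ really does cancel between $V(\bm{i},\bm{j})$ and $V(\tilde{\bm{i}},\tilde{\bm{j}})$, and that the signs from the $(h,c+1)$ and $(h,r)$ derivatives combine correctly with those of the reorganised cell. A minor technical point is boundedness of $\tau_{n_h,j^r_h}(\underline{u})$ away from zero at fixed $\underline{u}$, which follows from the standard integrability assumptions already imposed on the L\'evy intensity $\rho$ in Section 3.
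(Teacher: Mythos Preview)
Your argument is correct but uses a different swap than the paper. The paper keeps $\bm{i}$ fixed and instead moves the $Y$--mass, setting $\tilde{j}_h^r=0$ and $\tilde{j}_{k+1}^r=m_r$; this replaces the small-cell factor $G_0(A_h\times B_r)$ by the larger $G_0(A_{k+1}\times B_r)$, yielding the same order-$\epsilon$ gain. Your move instead pulls the $X$--mass from $(h,c+1)$ into the already active cell $(h,r)$, so the factor $G_0(A_h\times B_{c+1})$ disappears entirely while no new small-cell factor is created (since $j_h^r>0$ means cell $(h,r)$ was already contributing a $G_0(A_h\times B_r)$ factor). Both constructions are valid and both targets lie among the configurations that survive in the final limit: yours corresponds to a hyper-tie between $X_h^*$ and $Y_r^*$, the paper's to the no-hyper-tie pairing for that pair. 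The bookkeeping concern you raise is immediate, since only the two cells $(h,c+1)$ and $(h,r)$ differ between the two products; note also that the hypothesis $i_{c+1}^h=n_h$ together with $\sum_l i_l^h=n_h$ already forces $i_l^h=0$ for every $l\neq c+1$, so your appeal to Lemma~S2.4 is not needed.
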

	\begin{proof}[Proof of Lemma S2.6]
		Set $(\tilde{\bm{i}}, \tilde{\bm{j}})$ equal to $(\bm{i}, \bm{j})$, apart from $\tilde{j}_h^r = 0$ and $\tilde{j}_{k+1}^r = m_r$.
	\end{proof}
	
	\bigskip
	
	\noindent The three lemmas imply that each relevant $(\bm{i}, \bm{j})$ corresponds to an admissible latent structure, i.e.
	\[
	\begin{aligned}
		\rho({u}) \sim \sum_{\p \in \mathcal{P}}(-1)^{n+m}e^{-\psi_b(\underline{u})}&\prod_{(i,j) \in \Delta_\p} \left\{\frac{\partial^{n_i+m_j}}{\partial u_1^{n_i}\partial u_2^{m_j}}\int_{A_i \times B_j} \int_{\R_+^2}(1-e^{-u_1s_1-u_2s_2})\, \rho(\d s)G_0(\d x)\right\}\times\\
		&\times\prod_{(i,j) \in \Delta^1_\p} \left\{\frac{\partial^{n_i}}{\partial u_1^{n_i}}\int_{A_i \times B_{c+1}} \int_{\R_+^2}(1-e^{-u_1s_1-u_2s_2})\, \rho(\d s)G_0(\d x)\right\}\times\\
		&\times\prod_{(i,j) \in \Delta^2_\p} \left\{\frac{\partial^{m_j}}{\partial u_2^{m_j}}\int_{A_{k+1} \times B_{j}} \int_{\R_+^2}(1-e^{-u_1s_1-u_2s_2})\, \rho(\d s)G_0(\d x)\right\}.
	\end{aligned}
	\]
	Evaluating the derivatives we have
	\[
	\begin{aligned}
		\rho({\underline{u}}) \sim \sum_{\p \in \mathcal{P}}e^{-\psi_b({\underline{u}})}&\prod_{(i,j) \in \Delta_\p} \left\{\int_{A_i \times B_j}\int_{\R_+^2}e^{-u_1s_1-u_2s_2}s_1^{n_i}s_2^{m_j}\, \rho(\d s)G_0(\d x)\right\}\times\\
		&\times\prod_{(i,j) \in \Delta^1_\p} \left\{\int_{A_i \times B_{c+1}}\int_{\R_+^2}e^{-u_1s_1-u_2s_2}s_1^{n_i}\, \rho(\d s)G_0(\d x)\right\}\times\\
		&\times\prod_{(i,j) \in \Delta^2_\p} \left\{\int_{A_{k+1} \times B_{j}}\int_{\R_+^2}e^{-u_1s_1-u_2s_2}s_2^{m_j}\, \rho(\d s)G_0(\d x)\right\}.
	\end{aligned}
	\]
	Finally, we get
	\[
	\begin{aligned}
		\mathcal{I} \sim \sum_{\p \in \mathcal{P}}\int_{\R_+^2}\frac{u_1^{n-1}u_2^{m-1}}{\Gamma(n)\Gamma(m)}e^{-\psi_b({\underline{u}})}&\prod_{(i,j) \in \Delta_\p} \left\{\int_{A_i \times B_j}\int_{\R_+^2}e^{-u_1s_1-u_2s_2}s_1^{n_i}s_2^{m_j}\, \rho(\d s)G_0(\d x)\right\}\times\\
		&\times\prod_{(i,j) \in \Delta^1_\p} \left\{\int_{A_i \times B_{c+1}}\int_{\R_+^2}e^{-u_1s_1-u_2s_2}s_1^{n_i}\, \rho(\d s)G_0(\d x)\right\}\times\\
		&\times\prod_{(i,j) \in \Delta^2_\p} \left\{\int_{A_{k+1} \times B_{j}}\int_{\R_+^2}e^{-u_1s_1-u_2s_2}s_2^{m_j}\, \rho(\d s)G_0(\d x)\right\}\, \d {u}.
	\end{aligned}
	\]
	Evaluating the numerator of \eqref{to_compute} the same reasoning yields a formula asymptotic to
	\[
	\begin{aligned}
		\sum_{\p \in \mathcal{P}}\int_{\R_+^2}\frac{u_1^{n-1}u_2^{m-1}}{\Gamma(n)\Gamma(m)}e^{-\psi_h({\underline{u}})}&\prod_{(i,j) \in \Delta_\p} \left\{\int_{A_i \times B_j}\int_{\R_+^2}e^{-(h_1(x)+u_1)s_1-(h_2(x)+u_2)s_2}s_1^{n_i}s_2^{m_j}\, \rho(\d s)G_0(\d x)\right\}\\
		&\prod_{(i,j) \in \Delta^1_\p} \left\{\int_{A_i \times B_{c+1}}\int_{\R_+^2}e^{-(h_1(x)+u_1)s_1-(h_2(x)+u_2)s_2}s_1^{n_i}\, \rho(\d s)G_0(\d x)\right\}\\
		&\prod_{(i,j) \in \Delta^2_\p} \left\{\int_{A_{k+1} \times B_{j}}\int_{\R_+^2}e^{-(h_1(x)+u_1)s_1-(h_2(x)+u_2)s_2}s_2^{m_j}\, \rho(\d s)G_0(\d x)\right\}\, \d {u}.
	\end{aligned}
	\]
	where $\psi_h({\underline{u}}) = \int_{\X^2}\int_{\R_+^2}\left(1-e^{-(h_1(x)+u_1)s_1-(h_2(x)+u_2)s_2}\right)\, \rho(\d s) G_0(\d x)$. Note that
	\[
	\begin{aligned}
		1-e^{-(h_1(x)+u_1)s_1-(h_2(x)+u_2)s_2} &= e^{-u_1s_1-u_2s_2}\left[e^{u_1s_1+u_2s_2}-1+1-e^{-h_1(x)s_1-h_2(x)s_2} \right]\\
		&=\left[1-e^{-u_1s_1-u_2s_2} \right]+\left[1-e^{-h_1(x)s_1-h_2(x)s_2} \right],
	\end{aligned}
	\]
	so that
	\[
	\begin{aligned}
		e^{-\psi_h({\underline{u}})} &= e^{-\psi_b({\underline{u}})}e^{-\int_{\X^2}\int_{\R_+^2}\left[1-e^{-h_1(x)s_1-h_2(x)s_2} \right]\rho(\d s) G_0(\d x)}\\
		& = e^{-\psi_b({\underline{u}})}\E\left[e^{-\int_{\X^2}h_1(x) \, \hat{\mu}_1(\d x) - \int_{\X^2}h_2(x) \, \hat{\mu}_2(\d x)} \right].
	\end{aligned}
	\]
	Furthermore
	\[
	G_0(A_h \times B_r) = \epsilon \frac{G_0(A_h \times B_r)}{\epsilon} \sim \epsilon g_{h,r}, \quad 1 \leq i \leq c, 1 \leq j \leq k,
	\]
	and
	\[
	G_0(A_h \times \d x) \sim \epsilon g_{h,c+1}Q_{X_h^*}( \d x), \quad G_0(\d x \times  B_r) \sim \epsilon g_{k+1, r}P_{Y_r^*}( \d x).
	\]
	Thus, evaluating the limit in \eqref{to_compute} we get
	\[
	\begin{aligned}
		\E \biggl[&e^{-\int_{\X^2}h_1(x)\, \mu_1(\d x)-\int_{\X^2}h_2(x)\, \mu_2(\d x)} \mid (X_i)_{i\geq1}^{n}, (Y_j)_{j\geq1}^{m}  \biggr]  =\\
		&\times \sum_{p \in \mathcal{P}}\int_{\R_+^2}\E\left[e^{-\int_{\X^2}h_1(x) \, \hat{\mu}_1(\d x) - \int_{\X^2}h_2(x) \, \hat{\mu}_2(\d x)} \right]\times\\
		&\times \prod_{(i,j) \in \Delta_p}\int_{\R_+^2}e^{-h_1(X_i^*, Y_j^*)s_1-h_2(X_i^*, Y_j^*)s_2}\frac{s_1^{n_i}s_2^{m_j}e^{-u_1s_1-u_2s_2}\rho(\d s)}{\tau_{n_i, m_j(\underline{u})}}\times\\
		&\times \prod_{(i,j) \in \Delta^1_p}\int_{\X}\int_{\R_+^2}e^{-h_1(X_i^*, x_2)s_1-h_2(X_i^*,x_2)s_2}\frac{s_1^{n_i}e^{-u_1s_1-u_2s_2}\rho(\d s)}{\tau_{n_i,0(\underline{u})}}Q_{X_i^*}( \d x_2)\times\\
		&\times \prod_{(i,j) \in \Delta^2_p}\int_{\X}\int_{\R_+^2}e^{-h_1(x_1, Y_2^*)s_1-h_2(x_1,Y_2^*)s_2}\frac{s_2^{m_j}e^{-u_1s_1-u_2s_2}\rho(\d s)}{\tau_{0, m_j(\underline{u})}}P_{Y_j^*}(\d x_1)\times\\
		&\times \left( \frac{\int_{\R_+^2}u_1^{n-1}u_2^{m-1}\prod_{(i,j) \in p}g_{i,j}\tau_{n_i, m_j}(\underline{u})e^{-\psi_b(\underline{u})}\, \d \underline{u}}{\sum_{\q \in \mathcal{P}}\int_{\R_+^2}u_1^{n-1}u_2^{m-1}\prod_{(i,j) \in \q}g_{i,j}\tau_{n_i, m_j}(\underline{u})e^{-\psi_b(\underline{u})}\d \underline{u}}\right) \times\\
		&\times \frac{u_1^{n-1}u_2^{m-1}\prod_{(i,j) \in p}\tau_{n_i, m_j}(\underline{u})e^{-\psi_b(\underline{u})}\, \d u}{\int_{\R_+^2}u_1^{n-1}u_2^{m-1}\prod_{(i,j) \in p}\tau_{n_i, m_j}(\underline{u})e^{-\psi_b(\underline{u})} \, \d \underline{u}},
	\end{aligned}
	\]
	as desired.
\end{proof}

\bigskip

\begin{proof}[Proof of Corollary 2]
	We use the shorthand notation $\mu_1(f) = \int_{\X}f(x) \, \mu_1(\d x)$ for any measurable function $f \, : \, \X \, \to \, \R$ such that $\mu_1(|f|) < \infty$. Letting ${U}$ be the set of latent variables of Theorem 1, i.e. ${U} = \left(p, U_1, U_2, Z^x, Z^y \right)$ for any $y_1, \dots, y_n \in (0,1)$ and $A_1, \dots, A_n \in \mathcal{X}^2$ we get
	\[
	\begin{aligned}
		&\P \left[p_1(A_1) \leq y_1, \dots, p_n(A_n) \leq y_n \mid {U},  (X_{i})_{i=1}^{n},  (Y_{j})_{j=1}^{m} \right] \\
		=&\P \left[\mu_1(\mathbbm{1}_{A_1}-y_1) \leq 0, \dots, \mu_n(\mathbbm{1}_{A_n}-y_n) \leq 0 \mid {U},(X_{i})_{i=1}^{n},  (Y_{j})_{j=1}^{m} \right].
	\end{aligned}
	\]
	The result follows since the finite dimensional distributions of $p_1$ given ${U}$, $(X_{i})_{i=1}^{n}$, and $(Y_{j})_{j=1}^{m}$ coincide with the ones of the normalized posterior distribution of $\mu_1$, given ${U}$, $(X_{i})_{i=1}^{n}$, and $(Y_{j})_{j=1}^{m}$.
\end{proof}

\bigskip

\begin{proof}[Proof of Theorem 2]
	Set ${\underline{H}} = \left(p, U_1, U_2 \right)$ with domain $D$. Then
	\[
	\begin{aligned}
		\P(X_{n+1} \in \d x \mid  (X_{i})_{i=1}^{n},  (Y_{j})_{j=1}^{m} ) 
		&=
		\E[\tilde{p}_1(\d x) \mid  (X_{i})_{i=1}^{n},  (Y_{j})_{j=1}^{m} ] \\
		&=  \int_D\E[\tilde{p}_1(\d x) \mid \underline{H} = \underline{h},  (X_{i})_{i=1}^{n}, (Y_{j})_{j=1}^{m}] \, F(\d \underline{v}),
	\end{aligned}
	\]
	where $F(\cdot)$ is the posterior distribution of $\underline{H}$, with $\underline{h} = (p, u_1, u_2)$. Recalling the notation in Corollary $2$ we have
	\[
	\begin{aligned}
		\E[\tilde{p}_1(\d x) \mid& \underline{H} = \underline{h}, (X_{i})_{i=1}^{n}, (Y_{j})_{j=1}^{m} ] =\E \left[\frac{\hat{\mu}_1(\d x \times \X)}{R} \right] + \E \left[\frac{\sum_{(i,j) \in \Delta_{p}}J^1_{i,j}\delta_{X_i^*}}{R} \right]+\\
		&+ \E \left[\frac{\sum_{(i,j) \in \Delta^1_{p}}J^1_{i,c+1}\delta_{X_i^*}}{R} \right]+\E \left[\frac{\sum_{(i,j) \in \Delta^2_{p}}J^1_{k+1,j}\delta_{Z_j^y}}{R} \right]= \sum_{k = 1}^4I_k,
	\end{aligned}
	\]
	where $R = T_1+\sum_{(i,j) \in \Delta_{p}}J_{i,j}^1+\sum_{(i,j) \in \Delta^1_{p}}J_{i,c+1}^1+\sum_{(i,j) \in \Delta^2_{p}}J_{k+1,j}^1$. \\ Set $S = \sum_{(i,j) \in \Delta_{p}}J_{i,j}^1+\sum_{(i,j) \in \Delta^1_{p}}J_{i,c+1}^1+\sum_{(i,j) \in \Delta^2_{p}}J_{k+1,j}^1$ and exploit the conditional independence between $J_{ij}^1$ and $\hat{\mu}_1$ to obtain
	\[
	\begin{aligned}
		I_1 &= \int_{\R_+}\E \left[e^{-vS} \right]\E\left[\hat{\mu}_1(\d x \times \X)e^{-vT_1}\right] \, \d v \\
		&=\theta P_0(\d x) \int_{\R_+} \left(\prod_{(i,j) \in p}\frac{\tau_{n_i, m_j}(u_1+v,u_2)}{\tau_{n_i, m_j}(u_1,u_2)}\right)\tau_{1,0}(u_1+v, u_2)e^{-\psi^{\underline{u}}_b(v,0)} \, \d v,
	\end{aligned}
	\]
	where $\psi_b^{\underline{u}}(\lambda_1,\lambda_2)$ is the Laplace exponent of $(\hat{\mu}_1, \hat{\mu}_2)$ in Theorem 1. Observing that $\psi_b^{\underline{u}}(v,0)+\psi(u_1,u_2) = \psi(u_1+v,u_2)$ and denoting with $L(\cdot)$ the distribution of $\p$, we obtain
	\[
	\begin{aligned}
		\xi_0 &= \int_D I_1 \, F(\d \underline{u})\\
		& = \theta P_0(\d x)\int\int_{\R_+^3}\biggl\{ u_1^{n-1}u_2^{m-1}\left(\prod_{(i,j) \in p}\tau_{n_i, m_j}(u_1+v,u_2)\right)\tau_{1,0}(u_1+v, u_2)\times\\
		&\times e^{-\psi(u_1+v, u_2)}\, \d u_1 \d u_2 \d v L(\d p)\biggr\}\\
		& = \frac{\theta P_0(\d x)}{n}\int\int_{\R_+^2} u_1^{n}u_2^{m-1}\left(\prod_{(i,j) \in p}\tau_{n_i, m_j}(u_1,u_2)\right)\tau_{1,0}(u_1, u_2)e^{-\psi(u_1, u_2)}\, \d u_1 \d u_2  L(\d p)\\
		& = \frac{\theta P_0(\d x)}{n}\int_Du_1\tau_{1,0}(u_1, u_2) \, F(\d \underline{u}),
	\end{aligned}
	\]
	where the second equality follows from the change of variables $(w,z) = (u_1+v,u_1)$. The proof for the remaining weights follows along the same lines and leads to
	\[
	\xi_i^x = \frac{1}{n}\int_Du_1\left[ \frac{\tau_{n_i+1, m_j}(u_1, u_2)}{\tau_{n_i, m_j}(u_1, u_2)} +  \frac{\tau_{n_i+1, 0}(u_1, u_2)}{\tau_{n_i, 0}(u_1, u_2)}\right] \, F(\d \underline{u})
	\]
	and
	\[
	\xi_i^y = \frac{1}{n}\int_Du_1 \frac{\tau_{1, m_j}(u_1, u_2)}{\tau_{0, m_j}(u_1, u_2)} \, F(\d \underline{u}).
	\]
	The weights for $Y_{m+1}$ can be computed in an analogous fashion.
\end{proof}

\section{A toy example of borrowing of information}
Classical borrowing of information across samples is typically associated to positive correlation across observations in different populations and, as a consequence, it induces shrinkage of the predictions. Let us consider the toy situation in which observations coming from two different populations have been collected and a normal model is assumed
\begin{align*}
	X_i\mid \mu_x&\overset{iid}{\sim}\text{N}(\mu_x,\,1)\qquad \text{for } i=1,\ldots,n\\
	Y_j\mid \mu_y&\overset{iid}{\sim}\text{N}(\mu_y,\,1)\qquad \text{for } j=1,\ldots,m
\end{align*}
To obtain a working model, one has to specify a certain prior over $\mu_x$ and $\mu_y$. The main typical strategies one may employ are the following:
\begin{itemize}
	\item Modeling $\mu_x$ and $\mu_y$ as independent, which ultimately means that we do not consider the information coming from one population to be relevant for inference on the other. 
	
	\item Modeling $\mu_x$ and $\mu_y$ as dependent, which induces borrowing of information. This typically reflects the idea that, if the observed values of $Y_1,\ldots,Y_m$ are on average higher than our prior guess on $\mu_y$, then we should upwards revise our belief on $\mu_x$ and our prediction for $X_1$. 
\end{itemize} 
To clarify this last point, we compare a typical strategy used to perform borrowing of information, which is provided by the following hierarchy
\begin{equation}\label{te:borrow}
	\begin{split}
		\mu_x\mid\mu_0&\sim \text{N}(\mu_0,\,1)\\
		\mu_y\mid\mu_0&\sim \text{N}(\mu_0,\,1)\\
		\mu_0&\sim \text{N}(\nu,\,1)
	\end{split}
\end{equation}
with the case of independent priors, namely
\begin{equation}\label{te:indip}
	\begin{split}
		\mu_x\sim \text{N}(\nu,\,2)&\quad\quad\quad
		\mu_y\sim \text{N}(\nu,\,2)\\
		&\mu_x\perp \mu_y
	\end{split}
\end{equation}
where the variance is chosen to match the marginal distributions of the hierarchical specification. We assume that only the sample $(Y_1,\ldots,Y_m)$ has been observed and we discuss its impact on the posterior distribution of $\mu_x$ and on the predictive distribution of $X_1$ under the two specifications. 
Under independence in \eqref{te:indip}, one obviously has 
\[
p(\mu_x\mid(Y_{j})_{j=1}^{m})= 
\text{N}(\nu,\,2)
\]
while under model~\eqref{te:borrow} the new distribution of $\mu_x$ is 
\[
\begin{aligned}
	p(\mu_x\mid (Y_{j})_{j=1}^{m})&\propto\int_{\mathbb{R}} p(\mu_x\mid\mu_0) \, p(\mu_0\mid (Y_{j})_{j=1}^{m}) \d \mu_0 \\ 
	&\:=\:\text{N}\left(\frac{1}{2m+1}\nu + \frac{2m}{2m+1} \frac{\nu+\bar y }{2},\,1 + \frac{m+1}{2m+1}\right),
\end{aligned}
\]
where $\bar y$ denotes the empirical average of $Y_1,\ldots,Y_m$, and 
\[
\mathbb{E}[X_1\mid (Y_{j})_{j=1}^{m}] = \mathbb{E}[\mu_x\mid (Y_{j})_{j=1}^{m}] = \nu + \frac{m}{2m + 1}(\bar y - \nu)
\]
Therefore, when $\bar y > \nu$ the borrowing results in an increase of the estimate for $\mu_x$ and of the prediction for $X_1$, while if $\bar y < \nu$ the borrowing of information induces the opposite effect. The shrinking behaviour is ultimately a consequence of the fact that the hierarchical prior in \eqref{te:borrow} induces positive correlation across $X_i$ and $Y_j$. However, what we show in the main paper is that classical shrinkage of the estimates is not the only way to borrow information within partially exchangeable populations, neither necessarily the best one. 

\section{Example of correlation between FuRBI priors on Borel set}

\textcolor{black}{Consider a pair of n-FuRBI priors with equal jumps (see Example $4$ in the main document), where the baseline distribution $G_0$ is given by a bivariate normal with zero mean, unit variances and correlation $\rho \in \{-0.99, -0.5, 0, 0.5, 0.99\}$. In Figure \ref{corr_FURBI} we depict the correlations on sets of the form $(-\infty, x]$, with $x \in [-5, 5]$ and for each value of the correlation. Notice that such correlation may be of particular interest in survival settings, where the distribution function is often the main focus.}

\textcolor{black}{When $\rho = 0$, the correlation is equal to $0$ as expected, since $G_0(A \times A)= P_0(A)^2$ and the numerator of the formula in Proposition $8$ vanishes. For values of $\rho$ different from $0$, the correlation is symmetric around $0$, due to the symmetry of the Gaussian distribution, and different signs indicate opposite behaviours: therefore, $\rho < 0$ implies negative correlation on such Borel sets. }

\textcolor{black}{However, note that a different sign does not mean a completely specular behaviour: for instance the correlation with $\rho = 0.99$ is higher in absolute value than the one with $\rho = -0.99$. This is due to the fact that it is somewhat impossible to have strictly negative correlation on all Borel sets. Intuitively, if the two priors have high negative correlation on $(-\infty, 0]$, it means that one of them has much larger mass on $(-\infty, 0]$ and the other on $(0, +\infty)$: therefore, both priors will have a high mass on $(-\infty, a]$, with $a$ large positive number, so that the correlation can not attain again large negative values.}

\textcolor{black}{Finally, if $\rho \to 1$, then the correlation converges to the constant function $1$, that is the value obtained with equal atoms: indeed, the two priors will have equal jumps and linearly dependent atoms (see Corollary 1).}

\begin{figure}\label{fig_cor}
	\vspace{-0.5\baselineskip}
	\floatbox[{\capbeside\thisfloatsetup{capbesideposition={right,center},capbesidewidth=6cm}}]{figure}[\FBwidth]
	{\caption{\textcolor{black}{Correlation on Borel sets of the form $(-\infty, x]$, with $x \in [-5,5]$. The four lines, from bottom to top, correspond to $\rho \in \{-0.99, -0.5, 0, 0.5, 0.99\}$.}}\label{corr_FURBI}}
	{\includegraphics[width=7cm]{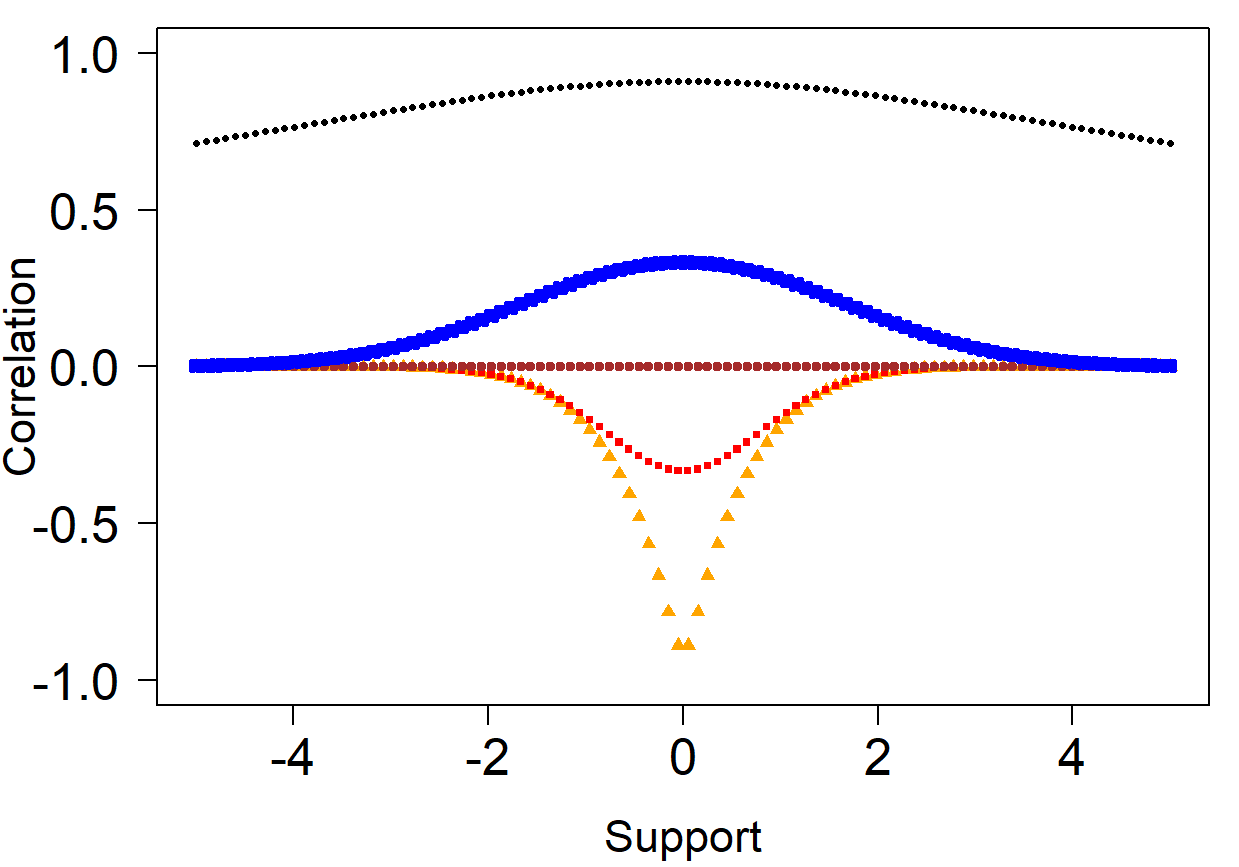}}
\end{figure}

\section{\textcolor{black}{Algorithms for posterior inference}}

\textcolor{black}{In this section we address the issue of sampling from the posterior distribution. In discrete nonparametric models, we need to distinguish whether the random probability measures are directly applied to the data or rather convoluted with a suitable kernel (known as \emph{mixture} model, see Section $6$ in the paper). }

\textcolor{black}{Nevertheless, from a computational perspective, if the first problem is solved the second one can be tackled in a similar way: it is indeed easy to propose a Gibbs sampler that alternates sampling of suitable latent variables and the posterior distribution given data originated by the random probability measure (see Section S4.4 below for how to extend algorithms to mixture models). }

\textcolor{black}{Therefore, in the following three sections, we assume to collect observations from 
	\begin{equation}\label{partial_model}
		\begin{aligned}
			(X_i,Y_j)\mid(\tilde p_1,\tilde p_2)  \overset{iid}{\sim} \tilde p_1\times\tilde p_2\qquad
			(\tilde p_1,\,\tilde p_2) \sim Q
		\end{aligned}
	\end{equation}
}
\subsection{\textcolor{black}{Marginal posterior samplers}}
\textcolor{black}{The first approach is to  directly simulate the trajectories of $(\tilde p_1,\tilde p_2)$ from its posterior, giving rise to so--called conditional algorithms. See, e.g, \cite{ishwaran2001gibbs,walker07,papa08,arbel2017moment}. Conditional samplers for the n-FuRBI priors can be found in Sections S3.2-3 below.}

\textcolor{black}{Alternatively, and this is the route followed in this section, one can use marginal algorithms, that integrate out the random measures and sample sequentially from the predictive distributions \citep[see, for instance,][]{neal2000markov}.}

\textcolor{black}{Given $(X_{i})_{i=1}^{n}$ and $(Y_{j})_{j=1}^{m}$ and using the results in Theorem~2, we can sample iteratively new observations from $\tilde p_1$ as follows}

\textcolor{black}{\begin{algorithm}
		Marginal algorithm - 1
		\begin{tabbing}
			\qquad (a) \enspace Compute weights $\xi_0$, $\{\xi_i^x\}$ and $\{\xi_j^y\}$ from $(X_{i})_{i=1}^{n}$ and $(Y_{j})_{j=1}^{m}$\\
			\qquad (b) \enspace Draw $X_{n+1}$ from
			$m(\d x) = \xi_0P_0(\d x) + \sum_{i= 1}^k\xi_i^x \delta_{X_i^{*}}(\d x)+  \sum_{j = 1}^c\xi_j^yP_{Y_j^*}\left(\d x\right)$
		\end{tabbing}
\end{algorithm}}

\textcolor{black}{The algorithm is straightforward, but relies on the computation of the weights at point (a): this is not optimal, since in general the explicit evaluation can be demanding. Nonetheless, Theorem~1 and Corollary~2 show that, conditionally on a suitable set of latent variables, the posterior representation simplifies greatly. Indeed, given $((X_{i})_{i=1}^{n}, (Y_{j})_{j=1}^{m}, U_1, U_2, {p} )$, the predictive distribution of the first sample is
	\begin{equation}\label{marginal}
		\begin{aligned}
			m(\d x) &\propto  \theta \tau_{1,0}(U_1, U_2) P_0(\d x) +\sum_{(i,j) \in \Delta_p}\frac{\tau_{n_i+1,m_j}(U_1, U_2)}{\tau_{n_i,m_j}(U_1, U_2)}\delta_{X_i^*}(\d x) \\
			&+ \sum_{(i,j) \in \Delta^1_p}\frac{\tau_{n_i+1, 0}(U_1, U_2)}{\tau_{n_i,0}(U_1, U_2)}\delta_{X_i^*}(\d x) + \sum_{(i,j) \in \Delta^2_p}\frac{\tau_{1, m_j}(U_1, U_2)}{\tau_{0,m_j}(U_1, U_2)}P_{Y_j^*}(\d x).
		\end{aligned}
	\end{equation}
	Those new weights, whose derivation can be found in Section S1.4, are 
	easier to compute, as the next example shows.}

\textcolor{black}{\begin{example}[Inverse Gaussian n-FuRBI with equal jumps]{\rm 
			For this case we obtain $\tau_{n,m}(u_1, u_2) = \int_{\R}s^{n+m}e^{-(u_1+u_2)s} \rho(\d s):=\tau_{n+m}(u_1+u_2)$, where $\rho(\d s)$ is the common marginal jump intensity. If the L\'evy intensity is
			$v(\d s, \d x) = {e^{-{s}/{2}}}/({s^{{3}/{2}}}\sqrt{2\pi})\d s \, \alpha(\d x)$
			the resulting normalized CRM corresponds to the normalized inverse Gaussian process introduced in
			\cite{lijoi2005hierarchical}. We then obtain
			$\tau_j(u) = {2^{j-1}\Gamma \left(j-{1}/{2} \right)}/({\sqrt{\pi}(2u+1)^{j-{1}/{2}}})$,
			where $u = u_1+u_2$. Thus, conditionally on the 
			latent variables, we have
			\[
			\begin{aligned}
				m(\d x)& \propto  \theta  P_0(\d x)+\frac{2}{\sqrt{2U+1}}\sum_{(i,j) \in \Delta_p}\left(n_i+m_j-\frac{1}{2}\right)\delta_{X_i^*}(\d x) \\
				&+ \frac{2}{\sqrt{2U+1}}\sum_{(i,j) \in \Delta^1_p}\left(n_i-\frac{1}{2}\right)\delta_{X_i^*}(\d x) + \frac{2}{\sqrt{2U+1}}\sum_{(i,j) \in \Delta^2_p}\left(m_j-\frac{1}{2}\right)P_{Y_j^*}(\d x),
			\end{aligned}
			\]
			where $U = U_1+U_2$. Sampling from this mixture is straightforward.}
\end{example}}
\textcolor{black}{Thus we can derive a second marginal algorithm.}

\textcolor{black}{\begin{algorithm}
		Marginal algorithm - 2
		\begin{tabbing}
			\qquad (a)\enspace Draw $\left(U_1, U_2, {p} \right)$ from their conditional distributions specified in Section~$5$\\
			\qquad (b)\enspace Draw $X_{n+1}$ from $m(\d x)$ in \eqref{marginal}
		\end{tabbing}
\end{algorithm}}

\textcolor{black}{However, even the full conditional distribution of $p$ may not always be available in closed form, and it may be computationally intensive to evaluate, since it may have a very large support. When this is the case, we may encode the latent clustering structure in a more convenient way introducing two arrays of latent variables $\mathcal{C}_x=(c_{i,x})_{i\geq1}$ and $\mathcal{C}_y=(c_{j,y})_{j\geq1}$ such that $c_{i,x}=c_{i',x}$ denotes a tie between $X_{i}$ and $X_{i'}$, $c_{j,y}=c_{j',y}$ denotes a tie between $Y_{j}$ and $Y_{j'}$, while $c_{i,x}=c_{j,y}$ denotes a hyper-tie between $X_{i}$ and $Y_{j}$. Moreover, we reorder the unique values in $\underline{X}^*_{n}$ and $\underline{Y}^*_{m}$, so that
	$X^*_{c}=X_{i}$ if and only if $c_{i,x}=c$ and $Y^*_{c}=Y_{j}$ if and only if $c_{j,y}=c$.
	Therefore, $\mathbbm{P}[c_{n+1,x}=c \mid\mathcal{C}_x,\mathcal{C}_y,\underline{X}^{*}_{n}, \underline{Y}^{*}_{m}]$ is
	\[
	\begin{cases}
		\mathbb{P}[X_{n+1}=X^*_{c}\mid\mathcal{C}_x,\mathcal{C}_y,\underline{X}_n^{*}, \underline{Y}_m^{*}] , & \text{for } c\in{\mathcal{C}_x}\\
		\displaystyle\int\mathbb{P}[X_{n+1}=x\mid \mathcal{C}_y,\underline{Y}_m^{*}]\,p_{Y^*_{c}}(x) \d x, & \text{for }  c\in{\mathcal{C}_{y}} \setminus {\mathcal{C}_x}\\
		\displaystyle\int\mathbb{P}[X_{n+1}=x]\, p_{0}(x) \d x, & \text{otherwise}\\
	\end{cases}
	\]
	Finally, the distribution of ${p}$, given $\mathcal{C}_x$ and $\mathcal{C}_y$, is degenerate. Moreover, the posterior distribution of $(U_1,U_2)$ given ${p}$ is equal to the posterior distribution of $(U_1,U_2)$ given $\mathcal{C}_x$ and $\mathcal{C}_y$. 
	Therefore, we may build a marginal algorithm sampling $\mathcal{C}_x$ and $\mathcal{C}_y$ instead of  ${p}$, without modifying the full conditional distribution for $U_1$ and $U_2$.
	The final marginal algorithm boils down to}

\textcolor{black}{\begin{algorithm}
		Marginal algorithm - 3
		\begin{tabbing}
			\qquad (a)\enspace Draw $(U_1, U_2)$ and $c_{n+1,x}$\\
			\qquad (b)\enspace Sample $X_{n+1}$ from 
			$m(\d x) = \begin{cases}
				\delta_{X_{c_{n+1,x}}^*}(\d x), & \text{if } c_{n+1,x}\in{\mathcal{C}_x}\\
				P_{Y_{c_{n+1,x}}^*}(\d x), & \text{if }  c_{n+1,x}\in{\mathcal{C}_{y}} \setminus {\mathcal{C}_x}\\
				P_{0}(\d x), & \text{otherwise}\\
			\end{cases}$
		\end{tabbing}
\end{algorithm}}

\textcolor{black}{The advantage of such approach is twofold. First, we do not need to sample directly the full conditional distribution of $p$. Second, when the algorithm is applied to mixture models, as in section $6$, sampling the unique values, instead of single observations, improves the mixing of the algorithm (cfr. Neal, 2000).}

\subsection{\textcolor{black}{Conditional posterior sampler based on the law of the CRV}}
\textcolor{black}{To develop a conditional algorithm, we can sample from the distribution of $(\tilde\mu_1,\tilde\mu_2)$ and then normalize each draw to get an approximate realization of the random probabilities. Here we develop a general conditional sampler based on this approach that can be tailored to specific choices of the intensity in the prior.}

\textcolor{black}{By Theorem 1, we know that a posteriori  ${\mu}=(\mu_1,\mu_2)$ is the sum of two components, that we call ${\mu}_{obs}$ and $\hat{{\mu}}$ and are such that
	\[
	{\mu}_{obs} = \sum_{(i,j) \in \Delta_p}{J}_{i,j}\delta_{\left( X_i^*, Y_j^* \right)}+ \sum_{(i,j) \in \Delta^1_p}{J}_{i,c+1}\delta_{\left( X_i^*, Z_i^x \right)}+\sum_{(i,j) \in \Delta^2_p}{J}_{k+1,j}\delta_{\left( Z_j^y, Y_j^* \right)}.
	\]
	where ${J}_{i,j}=(J^1_{i,j},J^2_{i,j})$, and
	\[
	\hat{{\mu}}=\left(\sum\limits_{h=1}^{+\infty} S^1_h \delta_{(V_{h},W_h)},
	\sum\limits_{h=1}^{+\infty} S^2_h \delta_{(V_{h},W_h)}\right)
	\]
	is a CRV with L\'evy intensity $e^{-U_1s_1-U_2s_2}\rho(\d s_1, \d s_2)G_0(\d x)$. 
	Denote the marginal and joint tail integrals of $\hat{{\mu}}$ as 
	\[
	N_1(s)=\int\limits_s^{+\infty}\int\limits_0^{+\infty}e^{-U_1s_1-U_2s_2}\rho(\d u_1, \d u_2), \quad N_2(s)=\int\limits_0^{+\infty}\int\limits_s^{+\infty}e^{-U_1s_1-U_2s_2}\rho(\d u_1, \d u_2)
	\]
	and 
	\[
	N(s_1,s_2)=\int\limits_{s_1}^{+\infty}\int\limits_{s_2}^{+\infty}e^{-U_1s_1-U_2s_2}\rho(\d u_1, \d u_2).
	\]
	Lastly, define the correspondent L\'evy copula as $ F(x,y)=N(N_1^{-1}(x),N_2^{-1}(y))$. If $F(x,y)$ is continuous on $[0,+\infty]^2$, the iterative conditional sampler based on the Ferguson and Klass algorithm \citep{ferguson1972representation} reads
	\begin{enumerate}
		\item[(a)] Generate $\mu_{obs}$ as follows
		\begin{enumerate}
			\item[(a1)] Generate $\left(U_1, U_2, \p\right)$ from the distributions specified in Section $5$;
			\item[(a2)] Generate ${J}_{i,j}=(J^1_{i,j},J^2_{i,j})$ from the distributions specified in Theorem 1;
			\item[(a3)] Generate $Z^x_i$ and $Z^y_j$ from the distributions specified in Section $5$.
		\end{enumerate}
		\item[(b)] Generate an approximation of $\hat\mu$, given by $\left(\sum\limits_{h=1}^{M} S^1_h \delta_{(V_{h},W_h)},
		\sum\limits_{h=1}^{M} S^2_h \delta_{(V_{h},W_h)}\right)$ as follows
		\begin{enumerate}
			\item[(b1)] Generate $\xi^x_1,\ldots,\xi^x_M$ from a Poisson Process with unit rate;
			\item[(b2)] Generate $\xi^y_1,\ldots,\xi^y_M$ from $\xi^y_h\sim\frac{\partial}{\partial x}F(x,\xi)\Biggr|_{x=\xi^x_h}$
			\item[(b3)] Determine $(S^1_{h},S^2_h)$ solving 
			\[
			\xi^x_h=N_1(S^1_{h}) \qquad \xi^y_h=N_2(S^2_{h})
			\]
			\item[(b4)] Generate $(V_{h},W_h)$ from $G_0$.
		\end{enumerate}
		\item[(c)] Obtain a draw from $\tilde p_1$ as follows
		\[
		\tilde p_1\approx\frac{\sum\limits_{h=1}^{M} S^1_{h} \delta_{V_{h}}+
			\sum_{(i,j) \in \Delta_p}J^1_{i,j}\delta_{X_i^*}+ 
			\sum_{(i,j) \in \Delta^1_p}J^1_{i,c+1}\delta_{X_i^*}+
			\sum_{(i,j) \in \Delta^2_p}J^1_{k+1,j}\delta_{Z_j^y}}
		{\sum\limits_{h=1}^{M} S^1_{h}+
			\sum_{(i,j) \in \Delta_p}J^1_{i,j}+ 
			\sum_{(i,j) \in \Delta^1_p}J^1_{i,c+1}+
			\sum_{(i,j) \in \Delta^2_p}J^1_{k+1,j}}.
		\]
		An analogous approximation can be computed for $\tilde p_2$. 
\end{enumerate}}
\subsection{\textcolor{black}{Conditional posterior sampler for gamma process with equal jumps}}\label{Gamma_MCMC}
\textcolor{black}{Alternatively, a second strategy for conditional algorithms is to sample approximate draws from the posterior distribution of the random probabilities $(\tilde p_1,\tilde p_2)$. We provide an example for gamma FuRBI CRMs with equal jumps.}

\textcolor{black}{In the case of a process with equal jumps, we know from the definition that the measures in the product space are $p_1=p_2=p$. Therefore, posterior inference can be conducted without loss of generality on
	\[
	p = \sum_{k \geq 1}\bar{W}_k \delta_{(\theta_k, \phi_k)}, \quad \text{with } (\theta_k, \phi_k) \simiid G_0(\cdot),
	\]
	where $\{\bar{W}_k\}_k$ are the weights of a Dirichlet process, which can defined through the popular stick-breaking construction \citep{sethuraman1994constructive}. In this context, \cite{ishwaran2001gibbs} developed a conditional algorithm for hierarchical mixture models, called \emph{blocked Gibbs} sampler, based on the approximation
	\[
	p \approx \sum_{k = 1}^N\bar{W}_k \delta_{(\theta_k, \phi_k)}, \quad \text{for large $N$}.
	\]
	Exploiting the appealing analytical properties of the Dirichlet process, it is possible to devise simple formulae for the posterior distribution of the $N$ jumps and $N$ locations: see Section $5$ of \cite{ishwaran2001gibbs} for more details.}

\subsection{\textcolor{black}{Sampling from mixture models using marginal algorithms}}
\textcolor{black}{Consider the mixture model defined in Section $6.1$. Starting from Algorithm $2$ in Section $S4.1$, we devise a Gibbs sampler for drawing from the posterior distribution of $(X_{i})_{i=1}^{n}$ and $(Y_{j})_{j=1}^{m}$.}

\textcolor{black}{Denoting by $\bm{X}^t = (X_1^t, \dots, X_n^t)$ and $\bm{Y}^t = (Y_1^t, \dots, Y_n^t)$ the vectors sampled at step $t$, the algorithm reads
	\begin{enumerate}
		\item Initialize at random $\bm{X}^0$ and $\bm{Y}^0$.
		\item For any $t \geq 1$ do:
		\begin{enumerate}
			\item[(b.1)] Draw $\left(U_1, U_2, \p \right)$ given $\bm{X}^{t-1}$ and $\bm{Y}^{t-1}$, from the distributions specified in Theorem $1$.
			\item[(b.2)]  Draw $\bm{X}_{n}$, given $\left(U_1, U_2, \p\right)$ as follows: for any $i$ sample $X_i^t$ from
			\begin{multline*}
				q(\d x \mid \bm{X}_{-i}^t) = q_{i,0}(U_1, U_2)P_0(\d x) + \sum_{(i,j) \in \Delta_{\bm{p}}} q_{i,j}(U_1, U_2) \delta_{X_i^{*}}\\
				+ \sum_{(i,j) \in \Delta^1_{\bm{p}}} q^1_{i,j}(U_1, U_2) \delta_{X_i^{*}}(\d x)+ \sum_{(i,j) \in \Delta_{\bm{p}}} q^2_{i,j}(U_1, U_2)P_{Y_j^*}(\d x),
			\end{multline*}
			where $\bm{X}_{-i}^t = \left(X_1^t, \dots, X_{i-1}^t, X_{i+1}^{t-1}, \dots X_n^{t-1}\right)$, with unique values $\left(X_1^*,\dots, X_k^*\right)$ and multiplicities $(n_1, \dots, n_k)$. Analogously, $\left(Y_1^*, \dots, Y_c^*\right)$ denotes the unique values in $\bm{Y}^{t-1}$ with multiplicities $(m_1, \dots, m_c)$. The mixing proportions are given by		
			\[
			\begin{aligned}
				&q_{i,0}(U_1, U_2) \propto \theta \tau_{1,0}(U_1, U_2)\int_{\X}f(W_i \mid x)P_0(\d x), \\
				& q_{i,j}(U_1, U_2) \propto \frac{\tau_{n_i+1, m_j}(U_1, U_2)}{\tau_{n_i, m_j}(U_1, U_2)}f(W_i \mid X_i^*),\\
				&q^1_{i,j}(U_1, U_2) \propto \frac{\tau_{n_i+1, 0}(U_1, U_2)}{\tau_{n_i, 0}(U_1, U_2)}f(W_i \mid X_i^*), \\
				& q^2_{i,j}(U_1, U_2) \propto \frac{\tau_{1, m_j}(U_1, U_2)}{\tau_{0, m_j}(U_1, U_2)}\int_{\X}f(W_i \mid x)P_{Y_j^*}(\d x)
			\end{aligned}
			\]	
		\end{enumerate}
		\item[(c)] Sample $\bm{Y}^t$ similarly to point (b).
	\end{enumerate}
	Once a sample of $(X_{i})_{i=1}^{n}$ and $(Y_{j})_{j=1}^{m}$ is available, sampling new observations $X_{n+1}$ and $Y_{n+1}$ proceeds as explained in Section $S3.1$.}

\section{\textcolor{black}{Additional simulation studies}}
\subsection{\textcolor{black}{Additional simulation scenarios}}
\textcolor{black}{We consider the same setting of Section $6.2$ in the main manuscript, with different data generating distributions. Formally we have
	\[
	W_i \simiid p(\cdot-10), \quad \mbox{var}_j \simiid p(\cdot-v),
	\]
	where $v \in [-16, 16]$ and $p(\cdot)$ is the density function of a zero mean random variable. In the main manuscript we let $p(\cdot) = N(\cdot \mid 0,1)$, while here we consider three different choices
	\[
	p_1(\cdot) = \text{Exp}(\cdot \mid 1), \quad p_2(\cdot) = 0.5N(\cdot \mid 5,1)+0.5N(\cdot \mid -5, 1), \quad p_3(\cdot) = t(\cdot \mid 3),
	\]
	where $t(\cdot \mid q)$ denotes the density of a Student's t distribution with $q$ degrees of freedom. We let $i = 1, \dots, 20$, $j = 1, \dots, 100$ and consider the same nonparametric models of Section $6.2$, with Gaussian kernel. Therefore, the prior specification is misspecified in the first and third case, with different tail behaviours of the kernel with respect to the true data generating mechanism. This implies a more complex behaviour of the latent clustering structure: indeed the posterior distribution places positive mass to more than one clusters, in order to accommodate for the misspecification.}
\begin{figure}[h!]
	\centering
	\captionsetup{width=\linewidth}
	\includegraphics[width=.49\textwidth]{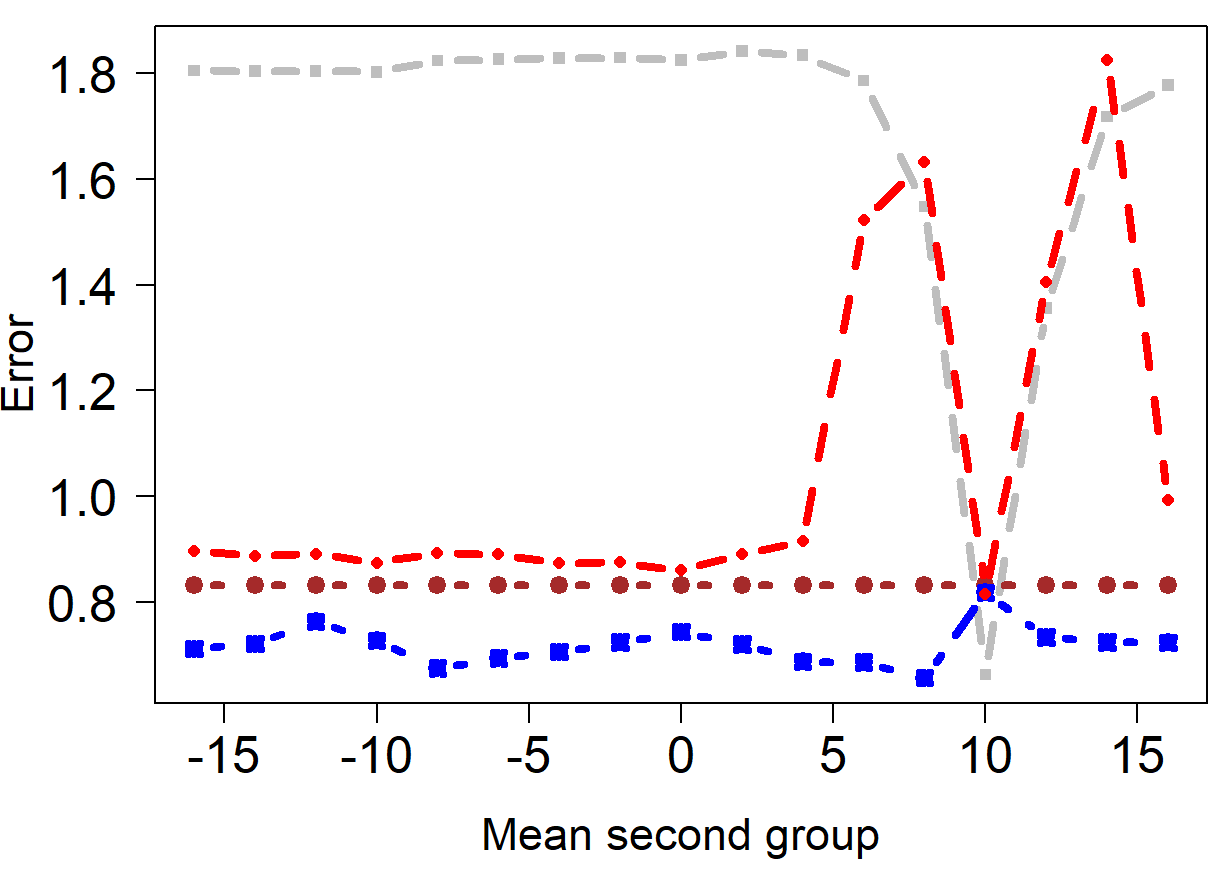} \,
	\includegraphics[width=.49\textwidth]{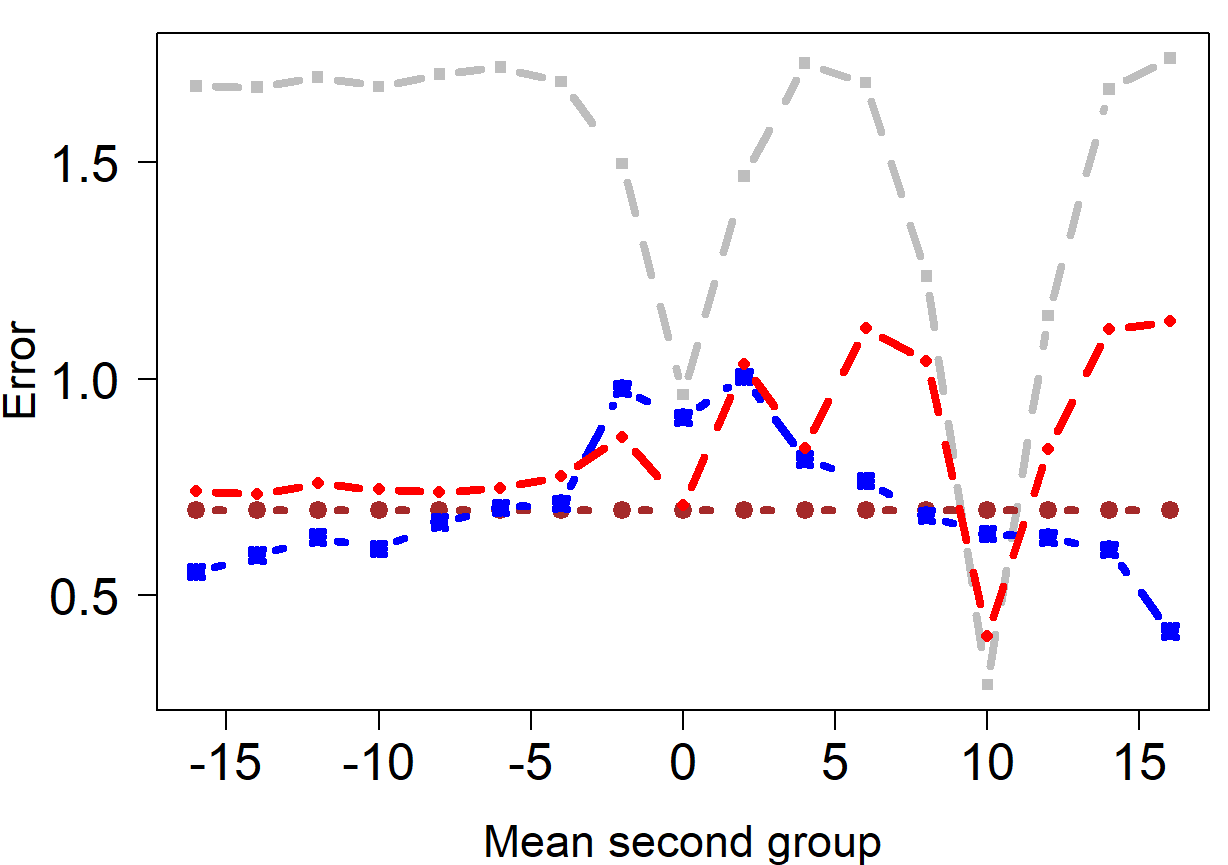}\\
	\includegraphics[width=.49\textwidth]{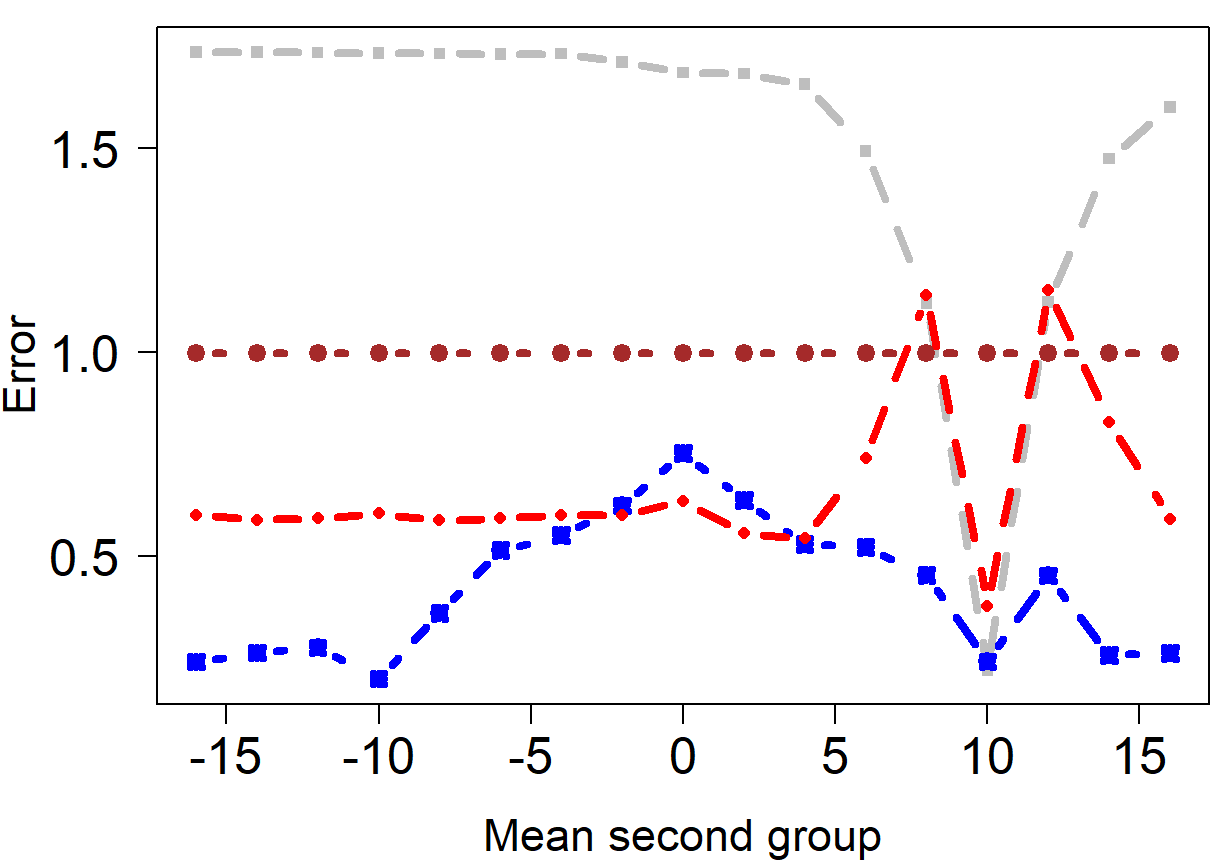} \,
	\includegraphics[width=.35\textwidth]{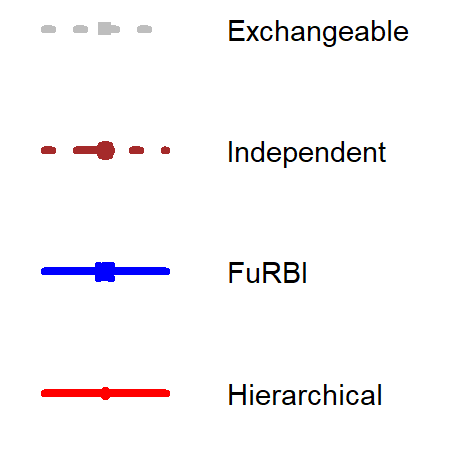}
	
	\caption{ \textcolor{black}{Mean integrated error (computed on a grid and as the median over $50$ different samples) for the four models, as the true mean of the second group varies. Rotating clockwise from the top left panel: data generated from shifted exponential, mixtures of two Gaussians and shifted Student's t distributions.}}
	\label{fig_sim1}
\end{figure}
\textcolor{black}{The mean integrated error for the three cases is depicted in Figure \ref{fig_sim1}, for different values of $v$. The interpretation is similar to the one discussed in Section $6.2$: the FuRBI specification yields an advantage especially when $v$ is far from $0$, corresponding to the prior mean, and from $10$, when the means of the two groups coincide. Indeed, in the first case the borrowing provides little information, while in the second one exchangeability holds.}

\textcolor{black}{The second setting, corresponding to the two-components mixture, apparently seems more problematic for the FuRBI model, which yields a less distinct advantage. Clearly, when $v$ is close to zero the exchangeable and hierarchical models are favoured, since the two true distributions share one of the modes. Moreover, the availability of only $20$ observations for the first group makes it more difficult to both detect the presence of two clusters and tune appropriately the correlation. Indeed, the left part of figure \ref{fig_sim2} depicts the error when $50$ observations for the first group are collected: as expected, the performances of the FuRBI approach significantly improve.}

\begin{figure}[h!]
	\centering
	\captionsetup{width=\linewidth}
	\includegraphics[width=0.49\textwidth]{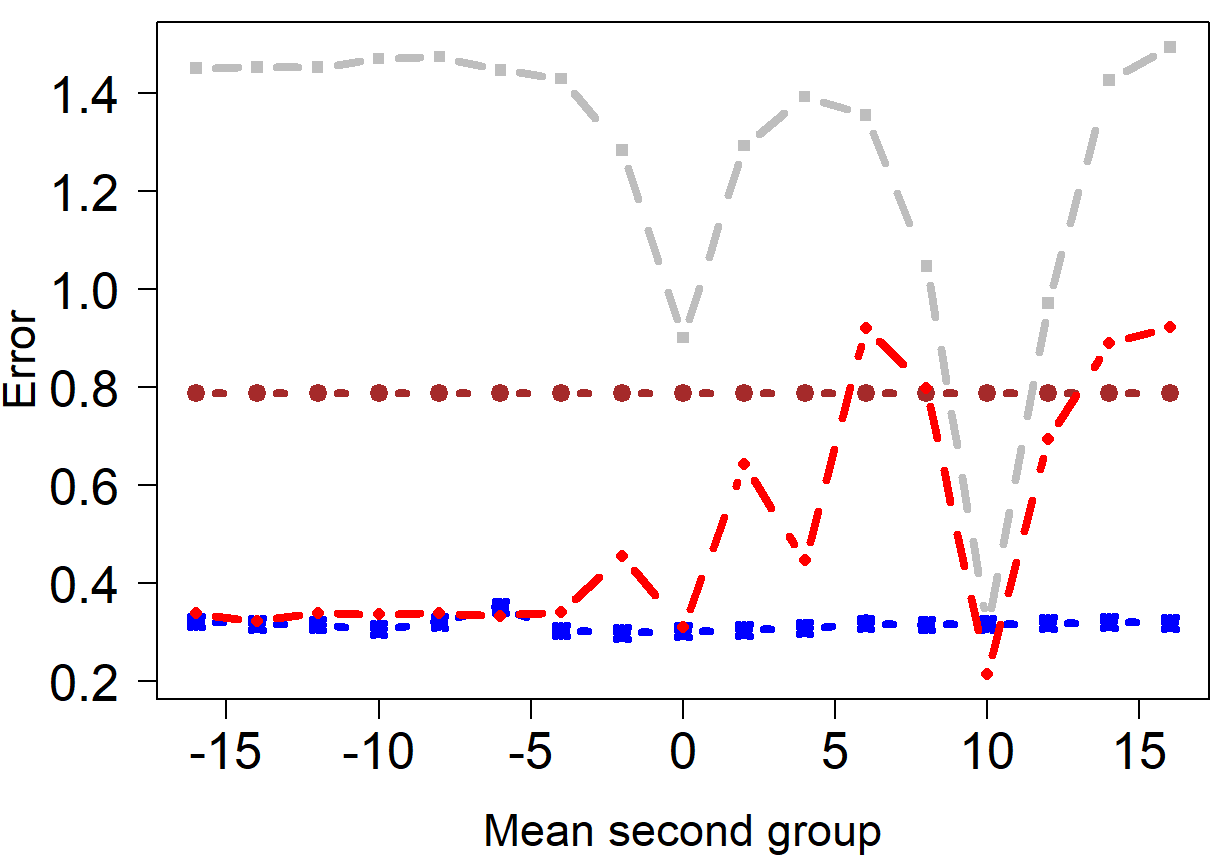}
	\includegraphics[width=0.49\textwidth]{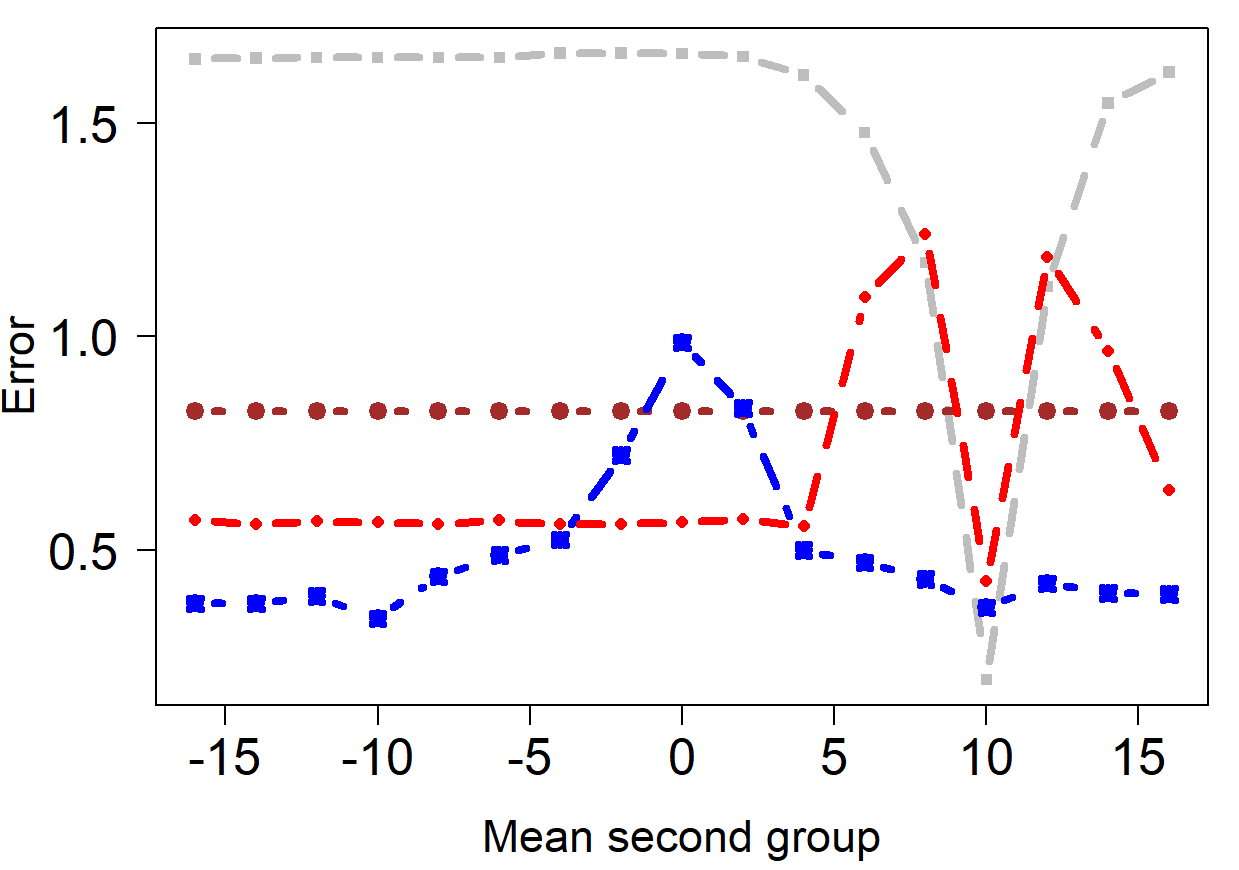}
	\caption{\textcolor{black}{Mean integrated error (computed on a grid and as the median over $50$ different samples) for the four models, as the true mean of the second group varies. Left: data generated from mixtures of two Gaussians ($50$ observations for the first group). Right: data generated from  shifted Student's t (first group) and shifted exponential (second group) distributions.}}
	\label{fig_sim2}
\end{figure}
\textcolor{black}{Finally, the right part of figure \ref{fig_sim2} shows the error when the two distributions are different: the first group is endowed with a Student's t density, while the second one is exponentially distributed. Notice that the two groups are now very far in distributional sense, especially in terms of tail behaviour. The plot indicates an interesting trade-off: when $v$ is far from the prior mean (i.e. $0$) the FuRBI approach allows to alleviate the prior misspecification, otherwise borrowing information from very different distributions may be detrimental.}

\subsection{\textcolor{black}{Logit stick-breaking prior and borrowing of information}}

\begin{figure}[H]
	\centering
	\captionsetup{width=\linewidth}
	\includegraphics[width=0.45\textwidth]{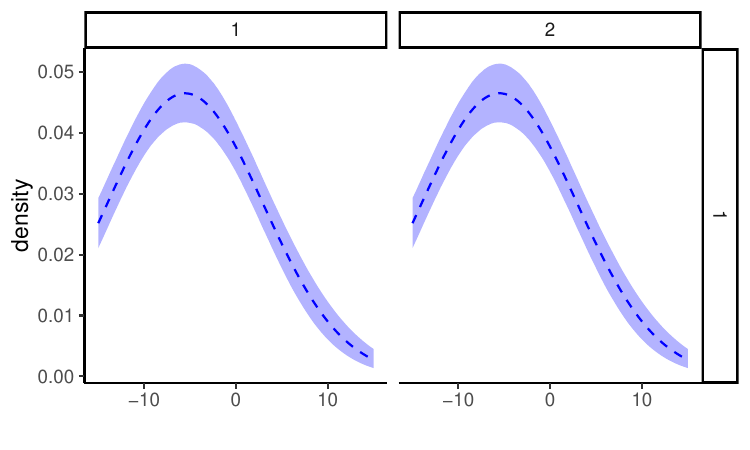}
	\includegraphics[width=0.45\textwidth]{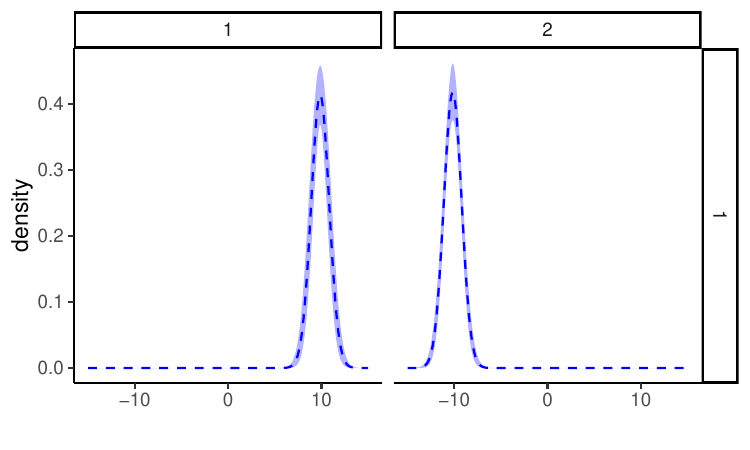}
	\caption{\textcolor{black}{Left panel: density estimates for the logit stick-breaking model with only dependent weights, and thus, $\rho_0=1$. Right panel: density estimates for the logit stick-breaking model with dependent weights and atoms. Shaded areas denote 95\% credible intervals.
			Data are simulated according to $ W_i \simiid N(\cdot \mid 10,1)$, for $i = 1, \dots, 20$ (for sample n.1), and
			$\mbox{var}_j \simiid N(\cdot \mid -10, 1)$, for $j = 1, \dots, 100$ (for sample n.2).}}
	\label{fig:LSBP}
\end{figure}
\textcolor{black}{Figure \ref{fig:LSBP} is based on the same data of Section $6.2$. See \cite{rigon2021ogit} for the model and the associated algorithm. }
\section{Predicting stocks and bonds returns: additional results}

\subsection{Density estimation for bond returns}
\begin{figure}[H]
	\centering
	\begin{subfigure}[b]{0.3\textwidth}
		\centering
		\includegraphics[width=\textwidth]{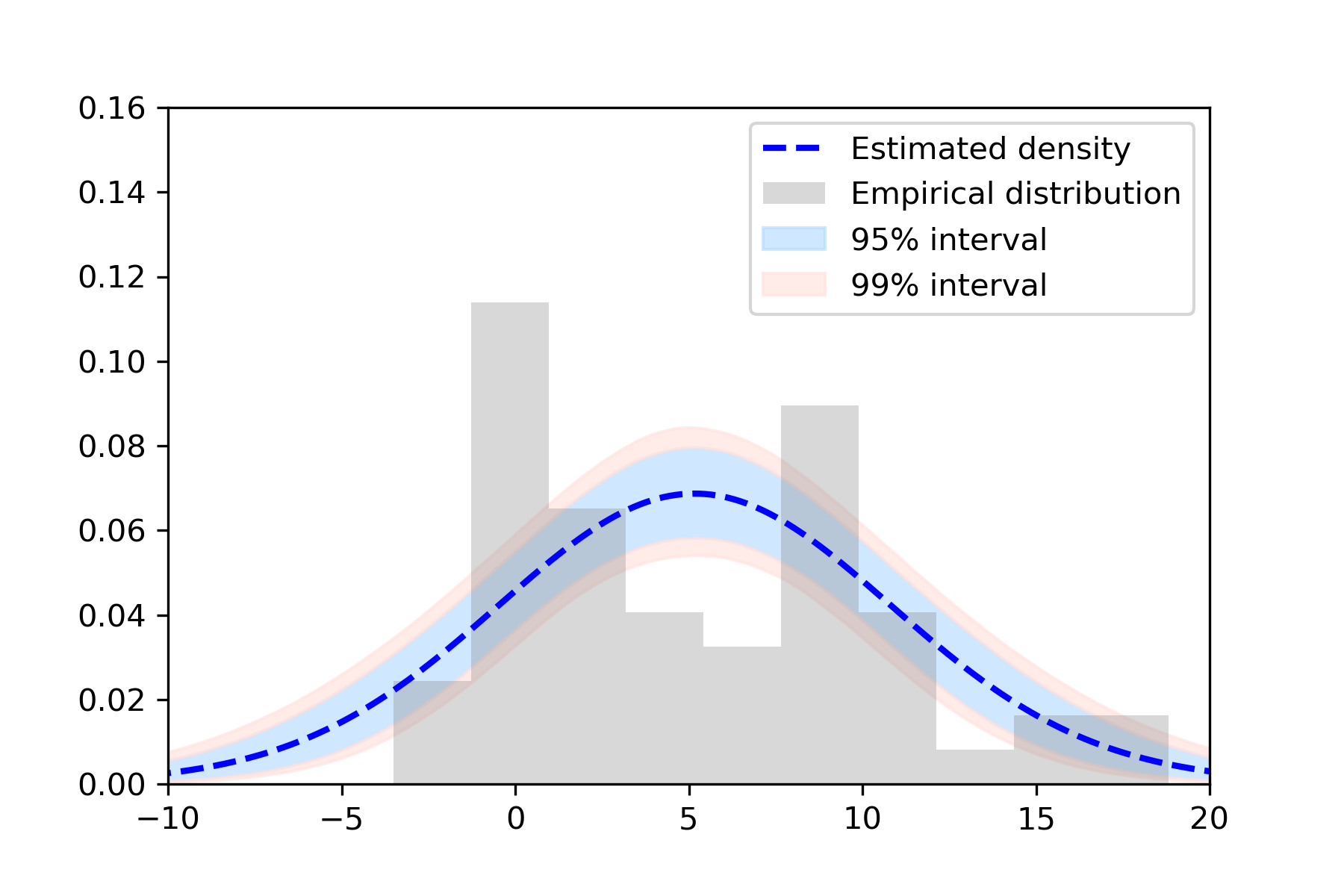}
		\caption{FuRBI full}
	\end{subfigure}
	\hfill
	\begin{subfigure}[b]{0.3\textwidth}
		\centering
		\includegraphics[width=\textwidth]{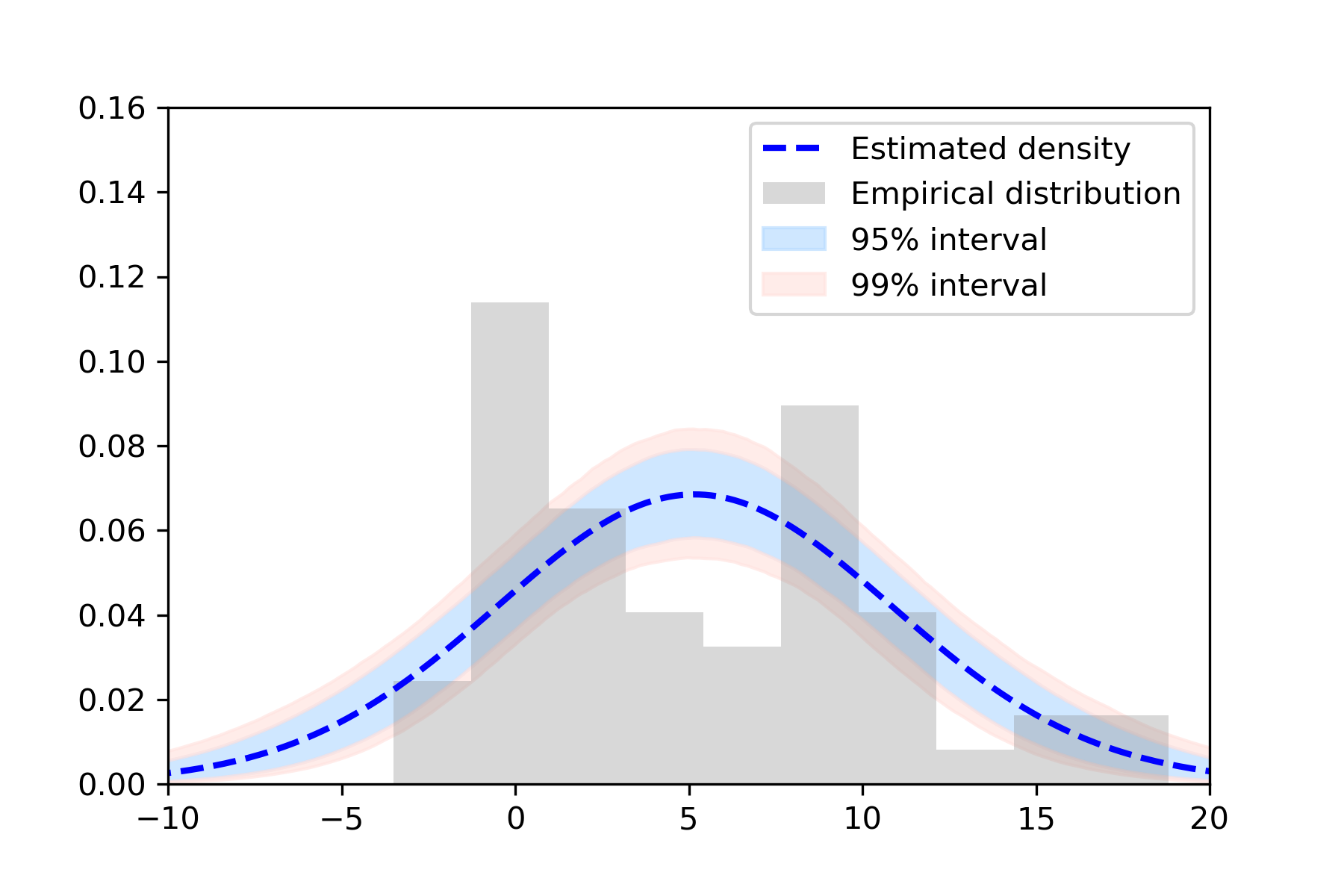}
		\caption{FuRBI $-0.95$}
	\end{subfigure}
	\hfill
	\begin{subfigure}[b]{0.3\textwidth}
		\centering
		\includegraphics[width=\textwidth]{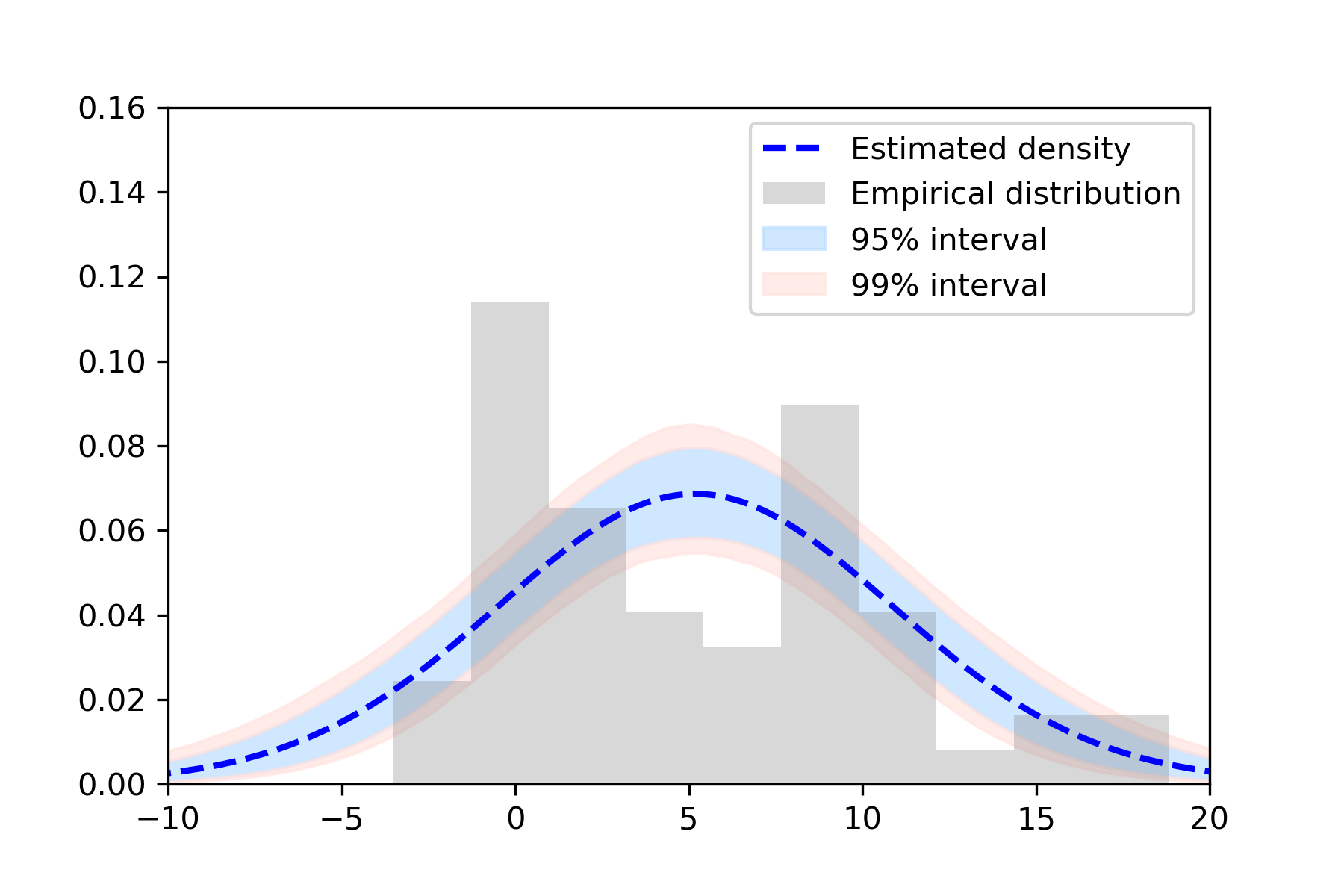}
		\caption{FuRBI $0.95$}
	\end{subfigure}
	\begin{subfigure}[b]{0.3\textwidth}
		\centering
		\includegraphics[width=\textwidth]{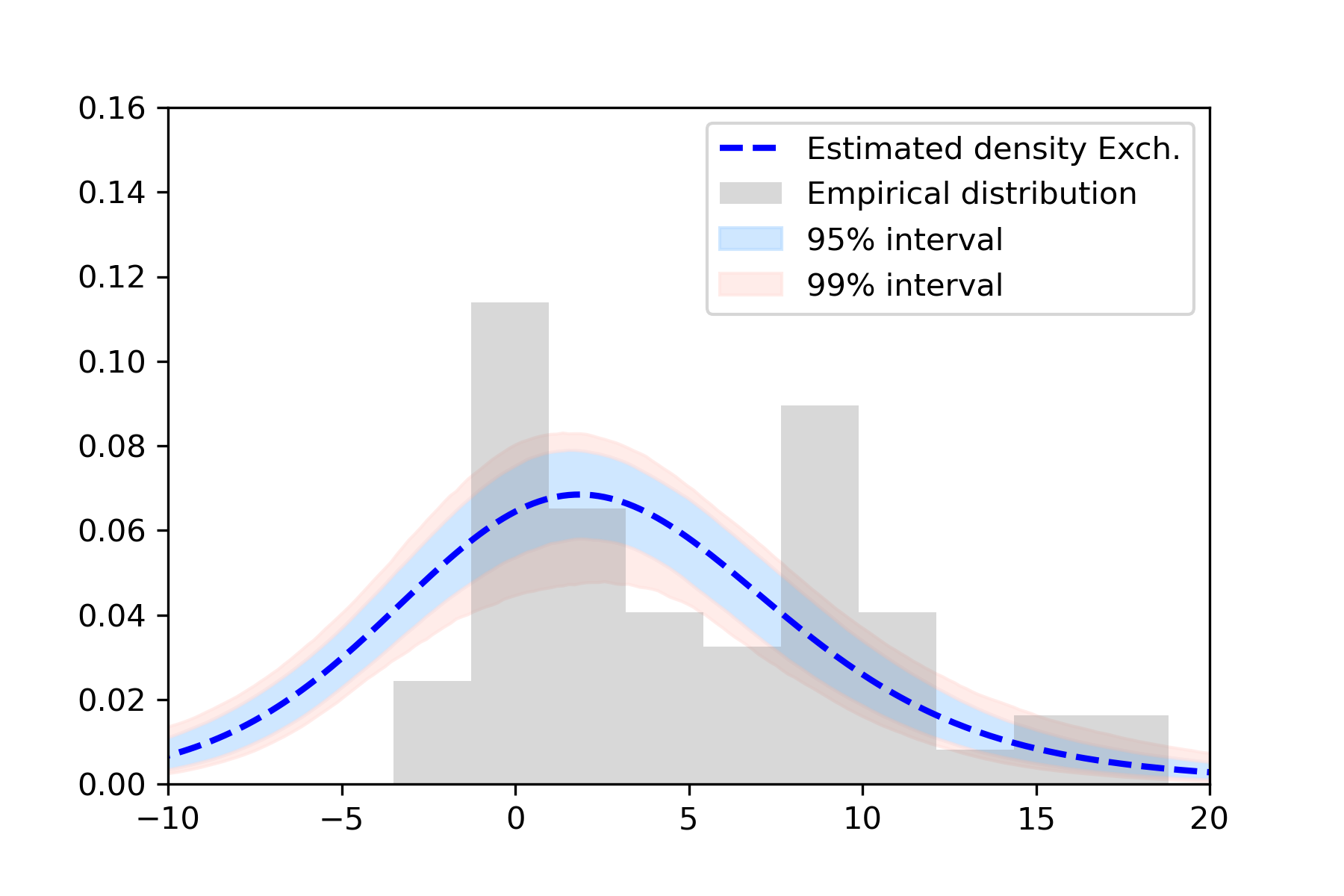}
		\caption{Exchangeable model}
	\end{subfigure}
	\hfill
	\begin{subfigure}[b]{0.3\textwidth}
		\centering
		\includegraphics[width=\textwidth]{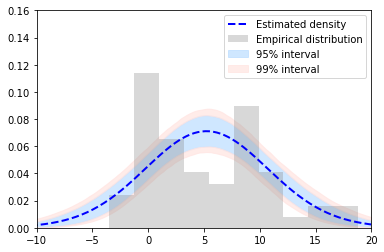}
		\caption{GM-dependent model}
	\end{subfigure}
	\hfill
	\begin{subfigure}[b]{0.3\textwidth}
		\centering
		\includegraphics[width=\textwidth]{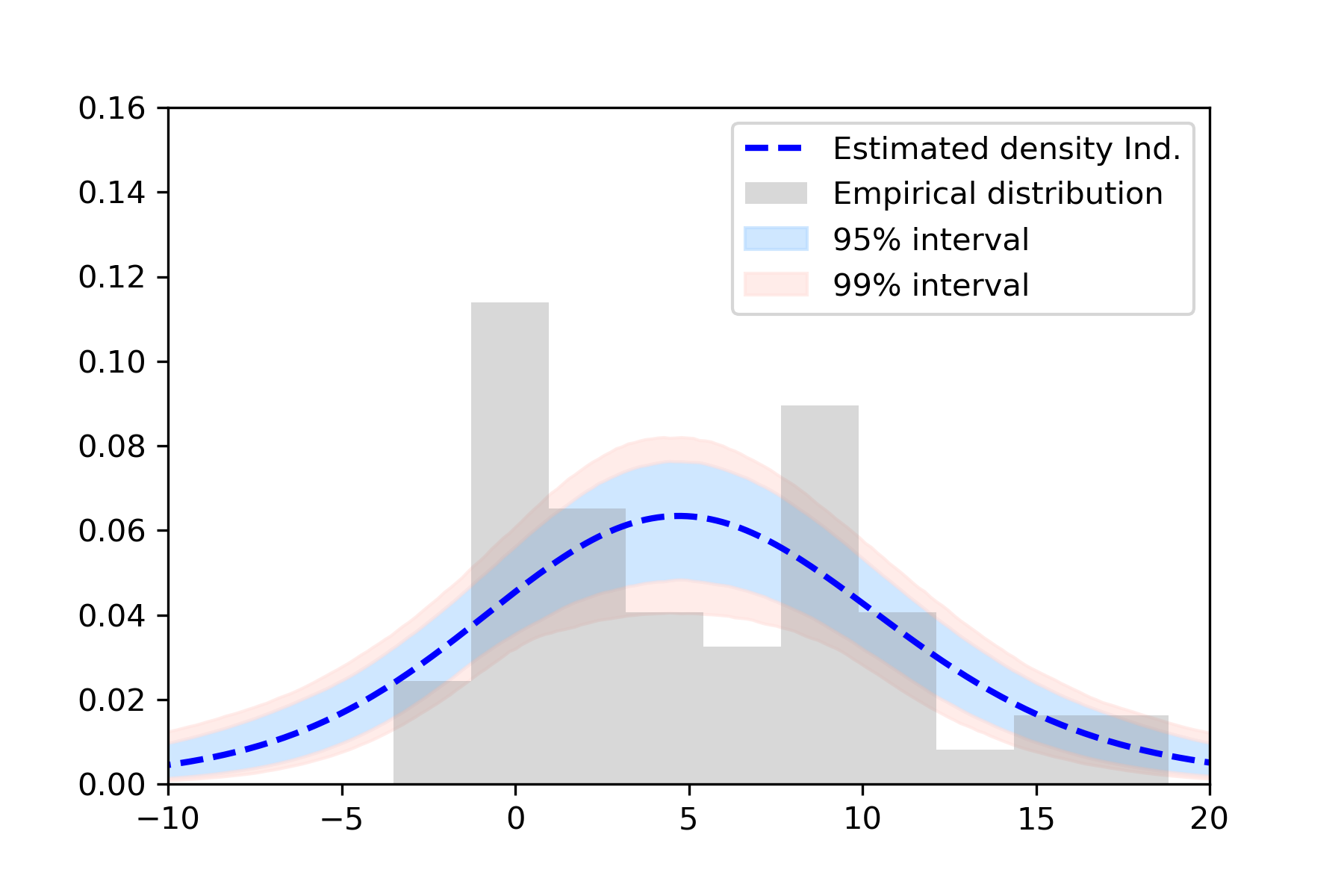}
		\caption{Independent model}
	\end{subfigure}
	\caption{\label{fig:densest} Density estimates for bonds returns.}
\end{figure}

\subsection{Sensitivity analysis}
Figure~\ref{fig:densest_sens} shows the results obtained with different specifications of the hyperparameters, which are
\begin{itemize}
	\item Specification n.1: $\lambda_j=0.1$, $\alpha_j=3$, and $\beta_j=3$, $j=1,2$,
	\item Specification n.2: $\lambda_j=0.1$, $\alpha_j=1.5$, and $\beta_j=4.5$, $j=1,2$,
	\item Specification n.3: $\lambda_j=0.01$, $\alpha_j=0.1$, and $\beta_j=0.2$, $j=1,2$.
\end{itemize} 

\begin{figure}[H]
	\centering
	\begin{subfigure}[b]{0.30\textwidth}
		\centering
		\includegraphics[width=\textwidth]{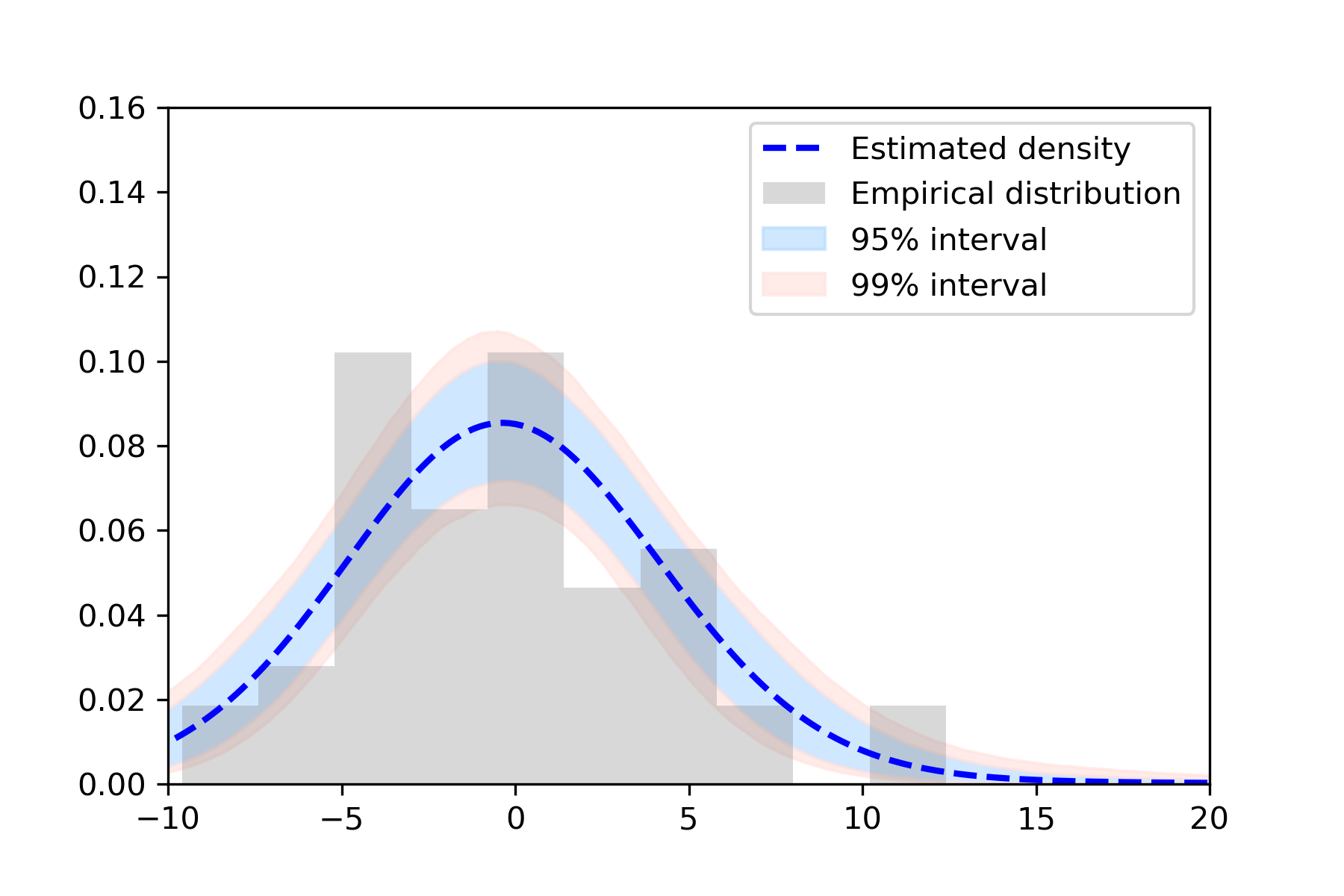}
		\caption{Spec. 1: stocks}
	\end{subfigure}
	\hfill
	\begin{subfigure}[b]{0.3\textwidth}
		\centering
		\includegraphics[width=\textwidth]{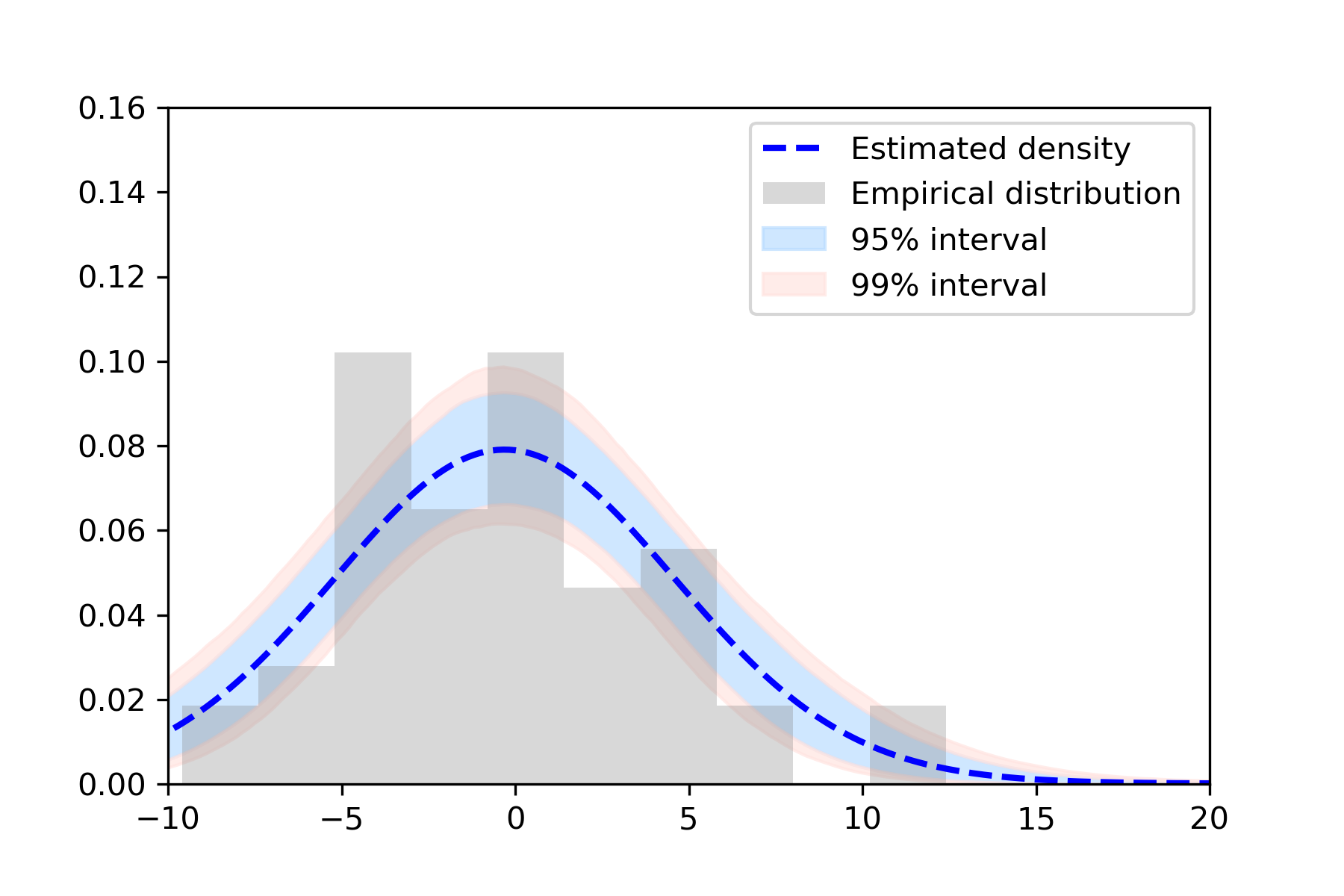}
		\caption{Spec. 2: stocks}
	\end{subfigure}
	\hfill
	\begin{subfigure}[b]{0.3\textwidth}
		\centering
		\includegraphics[width=\textwidth]{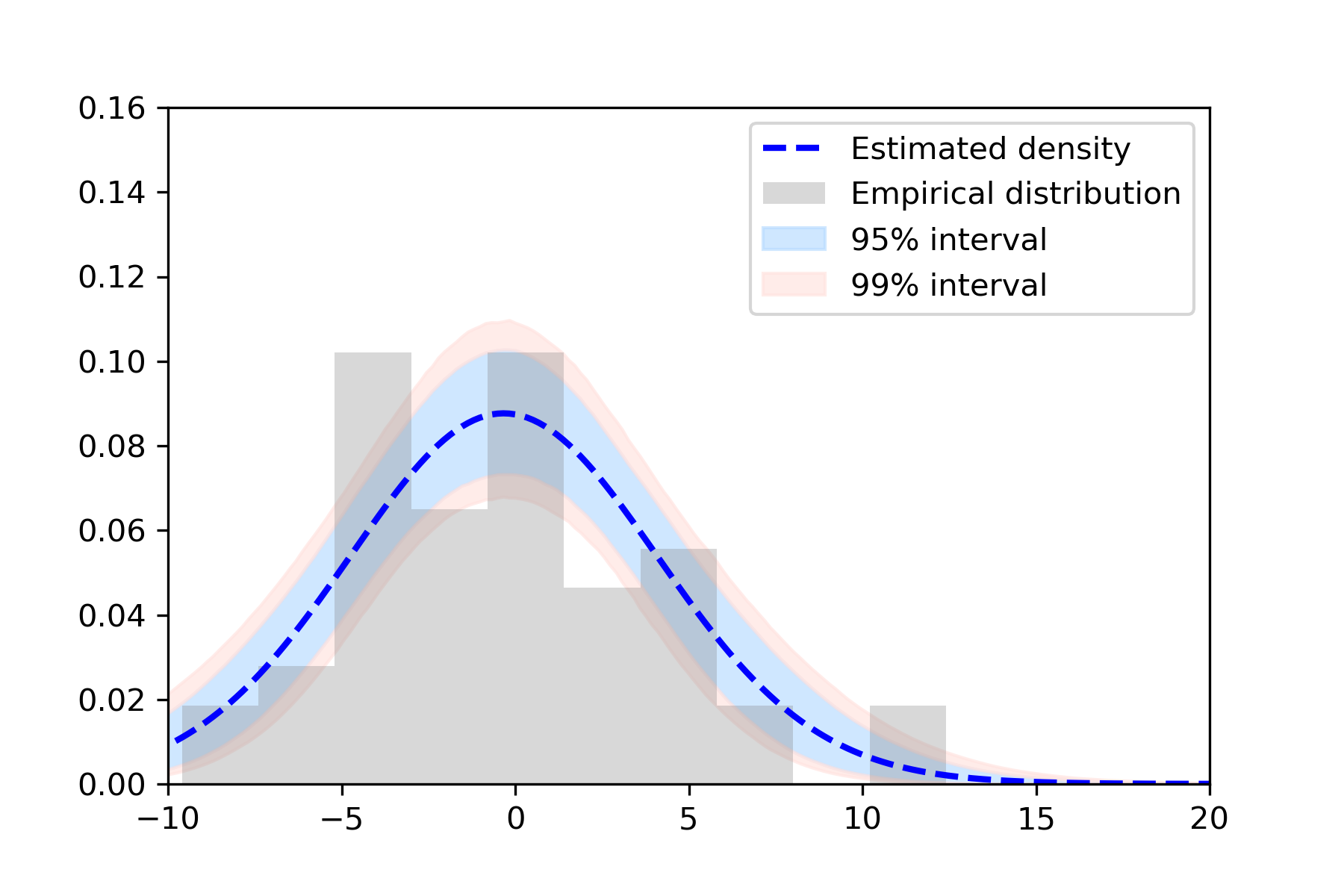}
		\caption{Spec. 3: stocks}
	\end{subfigure}\\
	\begin{subfigure}[b]{0.3\textwidth}
		\centering
		\includegraphics[width=\textwidth]{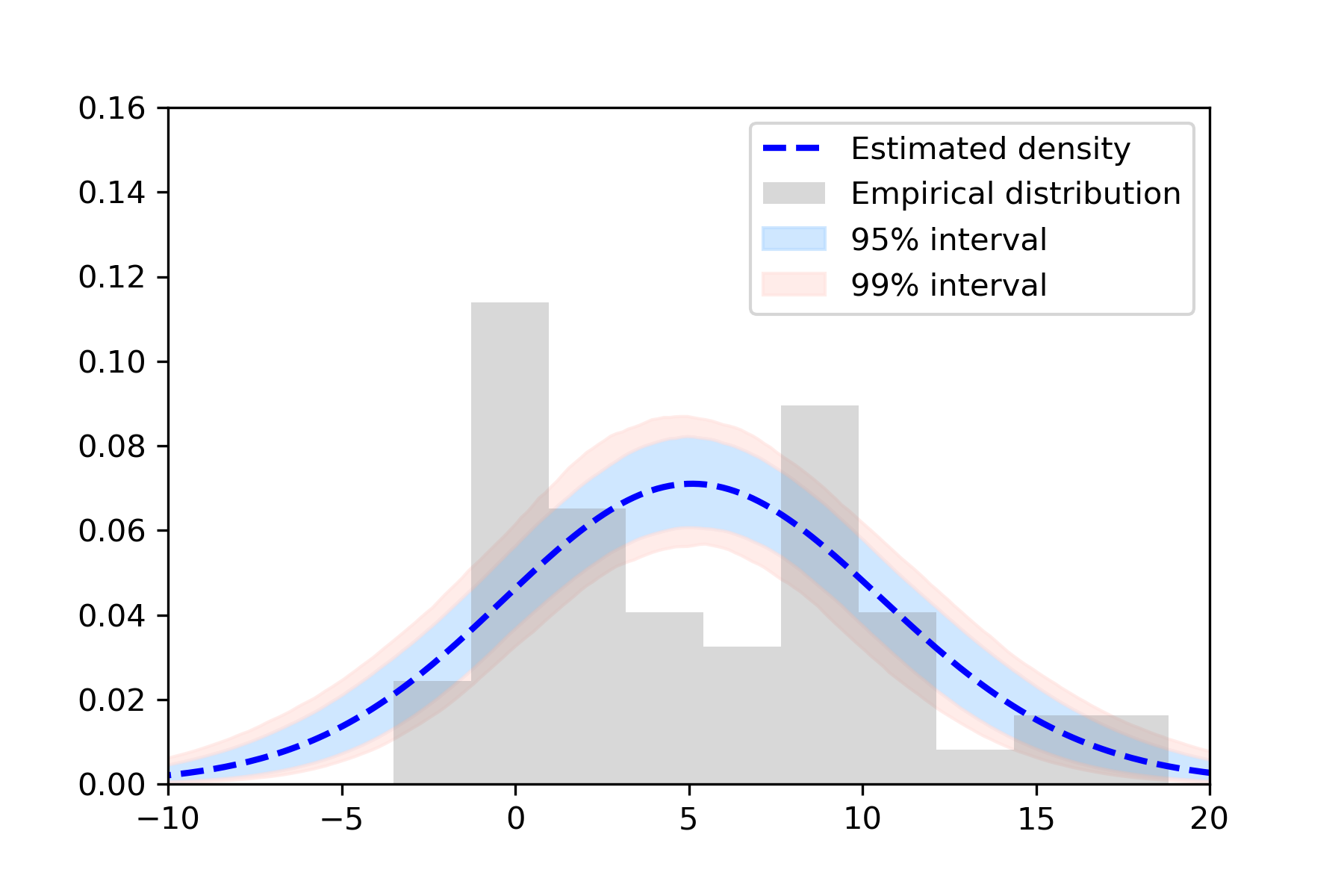}
		\caption{Spec. 1: bonds}
	\end{subfigure}
	\hfill
	\begin{subfigure}[b]{0.3\textwidth}
		\centering
		\includegraphics[width=\textwidth]{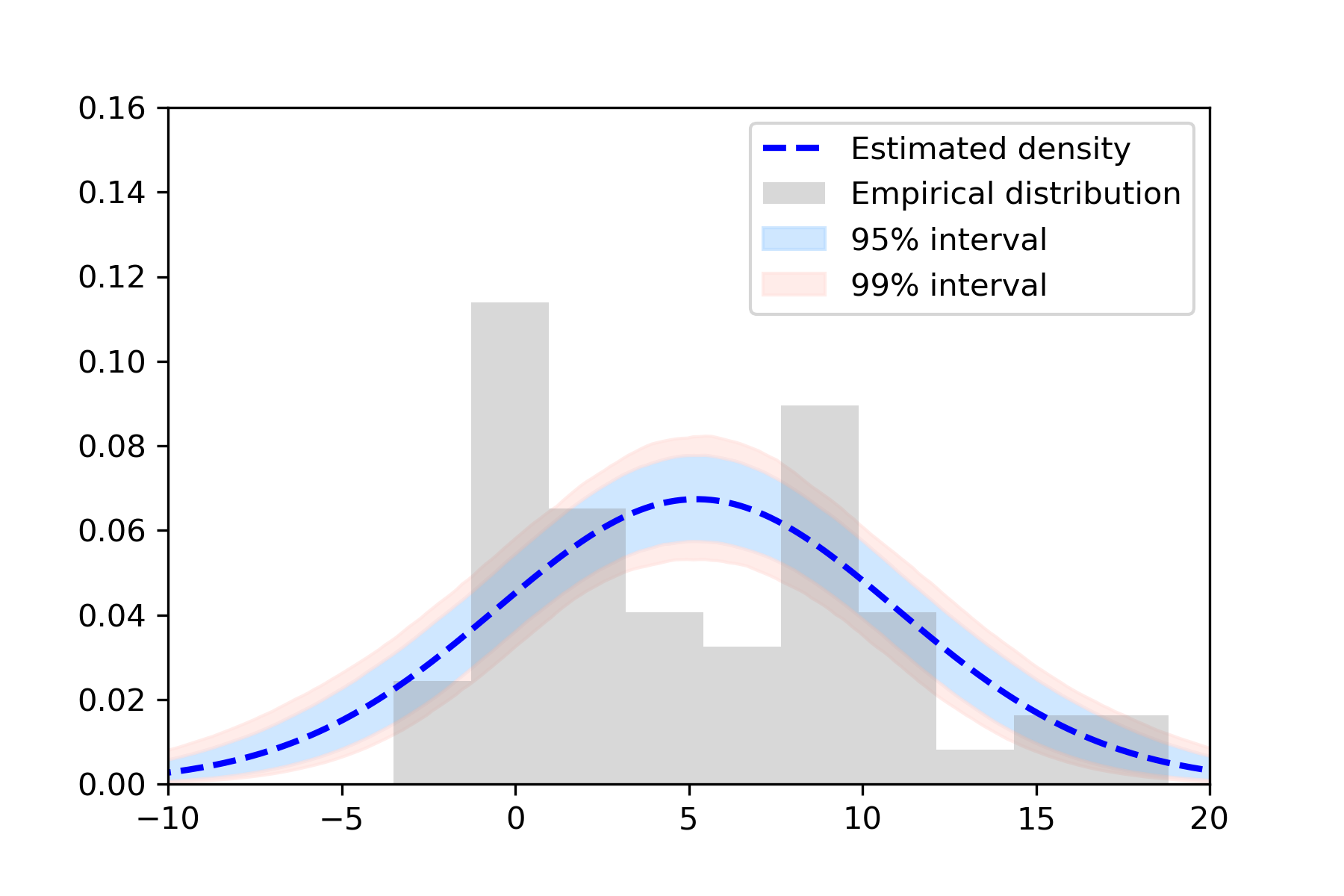}
		\caption{Spec. 2: bonds}
	\end{subfigure}
	\hfill
	\begin{subfigure}[b]{0.3\textwidth}
		\centering
		\includegraphics[width=\textwidth]{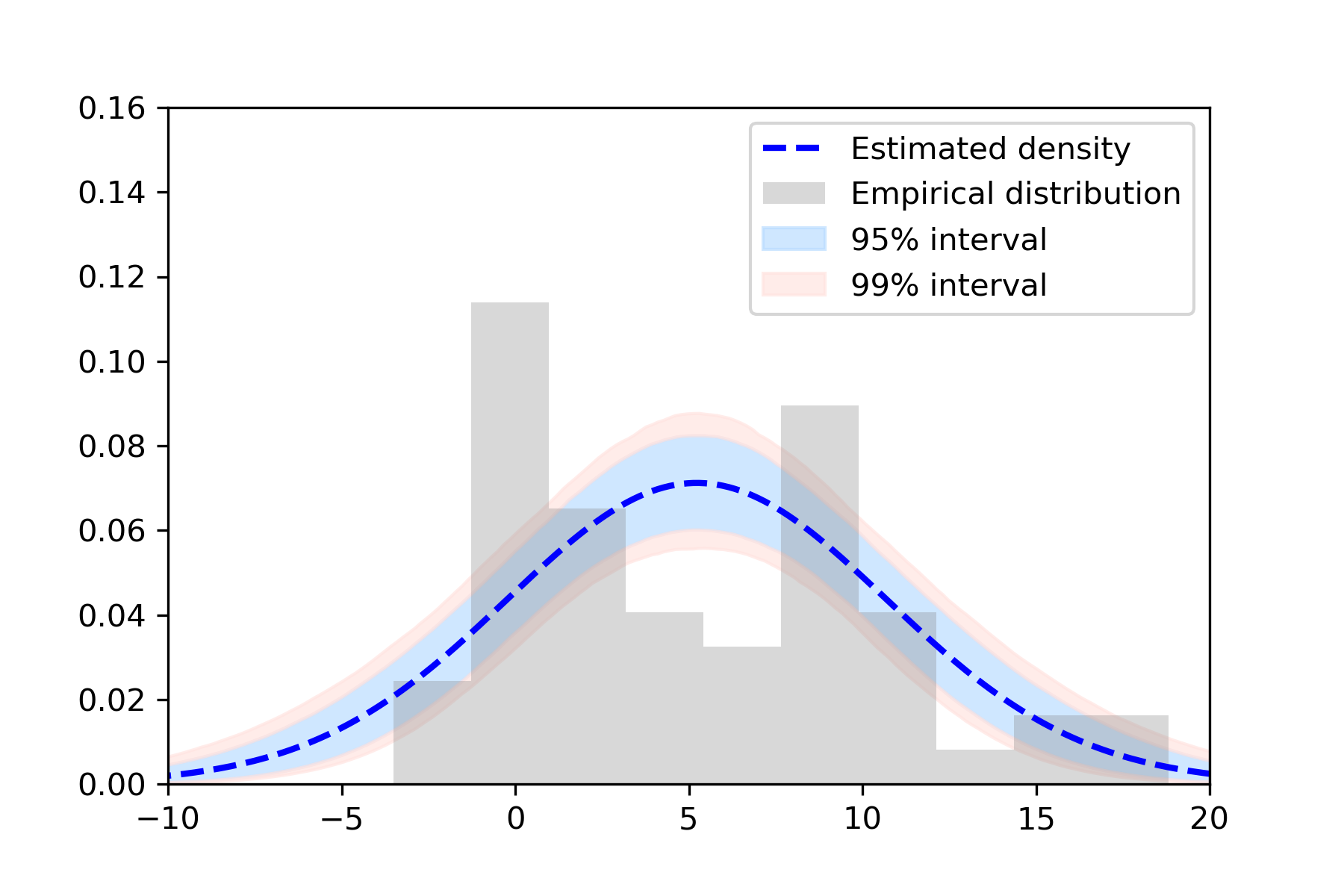}
		\caption{Spec. 3: bonds}
	\end{subfigure}
	\caption{\label{fig:densest_sens} \textcolor{black}{Sensitivity analysis: density estimates for bonds returns.}}
\end{figure}

\subsection{Posterior distribution of $\rho_0$}
\begin{figure}[H]
	\centering
	\captionsetup{width=\linewidth}
	\includegraphics[width=0.5\linewidth]{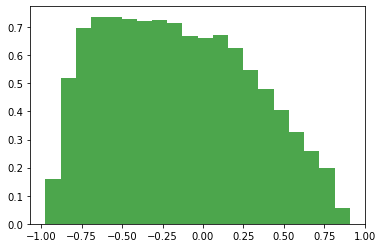}
	\caption{Posterior distribution of $\rho_0$ for the analysis in Section 6.3.}
	\label{fig:rho0posterior}
\end{figure}

\section{\textcolor{black}{Clustering multivariate data with missing entries: additional details}}
\subsection{\textcolor{black}{Choosing the hyperparameters}}
\textcolor{black}{Assume $P=3$, as in the simulation study of Section $6.4$: the general case follows accordingly. In this case $I = \{\emptyset, (1), (2), (3), (1,2), (2,3), (1,3), (1,2,3)\}$. In order to specify the prior, assumptions on the missing generating mechanism should be made. The missing completely at random (MCAR) assumption implies that each observation $W^{(x)}_{i}$, for $x \in I$, is the result of randomly eliminating entries from an (unobserved) complete observation $W_i$. For instance, $W^{(1)}_{i} = (w_{2,i}, w_{3,i})$ is obtained from a latent $W_{i} = (w_{1,i}, w_{2,i}, w_{3,i})$ after eliminating the first entry. Under this assumption the latent complete observations $W_{i}$ are exchangeable, because the original value of $W_{i}$ is independent from the mechanism that generates the missing values. Thus, there exists $\tilde q$ such that $W_i \mid \tilde q \overset{iid}{\sim} \tilde q$
	and $\tilde q_x$ is the projection of $\tilde q$ onto coordinates different than $x$, e.g.
	$\tilde q_{(1)}(\cdot, \cdot) \overset{a.s}{=} \int \tilde q(\d x_1, \cdot, \cdot)$.
	This implies that the weights of $\tilde q_x$ should be almost surely the same for every $x$.
	Instead, if the missing mechanism is not completely at random, $\tilde q_x$ can not be described as the projection of a unique $\tilde q$. Indeed the missing mechanism may be informative, leading to sample-specific features. Therefore, the choice of an additive n-FuRBIs allows $\tilde q_x$ to have sample-specific components when needed.}

\textcolor{black}{As for the baseline distribution $G_0$ on $\bm{\mu}$, suppose that an hyper-tie is sampled between an observation $(w_{2,i}, w_{3,i})$ from sample $``(1)"$ and one observation $(w_{1,i}, w_{3,i})$ from sample $``(2)"$, thus assigning the two observations to the same cluster. $G_0$ is then used to sample the corresponding locations: $(X^*_2, X^*_3)$ and $(Y^*_1, Y^*_3)$. Since we want to interpret the hyper-tie between incomplete observations as a tie between complete observations, we must have $X^*_3 = Y^*_3$, while $X^*_2$ and $Y^*_1$ are sampled jointly with a certain correlation $\rho_{1,2}$ and depending on $X^*_3$ through correlations $\rho_{1,3}$ and $\rho_{2,3}$. Therefore, since coordinates corresponding to the same original variable should be assigned the same value, $G_0$ is actually degenerate on a $P=3$ dimensional space. In the simulation and real data application $G_0$ is a $3$-variate normal, whose correlation matrix $\rho_0$ depends on correlation parameters $\rho_{12}$, $\rho_{23}$, $\rho_{13}$ on which a truncated uniform hyperprior is used, where the truncation ensures that the matrix is almost-surely positive-definite. Since the data are centered, the mean of $G_0$ is instead fixed equal to a vector of all $0$. Moreover, an independent $\text{Gamma}(3,3)$ prior is assigned to the three variances $(\sigma^2_1, \sigma^2_2, \sigma^2_3)$. Finally, the concentration parameter $\theta$ is set equal to $0.1$ in order to favor sparsity, i.e., a lower number of clusters.}
\subsection{\textcolor{black}{Simulating scenarios: missing data distribution}}
\begin{figure}[H]
	\centering
	\begin{subfigure}[b]{0.41\textwidth}
		\includegraphics[width=\textwidth]{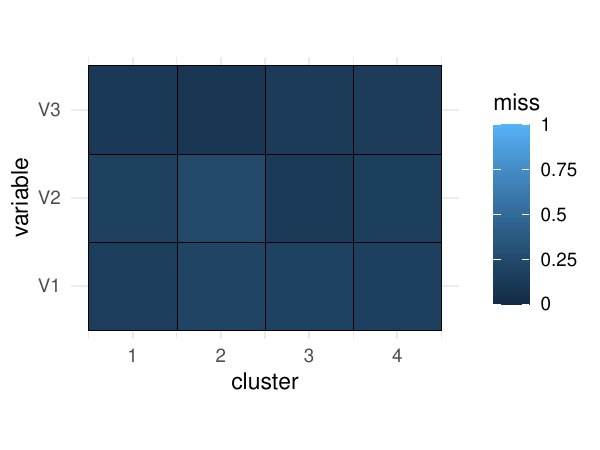}
		\caption{MCAR 16.1\% missing entries}
	\end{subfigure}
	\hfill
	\begin{subfigure}[b]{0.41\textwidth}
		\includegraphics[width=\textwidth]{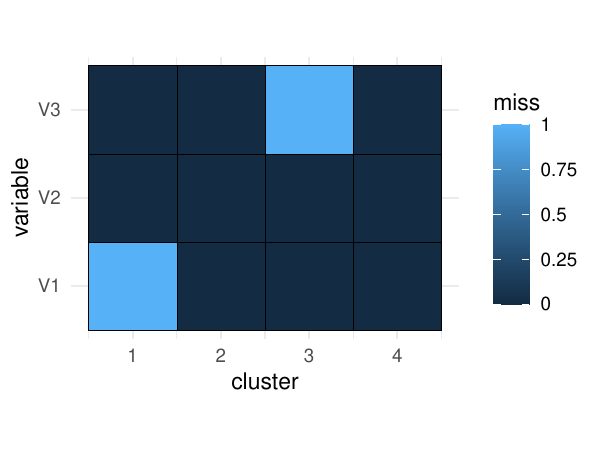}
		\caption{MNAR 17.7\% missing entries}
	\end{subfigure}
	\begin{subfigure}[b]{0.41\textwidth}
		\includegraphics[width=\textwidth]{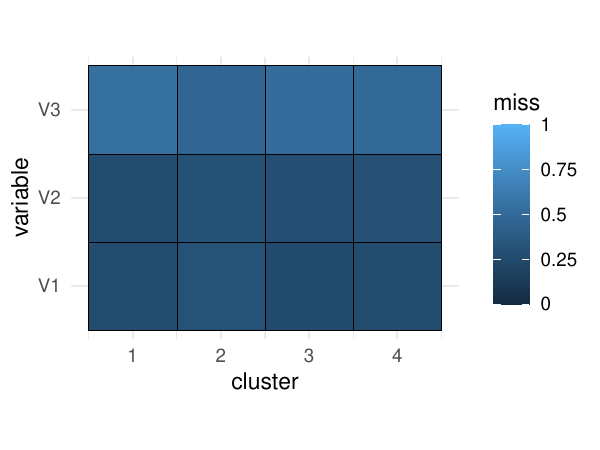}
		\caption{MCAR 35.9\% missing entries}
	\end{subfigure}
	\hfill
	\begin{subfigure}[b]{0.41\textwidth}
		\includegraphics[width=\textwidth]{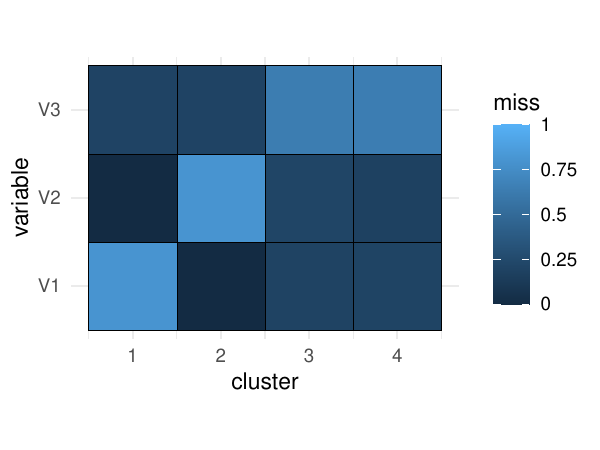}
		\caption{MNAR 34\% complete observations}
	\end{subfigure}
	\caption{\textcolor{black}{Percentages of missing entries of each variable-cluster pair.}}
	\label{fig:my_label}
\end{figure}

\section{Mixing performance of the MCMC chains}

\begin{figure}[H]
	\begin{subfigure}[b]{0.3\textwidth}
		\includegraphics[width=\textwidth]{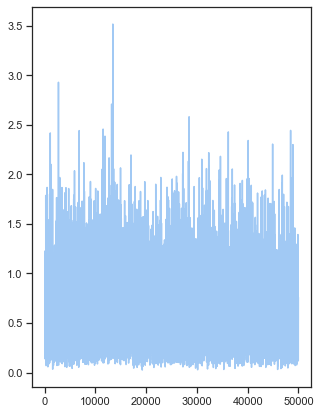}
		\subcaption{Concentration parameter $\theta$.}
	\end{subfigure}
	\hfill
	\begin{subfigure}[b]{0.3\textwidth}
		\includegraphics[width=\textwidth]{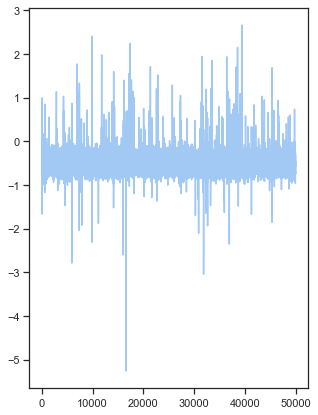}
		\subcaption{Stock n.1 cluster location.}
	\end{subfigure}
	\hfill
	\begin{subfigure}[b]{0.3\textwidth}
		\includegraphics[width=\textwidth]{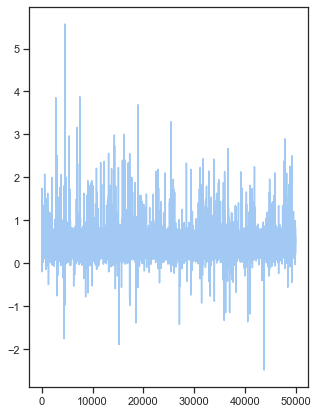}
		\subcaption{Bond n.1 cluster location.}
	\end{subfigure}
	\caption{\label{fig:traceplots_finance} \textcolor{black}{Trace plots of the MCMC chain used in the real data analysis of Section $6.3$. }} 
\end{figure}

\begin{figure}[H]
	\centering
	\begin{subfigure}[b]{0.45\textwidth}
		\includegraphics[width=\textwidth]{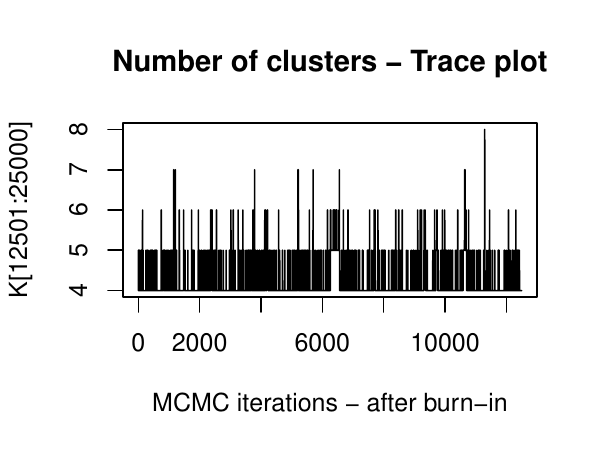}
	\end{subfigure}
	\hfill
	\begin{subfigure}[b]{0.45\textwidth}
		\includegraphics[width=\textwidth]{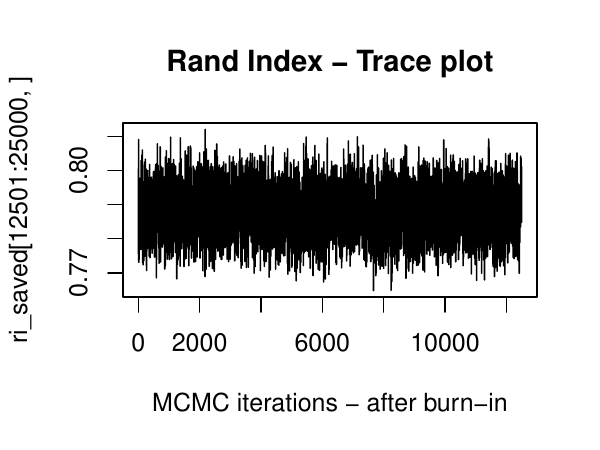}
	\end{subfigure}
	\caption{\label{fig:traceplots} \textcolor{black}{Trace plots of the MCMC chain of simulation study n.1 in Section $6.4$, for the additive n-FuRBI model with $z=0.5$. Left: number of clusters. Right: Rand index.}} 
\end{figure}

\begin{table}[H]
	\begin{tabular}{c|c|c}
		&ESS / N & ESS / N \\
		Model &   Rand index &  num. clusters\\
		\hline
		Additive n-FuRBIs, z = 0.2 &0.1957&0.0518\\
		Additive n-FuRBIs, z = 0.5 &0.1994&0.0413\\
		Additive n-FuRBIs, z = 0.8 &0.1253&0.0596\\
		DPM &0.1623&0.0227\\
	\end{tabular}
	\caption{\label{tab:K} \textcolor{black}{Effective Sample Size (ESS) per iteration in simulation study n.1 of Section $6.4$ with $1,000$ observations.}}
\end{table}

\begin{table}[H]
	\resizebox{\linewidth}{!}{
		\begin{tabular}{c|c|c|c|c|c|c}
			& total& dimension &Type of & code &average time \\
			&sample size& of data point&algorithm& language& per iter (in sec)\\
			\hline
			Financial data - \footnotesize{Sec. 6.3} & $n=104$ & $1$ & marginal & Python&0.12\\
			Simulation studies - \footnotesize{Sec. 6.4} & $n=1000$ & $3$& marginal & \textsc{R}& 2.41\\
			Brandsma data - \footnotesize{Sec. 6.4}& $n=4106$&$4$ & marginal &\textsc{R}& 8.75\\
		\end{tabular}
		\caption{\label{comp_time} \textcolor{black}{Computational time in second per one iteration of the MCMC chain with n-FuRBIs. Codes are run on an Intel Xeon W-1250 processor. Note that the in the last two lines not only the sample size is higher but also the data are multivariate.}}}
\end{table}

\end{document}